\newtheorem{theorem}{Theorem}
\newtheorem{definition}{Definition}
\newtheorem{example}{\textup{\textbf{Example}}}
\newcommand\D{\textup{d}}
\newcommand\comment[1]{}
\theoremstyle{remark} \newtheorem{remark}{\textup{\textbf{Remark}}}}
\def\dif{\mathrm{d}}
\def\mi{\mathbbm{i}}
\def\me{\mathbbm{e}}
\def\mone{\mathbbm{1}}
\def\fx{\mathcal{X}}
\def\fk{\mathcal{K}}
\begin{document}
\bibliographystyle{unsrt}

\title{{A computable branching random walk for the many-body Wigner quantum dynamics}}
\author{Sihong Shao\footnotemark[2], \and
Yunfeng Xiong\footnotemark[2]}
\renewcommand{\thefootnote}{\fnsymbol{footnote}}
\footnotetext[2]{LMAM and School of Mathematical Sciences, Peking University, Beijing 100871, China. Email addresses: {\tt sihong@math.pku.edu.cn} (S. Shao).}
\date{\today}
\maketitle



\begin{abstract}
A branching random walk algorithm for the many-body Wigner
equation and its numerical applications for quantum dynamics in phase space 
are proposed and analyzed.
After introducing an auxiliary function,
the (truncated) Wigner equation is cast into
the integral formulation as well as its adjoint correspondence,
both of which can be reformulated into
the renewal-type equations and have transparent probabilistic interpretation. We prove that the first moment of a branching random walk happens to be the solution for the adjoint equation. More importantly,
we detail that such stochastic model,
associated with both importance sampling and resampling,
paves the way for a numerically tractable scheme,
within which the Wigner quantum dynamics is simulated in a time-marching manner and  the complexity can be controlled with the help of an (exact) estimator of the growth rate of particle number.
Typical numerical experiments on the Gaussian barrier scattering
and a Helium-like system validate our theoretical findings, as well as demonstrate the accuracy, the efficiency and thus the computability of the Wigner branching random walk algorithm.

\vspace*{4mm}
\noindent {\bf AMS subject classifications:}
60J85;
81S30;
45K05;
65M75;
82C10;
81V70;
81Q05

\noindent {\bf Keywords:}
Wigner equation;
branching random walk;
quantum dynamics;
adjoint equation;
renewal-type equations;
importance sampling;
resampling;
signed particle Monte Carlo method
\end{abstract}

\section{Introduction}
\label{sec:intro}

Connections between partial differential equations (PDE) and stochastic processes are always heated topics in modern mathematics and provide powerful tools for both probability theory and analysis, especially for PDE of elliptic and parabolic type\cite{bk:Doob2001, bk:Dynkin2002}.
In the past few decades, their numerical applications have also burgeoned with a lot of developments, such as the ensemble Monte Carlo method for the Boltzmann transport equation\cite{NedjalkovVitanov1990,KosinaNedjalkovSelberherr2000,KosinaNedjalkovSelberherr2003,NedjalkovKosinaSelberherr2003},
the random walk method for the Laplace equation\cite{YanCaiZeng2013} and the diffusion Monte Carlo method for the Schr\"{o}dinger equation\cite{KosztinFaberSchulten1996,HairerWeare2014}. In particular, the diffusion Monte Carlo method allows us to go beyond the mean-field approximation and offer a reliable ground state solution to quantum many-body systems.
In this work, we focus on the probabilistic approach to the equivalent phase space formalism of quantum mechanics,
namely, the Wigner function approach\cite{Wigner1932},
which bears a close analogy to classical mechanics. In recent years, the Wigner equation has been drawing growing attention\cite{tatarskiui1983,Zurek1991,JacoboniBordone2004,DiasPrata2004}
and widely used
in nanoelectronics\cite{bk:MarkowichRinghoferSchmeiser1990,th:Biegel1997}, non-equilibrium statistical mechanics\cite{bk:Balescu1975}, quantum optics\cite{bk:Schleich2011}, and many-body quantum systems \cite{SellierNedjalkovDimov2015}.
Actually, a branch of experiment physics in the community of quantum tomography are devoting to reconstructing the Wigner function from measurements\cite{bk:Leonhardt1997,LeibfriedPfauMonroe1998}.
Moreover, the intriguing mathematical structure of the Weyl-Wigner correspondence has also been employed in the deformation quantization\cite{Zachos2002}.

In contrast to its great theoretical advantages, the Wigner equation is extremely difficult to be solved
because of the high dimensionality of the phase space as well as the highly oscillating structure
of the Wigner function due to the spatial coherence\cite{Zurek1991,LeibfriedPfauMonroe1998}.
Although several efficient deterministic solvers, e.g.,
the conservative spectral element method (SEM)\cite{ShaoLuCai2011} and
the third-order advective-spectral-mixed scheme (ASM)\cite{XiongChenShao2015}, have enabled
an accurate transient simulation in 2D and 4D phase space, they are still restricted by the limitation of
data storage and increasing computational complexity. One possible approach to solving the higher dimensional problems is the Wigner Monte Carlo (WMC) method, which displays $N^{-\frac12}$ convergence ($N$ is the number of samples), regardless of the dimensionality, and scales much better on the parallel computing platform\cite{bk:QuerliozDollfus2010, SellierNedjalkovDimov2015}.

The proposed work is motivated by a recently developed stochastic method, termed the signed particle Wigner Monte Carlo method (spWMC)\cite{NedjalkovKosinaSelberherrRinghoferFerry2004, NedjalkovSchwahaSelberherr2013,SellierNedjalkovDimov2014}.
This method utilizes the
branching of signed particles to capture the quantum coherence, and the numerical accuracy has been validated in 2D situations   \cite{ShaoSellier2015,MuscatoWagner2016,th:Ellinghaus2016}. Very recently, it has been also validated theoretically by exploiting the connection between a piecewise-deterministic Markov process and the weak formulation of the Wigner equation\cite{Wagner2016}.
In this work,
we use an alternative approach to
constructing the mathematical framework for spWMC from the viewpoint of computational mathematics,
say, we focus on the probabilistic interpretation of the mild solution of the (truncated) Wigner equation and its adjoint correspondence.
In particular, we would like to stress that
the resulting stochastic model, the importance sampling and the resampling are three cornerstones of a computable scheme for simulating the many-body Wigner quantum dynamics.

Our first purpose is to explore the inherent relation between the Wigner equation and a stochastic branching random walk model,
as sketched by the diagram below.
\begin{equation*}
\boxed{\small \text{Wigner equation}} \xrightarrow[\gamma(\bm{x})]{\textup{integral form}} \boxed{\small \text{Renewal-type equation}} \xleftarrow{\textup{moment}}
\boxed{\small \text{Branching random walk}}
\end{equation*}
With an auxiliary function $\gamma(\bm{x})$,
we can cast the Wigner equation (as well as its adjoint equation)
into a renewal-type integral equation and prove that
its solution is equivalent to the first moment of a stochastic branching random walk. In this manner, we arrive at the stochastic interpretation of the Wigner quantum dynamics, termed the {\sl Wigner branching random walk} (WBRW) in this paper. In particular, the $y$-truncated WBRW method recovers the popular spWMC method which needs a discretization of the momentum space beforehand.

Although the probabilistic interpretation of the Wigner equation naturally gives rises to a statistical method, in practice we have encountered two major problems.
First, such numerical method is point-wise in nature and not very efficient in general  unless we are only interested in the solution at specified points\cite{bk:Liu2001}.
Second, the number of particles in a branching system will grow exponentially in time\cite{bk:Harris1963},
indicating that the complexity increases dramatically for a long-time simulations.  Thus, our second purpose is to discuss how to overcome these two obstacles.
As for the first, we introduce the dual system of the Wigner equation and derive an equivalent form of the inner product problem, which allows us to draw weighted samples according to the initial Wigner distribution. Besides, by exploiting the principle of importance sampling, we can give a sound interpretation to several fundamental concepts in spWMC, such as particle sign and particle weight. For the second problem, we firstly derive the exact growth rate of branched particles, which reads $\me^{2 M \gamma_0 t}$ in time $t$, with $M$ pairs of potentials and a constant auxiliary function $\gamma(\bm{x})\equiv\gamma_0$ and then illustrate the basic idea of resampling to control the particle number within a reasonable size.
Roughly speaking, we make a histogram through the weighted particles and resample from it at the next step. Such a self-consistent scheme allows us to evolve the Wigner quantum dynamics in a time-marching manner and choose appropriate resampling frequencies to control the computational complexity.

The rest of this paper is organized as follows.
Section~\ref{sec:wigner} reviews briefly the Wigner formalism of
quantum mechanics. From both theoretical and numerical aspects,  it is more convenient to discuss
 the truncated Wigner equation, instead of the Wigner equation
itself. Thus in Section~\ref{sec:twigner}, we illustrate two typical
ways to truncate the Wigner equation, termed the $k$-truncated and
the $y$-truncated models. Section~\ref{sec:integral_form} manifests
the equivalence between the $k$-truncated Wigner model and a
renewal-type integral equation, where an auxiliary function
$\gamma(\bm{x})$ is used to introduce a probability measure.
Besides, the set of adjoint equation renders an
equivalent representation of the inner product problem. We will show
that such representation,
as well as the importance sampling, plays a vital role in WMC and serves as the motivation of WBRW.
In Section~\ref{sec:branching}, we will prove that the first moment of a branching
random walk is exactly the solution of the adjoint equation. This probabilistic
approach not only validates the branching process treatment, but also allows us to study the
mass conservation and exponential growth of particle number rigorously. After theoretical
analysis,  we turn to discuss the main
idea of the resampling procedure and present the
numerical challenges in high dimensional problems.
Section~\ref{sec:num_res} investigates the performance of WBRW by employing SEM or ASM as the reference. The
paper is concluded in Section~\ref{sec:con}.

\section{The Wigner equation}
\label{sec:wigner}

In this section, we briefly review the Wigner representation of quantum mechanics. The Wigner function $f(\bm{x}, \bm{k}, t)$ living in the phase space $(\bm{x},\bm{k})\in\mathbb{R}^{2d}$ for position $\bm{x}$ and wavevector $\bm{k}$,
\begin{equation}
f(\bm{x}, \bm{k}, t) = \int_{\mathbb{R}^{d}} \textup{d} \bm{y} ~ \me^{-\mi \bm{k} \cdot \bm{y}} \rho(\bm{x}+\frac{\bm{y}}{2}, \bm{x}-\frac{\bm{y}}{2}, t),
\end{equation}
is defined by the Weyl-Wigner transform of the density matrix
\begin{equation}
\rho(\bm{x}_{1},\bm{x}_{2},t) =\sum_{i}p_{i}\Psi_{i}(\bm{x}_{1},t)\Psi^{\dagger}_{i}(\bm{x}_{2},t),
\end{equation}
where $p_{i}$ gives the probability of occupying the $i$-th state,
$2d$ denotes the degree of freedom (2$\times$particle number$\times$dimensionality).
Although {\sl it possibly has negative values},
the Wigner function serves the role as a density function due to the following properties\cite{tatarskiui1983,JacoboniBordone2004}
\begin{itemize}
\item $f(\bm{x}, \bm{k}, t)$ is a real function.
\item $ \iint_{\mathbb{R}^{d} \times \mathbb{R}^{d}} f(\bm{x}, \bm{k}, t) \D \bm{x}\D \bm{k}= 1$.
\item The average of a quantum operator $\hat{A}$ can be written in a form
\begin{equation}\label{average_A_wf}
\langle \hat{A} \rangle_t= \iint_{\mathbb{R}^{d} \times \mathbb{R}^{d}} ~ A(\bm{x}, \bm{k}) f(\bm{x}, \bm{k}, t) \D \bm{x}\D \bm{k},
\end{equation}
with $A(\bm{x}, \bm{k})$ the corresponding classical function in phase space.
\end{itemize}

In particular, we can define the {\sl Wigner (quasi-) probability} $W_D$ on a domain $D$ by taking $A(\bm{x}, \bm{k})= \mone_{D}(\bm{x}, \bm{k})$
\begin{equation}\label{eq:wigner_prob}
W_D(t) = \iint_{D} f(\bm{x}, \bm{k}, t) \dif\bm{x}\dif\bm{k}.
\end{equation}

To derive the dynamics of the Wigner function, we evaluate its first
derivative through the Schr\"{o}dinger equation (or the quantum
Liouville equation)
\begin{equation}
\mi \hbar \frac{\partial}{\partial t}\Psi_i(\bm{x}, t)=-\frac{\hbar^{2}}{2m} \nabla^{2}_{\bm{x}}\Psi_i(\bm{x}, t)+V(\bm{x}, t)\Psi_i(\bm{x}, t),
\end{equation}
combine with the Fourier completeness relation
\begin{equation}
\label{Def::Fourier completeness relation}
\delta(\bm{k}-\bm{k}^{\prime})=\frac{1}{(2\pi)^d} \int_{\mathbb{R}^d} \D \bm{y} ~\me^{\mi (\bm{k}-\bm{k}^{\prime}) \cdot \bm{y}},
\end{equation}
and then obtain the Wigner equation
\begin{equation}
\frac{\partial }{\partial t}f(\bm{x}, \bm{k}, t)+\frac{\hbar \bm{k}}{m} \cdot \nabla_{\bm{x}} f(\bm{x},\bm{k}, t)=\Theta_{V}\left[f\right](\bm{x}, \bm{k}, t), \label{eq.Wigner}
\end{equation}
where
\begin{align}
\Theta_{V}\left[f\right](\bm{x}, \bm{k}, t)
&= \int_{\mathbb{R}^{d}} \textup{d} \bm{\bm{k}^{\prime}} f(\bm{x},\bm{k}^{\prime},t)V_{w}(\bm{x},\bm{k}-\bm{k}^{\prime},t),
\label{eq:pd_operator}\\
V_{w}(\bm{x},\bm{k},t)&=\frac{1}{\mi\hbar (2\pi)^{d}}\int_{\mathbb{R}^{d}} \text{d}\bm{y} \me^{-\mi\bm{k}\cdot \bm{y}}D_{V}(\bm{x}, \bm{y}, t), \label{Wigner_kernel} \\
D_{V}(\bm{x}, \bm{y}, t)&=V(\bm{x}+\frac{\bm{y}}{2}, t)-V(\bm{x}-\frac{\bm{y}}{2}, t). \label{Dv}
\end{align}
Here the nonlocal pseudo-differential term $\Theta_{V}[f](\bm{x}, \bm{k}, t)$ contains the quantum information,
 $D_{V}(\bm{x}, \bm{y}, t)$ denotes a central difference of the potential function $V(\bm{x}, t)$,
the Wigner kernel $V_{w}(\bm{x}, \bm{k}, t)$ is defined   through the Fourier transform of $D_{V}(\bm{x}, \bm{y}, t)$,
$\hbar$ is the reduced Planck constant and $m$ is the particle mass (for simplicity, we assume all particles have the same mass throughout this work).
Equivalently, we can first perform the integration in $\bm{k}^{\prime}$-space and arrive at another way to formulate the pseudo-differential term
\begin{align}
\Theta_{V}\left[f\right](\bm{x}, \bm{k}, t)&=\frac{1}{\mi \hbar}\int_{\mathbb{R}^{d}} \D \bm{y} D_{V}(\bm{x},\bm{y}, t) \widehat{f}(\bm{x}, \bm{y}, t) \me^{-\mi \bm{k}\cdot \bm{y}},
\label{eq:pd_operator_1}\\
\widehat{f}(\bm{x}, \bm{y}, t)&=\frac{1}{(2\pi)^{d}} \int_{\mathbb{R}^{d}} \D \bm{k}^{\prime} f(\bm{x}, \bm{k}^{\prime}, t) \me^{\mi \bm{k}^{\prime}\cdot \bm{y}}:=\mathcal{F}^{-1}\left[f\right](\bm{x}, \bm{y}, t).
\label{eq:fhat}
\end{align}
Actually, $\widehat{f}(\bm{x}, \bm{y}, t)$ is just another notation for $\rho(\bm{x}+\frac{\bm{y}}{2}, \bm{x}-\frac{\bm{y}}{2}, t)$.

One of the most important properties of the Wigner equation lies in the anti-symmetry of the Wigner kernel
\begin{equation}\label{vw_anti_symmetry}
V_{w}(\bm{x},\bm{k}, t)=-V_{w}(\bm{x},-\bm{k}, t),
\end{equation}
then a simple calculation yields
\begin{equation}\label{mass_conservation}
\int_{\mathbb{R}^{d}} \text{d} \bm{k} \int_{\mathbb{R}^{d}} \text{d} \bm{k^{\prime}} f(\bm{x},\bm{k}^{\prime},t)V_{w}(\bm{x},\bm{k}-\bm{k}^{\prime},t)=0,
\end{equation}
which corresponds to the conservation of the zeroth moment (i.e., total particle number or mass)
\begin{equation}\label{eq:mass}
\frac{\textup{d}}{\textup{dt}}\iint_{\mathbb{R}^{d} \times \mathbb{R}^{d}} f(\bm{x}, \bm{k},t) \textup{d}\bm{x} \textup{d}\bm{k}=0.
\end{equation}

Although the Wigner equation is completely equivalent to
the Schr\"{o}dinger equation in the full space, we would like to
point out that such an equivalence is not necessarily true for the
truncated Wigner equation (see, e.g.\cite{JiangCaiTsu2011}), since for
example the truncation of $\bm{y}$-domain may break the Fourier
completeness relation~\eqref{Def::Fourier completeness relation}.
Therefore, we must be more careful when doing benchmark tests for
stochastic Wigner simulations by adopting the Schr\"{o}dinger
wavefunction as the reference\cite{SellierNedjalkovDimov2014,MuscatoWagner2016,th:Ellinghaus2016},
because the underlying models may not
be the same. This also gives rise to the demanding for highly accurate deterministic
algorithms, such as SEM\cite{ShaoLuCai2011} and ASM\cite{XiongChenShao2015}, which can
be used to produce a reliable reference solution as already did
in\cite{ShaoSellier2015}.

\section{The truncated Wigner equation}
\label{sec:twigner}

In order to numerically solve the Wigner equation, we need to
discuss the truncated Wigner equation on a bounded domain.  It
should be noted that the double integrations with respect to
$\bm{k}^\prime$ and $\bm{y}$ in the pseudo-differential operator
(see Eq.~\eqref{eq:pd_operator} or \eqref{eq:pd_operator_1})
involves the infinite domain due to the Fourier transform nature,
posing a formidable challenge in seeking numerical approximations.
Intuitively, an feasible way is either truncating $\bm{k}$-space
first or truncating $\bm{y}$-space first, denoted below by the
$k$-truncated and $y$-truncated models, respectively. It is worth
noting that no matter what kind of truncation we choose, the mass
conservation~\eqref{eq:mass} should be maintained in the resulting
model as the physical requirement, which may yield additional
constraints.

\subsection{The $k$-truncated Wigner equation}
\label{sec:twigner:k}

A feasible way to formulate the Wigner equation in a bounded domain is to exploit the decay of the Wigner function when $\left|\bm{k}\right| \to \infty$. Thus we only need to evaluate the Wigner function $f(\bm{x}, \bm{k}, t)$  in a finite domain $\mathcal{K}=[-L_{1}, L_{1}]\times [-L_{2}, L_{2}]\cdots \times [-L_{d}, L_{d}]$ ($L_i>0$) and a simple nullification can be adopted outside $\mathcal{K}$, that yields the $k$-truncated Wigner equation
\begin{equation}
\begin{split}
\frac{\partial }{\partial t}f(\bm{x}, \bm{k}, t)+\frac{\hbar \bm{k}}{m} \cdot \nabla_{\bm{x}} f(\bm{x},\bm{k}, t)&=\int_{\mathcal{K}} \D \bm{\bm{k}^{\prime}} f(\bm{x},\bm{k}^{\prime},t)V_{w}(\bm{x},\bm{k}-\bm{k}^{\prime},t)\\
&=\int_{\mathcal{K}} \D \bm{\bm{k}^{\prime}} f(\bm{x},\bm{k}^{\prime},t)V^{T}_{w}(\bm{x},\bm{k}-\bm{k}^{\prime},t),\label{eq.k_truncated_Wigner}
\end{split}
\end{equation}
with the truncated Wigner kernel
\begin{equation}\label{Wigner_kernel_finite}
V^{T}_{w}(\bm{x}, \bm{k},t) =V_{w}(\bm{x}, \bm{k},t) \prod_{i=1}^{d} \textup{rect}(\frac{k_{i}}{4L_{i}}),
\end{equation}
where the rectangular function $\textup{rect}(k)$ is given by
\begin{equation}
\textup{rect}(k)=\left\{\begin{split}& 1, \quad \left|k\right|<\frac{1}{2},\\
&0, \quad \left|k\right| \geq \frac{1}{2}.\end{split}\right.
\end{equation}
The truncated Wigner kernel $V_{w}^{T}$ in Eq.~\eqref{Wigner_kernel_finite}
is used only in the case that the close form of $V_{w}$ is not available.
According to Eq.~\eqref{eq.k_truncated_Wigner},
it deserves to be mentioned that only a restriction of the Wigner kernel on a finite bandwidth $2\mathcal{K}=[-2L_{1}, 2L_{1}]\times [-2L_{2}, 2L_{2}]\cdots \times [-2L_{d}, 2L_{d}]$ (i.e., $\bm{k}-\bm{k}^{\prime}\in 2\fk$ when both $\bm{k}$ and $\bm{k}^\prime$ belong to $\fk$) is required.
Furthermore, it can be easily verified that
\begin{equation}\label{eq:mass1}
\int_{\mathbb{R}^{d}} V_{w}^{T}(\bm{x}, \bm{k}, t) \D \bm{k}=\int_{2\mathcal{K}} V_{w}(\bm{x}, \bm{k}, t) \D \bm{k}=0,
\end{equation}
and thus
\begin{equation}\label{eq:mass_k_truncated}
\frac{\textup{d}}{\textup{dt}}\int_{\mathbb{R}^{d}} \int_{\mathcal{K}} f(\bm{x}, \bm{k},t) \textup{d}\bm{x} \textup{d}\bm{k}=0.
\end{equation}
We expect that any reliable deterministic or stochastic method should preserve this property.

In general, the truncated Wigner kernel $V_{w}^{T}$ can be evaluated
by the Poisson summation formula
 \begin{equation}\label{Poisson_summation_truncated}
V^{T}_{w}(\bm{x}, \bm{k}, t)\approx\frac{1}{ \mi \hbar (2\pi)^{d}} \sum_{\bm{\mu}\in \mathbb{Z}^{d}} \left[(\prod_{i=1}^d \Delta y_{i} ) D_{V}(\bm{x}, \bm{y}_{\bm{\mu}},t)\me^{-\mi \bm{k} \cdot y_{\bm{\mu}}}\right],
\end{equation}
provided that $V_{w}(\bm{x}, \bm{k})$ decays for $|k_{i}|>2|\mathcal{K}_{i}|$ with $i=1,2, \cdots, d$.
Here $\bm{y}_{\bm{\mu}}=(\mu_1\Delta y_1, \mu_2\Delta y_2, \cdots, \mu_d\Delta y_d)$ with $y_i$ being the spacing and $\mu_i \in \mathbb{Z}$.
In this situation, we need to add the constraint\cite{XiongChenShao2015}
\begin{equation}\label{conservation_condition}
2L_{i}\Delta y_{i}=2\pi, \,\,\, i=1,2, \cdots, d,
\end{equation}
to both maintain the mass conservation \eqref{eq:mass_k_truncated}
and avoid the overlapping between $V^{T}_{w}$ and its adjacent image.

To sum up,
we would like to list several advantages of the $k$-truncated model.
\begin{itemize}
\item The $k$-truncated Wigner equation is defined over the continuous $\bm{k}$-space and thus a continuous momentum sampling can be allowed \cite{Wagner2016,MuscatoWagner2016}.
\item It preserves the definition of the Wigner kernel and avoids the artificial periodic extension of $D_{V}(\bm{x}, \bm{y}, t)$ in $\bm{y}$-space.
\item When the Weyl-Wigner transform of $V(\bm{x})$ has a close form, we can obtain the explicit formula of the Wigner kernel and avoid the artificial periodic extension of $V_{w}$ in $\bm{k}$-space.
\end{itemize}

The price to pay is that the sampling in the continuous
$\bm{k}$-space needs intricate techniques, such as a
rejection-acceptance method or the Markov chain Monte Carlo
strategies.

\subsection{The $y$-truncated Wigner equation}
\label{sec:twigner:y}

The other way, used in\cite{SellierNedjalkovDimov2014},
is based on the fact that the inverse Fourier transformed Wigner function $\widehat{f}(\bm{x}, \bm{y}, t)$ defined in Eq.~\eqref{eq:fhat}
decays when $|\bm{y}|\to \infty$.
Thus, we can focus on $\widehat{f}(\bm{x}, \bm{y}, t)$ on a bounded domain $\mathcal{Y}=[-L_{1}, L_{1}]\times [-L_{2}, L_{2}]\cdots \times [-L_{d}, L_{d}](L_i>0)$, and define the truncated pseudo-differential operator as
\begin{equation}\label{Def:truncated_y_wigner_potential}
\Theta_{V}^{T}\left[f\right](\bm{x}, \bm{k}, t)=\frac{1}{\mi \hbar}\int_{\mathcal{Y}} \D \bm{y} D_{V}(\bm{x},\bm{y}, t)\widehat{f}(\bm{x}, \bm{y}, t) \me^{-\mi \bm{k}\cdot \bm{y}}.
\end{equation}
With the assumption that it decays at $y_{i}>L_{i}$, we can evaluate $\widehat{f}(\bm{x}, \bm{y}, t)$ at a finite bandwidth through the Poisson summation formula
\begin{equation}\label{Poisson_sum_f}
\widehat{f}(\bm{x}, \bm{y}, t)\approx\frac{1}{(2\pi)^{d}} \sum_{\bm{m}\in \mathbb{Z}^{d}} \left[(\prod_{i=1}^d \Delta k_{i} ) f(\bm{x}, \bm{m} \Delta \bm{k}, t)\me^{\mi \bm{y} \cdot \bm{m} \Delta \bm{k}}\right], ~~ \bm{y}\in \mathcal{Y},
\end{equation}
where $\bm{m }\Delta \bm{k}=(m_1\Delta k_1, m_2\Delta k_2, \cdots, m_d\Delta k_d)$ with $\Delta k_i$ being the spacing, $m_i \in Z$, $i=1, 2, \cdots d$.

Substituting Eq.~\eqref{Poisson_sum_f} into Eq.~\eqref{Def:truncated_y_wigner_potential} leads to
\begin{align}
\Theta_{V}^{T}\left[f\right](\bm{x}, \bm{k}, t) &\approx \sum_{\bm{m} \in \mathbb{Z}^{d}} f(\bm{x}, \bm{m} \Delta \bm{k}, t)\tilde{V}_{w}(\bm{x}, \bm{k}-\bm{m}\Delta \bm{k}, t),
\label{eq:pd_operator_sum}\\
\tilde{V}_{w}(\bm{x}, \bm{k}, t)&=\frac{1}{\mi \hbar}\frac{1}{\left|\mathcal{Y}\right|} \int_{\mathcal{Y}}\D \bm{y} ~D_{V}(\bm{x}, \bm{y}, t) \me^{-\mi \bm{y} \cdot \bm{k}}.\label{eq:vwtilde}
\end{align}
Here we have let $\left|\mathcal{Y}\right|=2L_{1} \times 2L_{2} \cdots \times 2L_{d}$ and used the constraint
\begin{equation}\label{discrete_conservation_condition}
2L_{i}\Delta k_{i}=2\pi, \,\,\, i=1,2, \cdots, d,
\end{equation}
which serves as the sufficient and necessary condition
to establish the semi-discrete mass conversation
\begin{equation}
\label{eq:mass_y_truncated}
\frac{\textup{d}}{\textup{dt}}\int_{\mathbb{R}^{d}} \textup{d}\bm{x} \sum_{\bm{n}\in \mathbb{Z}^{d}} f(\bm{x}, \bm{n}\Delta \bm{k},t)\Delta \bm{k}  =0.
\end{equation}

Suppose the Wigner function at discrete samples
$\bm{k}=\bm{n}\Delta \bm{k}$ are wanted,
then we immediately arrive at
the $y$-truncated (or semi-discrete) Wigner equation \cite{ArnoldLangeZweifel2000,Goudon2002,GoudonLohrengel2002}
\begin{equation}\label{eq.discrete_Wigner}
\begin{split}
\frac{\partial }{\partial t}f(\bm{x}, \bm{n}\Delta \bm{k}, t)+&\frac{\hbar \bm{n}\Delta \bm{k}}{m} \cdot \nabla_{\bm{x}} f(\bm{x},\bm{n}\Delta \bm{k}, t)\\
&=\sum_{\bm{m}\in \mathbb{Z}^{d}} f(\bm{x},\bm{m}\Delta \bm{k},t)\tilde{V}_{w}(\bm{x},\bm{n}\Delta \bm{k}-\bm{m}\Delta \bm{k},t),
\end{split}
\end{equation}
which indeed provides a straightforward way for stochastic simulations
as used in the spWMC method, and
possesses the following properties.
\begin{itemize}
\item The modified Wigner kernel $\tilde{V}_{w}$ in Eq.~\eqref{eq:vwtilde} can be treated as the Fourier coefficients of  $D_{V}(\bm{x}, \bm{y}, t)$ (with a periodic extension), and can be recovered by the inverse Fourier transform (possibly by the inverse fast Fourier transform).
\item The continuous convolution is now replaced by a discrete convolution (see Eqs.~\eqref{eq:pd_operator} and \eqref{eq:pd_operator_sum}), so that the sampling in discrete $\bm{k}$-space can be simply realized in virtue of the cumulative distribution function.
\item The set of equidistant sampling in $\bm{k}$-space facilitates the data storage and the code implementation.
\end{itemize}

Although both truncated models approximate the original problem in some extent, their range of applicability  is different. In fact, the modified Wigner potential $\tilde{V}_{w}$ in $y$-truncated model is not a trivial approximation to the original Wigner potential (one can refer to the difference between the Fourier coefficients and continuous Fourier transformation). The convergence $\tilde{V}_{w} \to V_{w}$ is only valid when $|\mathcal{Y}| \to \infty$, or the potential $V(\bm{x}, t)$ decays rapidly at the boundary of the finite domain (but this condition is not satisfied for, e.g., the Coulomb-like potential, especially for the Coulomb interaction between two particles).  By contrast,  the $k$-truncated model is based on relatively milder assumption, and it is not necessary to change the definition of the Wigner kernel unless the Poisson summation formula is used. Thus we would like to stress that \emph{the $k$-truncated Wigner equation is more appropriate for simulating many-body quantum systems} and thus adopted hereafter.

\section{Renewal-type integral equations}
\label{sec:integral_form}

In order to establish the connection between the deterministic partial integro-differential equation~\eqref{eq.k_truncated_Wigner} and a stochastic process, we need to cast the deterministic equation into a renewal-type integral equation. For this purpose, the first crucial step is to  introduce an exponential distribution in its integral formulation via {\it an auxiliary function $\gamma(x)$}.
The second one is to split the Wigner kernel into several positive parts\cite{NedjalkovKosinaSelberherrRinghoferFerry2004}, such that each part can be endowed with a probabilistic interpretation.
More importantly,
to make the resulting branching random walk computable, we derive the adjoint equation of the Wigner equation and obtain an equivalent representation of the inner product~\eqref{average_A_wf}, which explicitly depends on the initial Wigner distribution. Therefore, it provides a much more efficient way to draw samples on the phase space, and naturally gives rise to several important features of spWMC, such as the particle sign and particle weight.


\subsection{Integral formulation with an auxiliary function}
\label{sec:branching:int}

The first step is to cast Eq.~\eqref{eq.k_truncated_Wigner} into a renewal-type equation. To this end, we can introduce
an {auxiliary function} $\gamma(\bm{x})$ and add the term $\gamma(\bm{x})f(\bm{x}, \bm{k}, t)$ in both sides of Eq.~\eqref{eq.k_truncated_Wigner},
yielding
\begin{equation}
\begin{split}
\frac{\partial }{\partial t}f(\bm{x}, \bm{k}, t)+&\frac{\hbar \bm{k}}{m} \cdot \nabla_{\bm{x}} f(\bm{x},\bm{k}, t)+\gamma(\bm{x})f(\bm{x}, \bm{k}, t)\\
&=\int_{\mathcal{K}} \D \bm{\bm{k}^{\prime}}~ f(\bm{x},\bm{k}^{\prime},t)\left[V_{w}(\bm{x},\bm{k}-\bm{k}^{\prime},t)+\gamma(\bm{x})\delta(\bm{k}-\bm{k}^{\prime})\right]. \label{eq.Wigner_equiv}
\end{split}
\end{equation}
At this stage, we only consider a nonnegative bounded
$\gamma(\bm{x})$, though a time-dependent $\gamma(\bm{x}, t)$ can be
also introduced if necessary and analyzed in a similar way. In
particular, \emph{we strongly recommend the readers to choose a constant
$\gamma(\bm{x})\equiv \gamma_0$} in real applications, for the convenience of both
theoretical analysis and numerical computation (vide post).
Formally, we can write down its integral formulation through the
variation-of-constant formula
\begin{equation}\label{abstract_integral_form}
\begin{split}
f(\bm{x}, \bm{k}, t)=&\me^{t\mathcal{A}}f(\bm{x}, \bm{k}, 0)+\int_{0}^{t} \me^{(t-t^{\prime}) \mathcal{A} }\left[\mathcal{B}(\bm{x}, \bm{k}, t^{\prime})+\gamma(\bm{x})\right] f(\bm{x}, \bm{k}, t^{\prime}) \textup{d}t^{\prime},
\end{split}
\end{equation}
where $\me^{t\mathcal{A}}$ denotes the semigroup generated by the operator
\begin{equation}
\mathcal{A}=-\hbar \bm{k}/m \cdot \nabla_{\bm{x}}-\gamma(\bm{x}),
\end{equation}
and
\begin{equation}
\mathcal{B}(\bm{x},\bm{k}, t)f(\bm{x},\bm{k}, t)=\int_{\mathcal{K}} \D \bm{\bm{k}^{\prime}}~ f(\bm{x},\bm{k}^{\prime},t)V_{w}(\bm{x},\bm{k}-\bm{k}^{\prime},t)
\end{equation}
is the convolution operator which is assumed to be a bounded operator throughout this work.

When $\gamma(\bm{x})$ is bounded, it only imposes a Lyapunov perturbation on a hyperbolic system, so that the operator $\me^{t\mathcal{A}}$ is also a $\textup{C}_{0}$-semigroup\cite{bk:Pazy1983}. To further determine how the operator $\me^{t\mathcal{A}}$ acts on a given function $u(\bm{x}, \bm{k}, t)\in C^{1}(L^{2}(\mathbb{R}^{2d}), [0,T])$, we need to solve the following evolution system
\begin{equation}
\frac{\partial }{\partial t}u(\bm{x}, \bm{k}, t)+\frac{\hbar \bm{k}}{m} \cdot \nabla_{\bm{x}} u(\bm{x},\bm{k}, t)+\gamma(\bm{x})u(\bm{x}, \bm{k}, t)=0.
\end{equation}
After performing the coordinate conversion\cite{DimovNedjalkovSellierSelberherr2015}
\begin{equation}
\left\{\begin{split}& \bm{x}^{\prime}=\bm{x}-\hbar \bm{k} t/m,\\
&\bm{k}^{\prime}= \bm{k},\\
&t^{\prime} = t, \end{split}\right.
\end{equation}
we obtain
\begin{equation}
\frac{\partial}{\partial t^{\prime}}u^\prime(\bm{x}^{\prime}, \bm{k}^{\prime}, t^{\prime})=-\gamma^\prime(\bm{x}^{\prime}, t^{\prime})u^\prime(\bm{x}^{\prime}, \bm{k}^{\prime}, t^{\prime}),
\end{equation}
where $u^\prime(\bm{x}^{\prime}, \bm{k}^{\prime}, t^{\prime}):=u(\bm{x}, \bm{k}, t)$
and $\gamma^\prime(\bm{x}^{\prime}, t^{\prime}):=\gamma(\bm{x}^{\prime}+\hbar \bm{k}^{\prime} t^{\prime} /m)=\gamma(\bm{x})$.
The solution to the above system reads
\begin{equation}\label{eq:tmp0}
{u}^\prime(\bm{x}^{\prime}, \bm{k}^{\prime}, t^{\prime})=\me^{-\int_{0}^{t^\prime} \gamma^\prime(\bm{x}^{\prime}, s) \D s} {u}^\prime(\bm{x}^{\prime}, \bm{k}^{\prime}, 0).
\end{equation}
Replacing $u^\prime, {\gamma}^\prime$ by $u, \gamma$
and making a shift $\bm{x}^{\prime} \to \bm{x} = \bm{x}^{\prime} + \hbar \bm{k}t/m$
in Eq.~\eqref{eq:tmp0} leads to
\begin{equation}
\me^{t\mathcal{A}} u(\bm{x}, \bm{k}, 0)=\me^{-\int_{0}^{t} \gamma(\bm{x}(t-s)) \D s}u(\bm{x}(t), \bm{k}, 0),
\end{equation}
where
\begin{equation}\label{Def:backward-in-time trajectory}
\bm{x}(\Delta t)=\bm{x}-{\hbar \bm{k}\Delta t}/{m}
\end{equation}
is termed the \emph{backward-in-time trajectory} of $(\bm{x}, \bm{k})$
with a positive time increment $\Delta t$.

After a simple variable substitution ($s+t^{\prime} \to s$),
the integral formulation of the Wigner equation becomes
\begin{equation}\label{Wigner_integral_form}
\begin{split}
f(\bm{x}, \bm{k}, t)=&\me^{-\int_{0}^{t} \gamma(\bm{x}(t-s)) \textup{d} s}f(\bm{x}(t), \bm{k}, 0)+\int_{0}^{t} \D t^{\prime}~  \me^{-\int_{t^{\prime}}^{t} \gamma(\bm{x}(t-s)) \D s}\\
 &\times \left[\mathcal{B}(\bm{x}(t-t^{\prime}), \bm{k}, t^{\prime})+\gamma(\bm{x}(t-t^{\prime}))\right]f(\bm{x}(t-t^{\prime}), \bm{k}, t^{\prime}).
\end{split}
\end{equation}
Let
\begin{equation}
\mathcal{H}(t^{\prime} ; \bm{x}, t )=\int_{t^{\prime}}^{t}  \gamma(\bm{x}(t-\tau)) \me^{-\int_{\tau}^{t} \gamma(\bm{x}(t-s)) \D s}~  \D \tau,
\end{equation}
and assume the auxiliary function satisfies
\begin{equation}\label{eq:gamma_condition}
\gamma(\bm{x})\geq 0, \quad \lim_{t^\prime\to-\infty}\int_{t^\prime}^t \gamma(\bm{x}(t-s))\dif s = +\infty, ~~
\forall\, \bm{x}\in\mathbb{R}^d,
\end{equation}
then we have
\begin{equation}
\dif \mathcal{H}(t^{\prime} ; \bm{x}, t ) \geq 0,
\quad \int_{-\infty}^t \dif \mathcal{H}(t^{\prime} ; \bm{x}, t )
= 1,
\end{equation}
implying that
$\mathcal{H}(t^{\prime} ; \bm{x}, t )$
is a probability measure with respect to $t^{\prime}$ for a given $(\bm{x}, t)$ on $t^{\prime} \leq t$,
characterized by the auxiliary function $\gamma(\bm{x})$.
Substituting this measure into Eq.~\eqref{Wigner_integral_form}
gives
\begin{equation}\label{Def:renewal_type_eq}
\begin{split}
f(\bm{x}, \bm{k}, t)&=\left[1-\mathcal{H}(0; \bm{x}, t)\right] f(\bm{x}(t), \bm{k}, 0)+ \int_{0}^{t} \D \mathcal{H}(t^{\prime}; \bm{x}, t) \times\\
& \int_{\mathcal{K}} \D \bm{k}^{\prime}~  f(\bm{x}(t-t^{\prime}), \bm{k}^{\prime}, t^{\prime})  \left\{\frac{V_{w}(\bm{x}(t-t^{\prime}), \bm{k}-\bm{k}^{\prime}, t^{\prime})}{\gamma(\bm{x}(t-t^{\prime}))}+\delta(\bm{k}-\bm{k}^{\prime})\right\},
\end{split}
\end{equation}
which can be regarded as {\it a kind of renewal-type equation} in the
renewal theory\cite{bk:Harris1963,bk:Kallenberg2002}.

Next we turn to consider the Wigner kernel $V_{w}$, that cannot be regarded as a transition kernel directly  due to possible negative values.
Nevertheless, we can regard it as the linear combination of positive semidefinite kernels. In general, the Wigner kernel $V_w$ is composed of $M$ parts
\begin{equation}
V_{w} = V_{w, 1}+V_{w, 2}+\cdots V_{w, M},
\end{equation}
that corresponds to the potential $V=V_1+V_2+\cdots+V_M$, then the Wigner kernel can be split into $M$ pairs
\begin{align}
V_{w} &= V_{w}^{+} - V_{w}^{-},\quad V_{w}^{\pm} = \sum_{m=1}^M V_{w, m}^{\pm},\\
V_{w, m}^{+}(\bm{x},\bm{k}, t) &=\frac{1}{2}\left|V_{w, m}(\bm{x},\bm{k}, t)\right|+\frac{1}{2}V_{w, m}(\bm{x},\bm{k}, t),\\
V_{w, m}^{-}(\bm{x},\bm{k}, t) &=\frac{1}{2}\left|V_{w, m}(\bm{x},\bm{k}, t)\right|-\frac{1}{2}V_{w, m}(\bm{x},\bm{k}, t).
\end{align}
Such splitting of $V_w$ is rather important in dealing with many-body systems since combining it with the Fourier completeness relation \eqref{Def::Fourier completeness relation} helps to reduce the Wigner interaction term \eqref{eq:pd_operator} into lower dimensional integrals.

Owing to the anti-symmetry of $V_{w}$ (see Eq.~\eqref{vw_anti_symmetry}),
it can be easily verified that
\begin{equation}\label{anti_symmetry}
V_{w, m}^{+}(\bm{x},\bm{k}, t)=V_{w, m}^{-}(\bm{x},-\bm{k}, t).
\end{equation}
Thus, it suffices to define a function $\Gamma$ on $t \ge t^{\prime}$,
composed of three terms
\begin{equation}
\begin{split}
\Gamma(\bm{x}(t-t^{\prime}), \bm{k}, t; \bm{x}^{\prime}, \bm{k}^{\prime}, t^{\prime})=&V^{+}_{w}(\bm{x}(t-t^{\prime}),\bm{k}-\bm{k}^{\prime}, t^{\prime}) \cdot \delta(\bm{x}(t-t^{\prime})-\bm{x}^{\prime})\\
&-V^{-}_{w}(\bm{x}(t-t^{\prime}),\bm{k}-\bm{k}^{\prime},t^{\prime})  \cdot \delta(\bm{x}(t-t^{\prime})-\bm{x}^{\prime})\\
&+\gamma(\bm{x}(t-t^{\prime}))\cdot \delta(\bm{k}-\bm{k}^{\prime})\cdot \delta(\bm{x}(t-t^{\prime})-\bm{x}^{\prime}).
\end{split}
\end{equation}

Finally, the $k$-truncated Wigner equation~\eqref{eq.k_truncated_Wigner}
can be cast into a Fredholm integral equation of the second kind
\begin{equation}\label{Wigner_Fredholm_integral_form}
f(\bm{x}, \bm{k}, t)= f_0(\bm{x}, \bm{k}, t)+ \mathcal{S} f(\bm{x}, \bm{k}, t),~ ~ ~0\leq t \leq T,
\end{equation}
where
\begin{align}
f_0(\bm{x}, \bm{k}, t) &=\me^{-\int_{0}^{t} \gamma(\bm{x}(t-s)) \textup{d} s}f(\bm{x}(t), \bm{k}, 0),\\
\mathcal{S} f(\bm{x}, \bm{k}, t)&=\int_{0}^{t} \D t^{\prime}   \int_{\mathbb{R}^d} \D\bm{x}^{\prime}  \int_{\mathcal{K}} \D\bm{k}^{\prime}~K(\bm{x}, \bm{k}, t; \bm{x}^{\prime}, \bm{k}^{\prime}, t^{\prime}) f(\bm{x}^{\prime}, \bm{k}^{\prime}, t^{\prime}),\\
K(\bm{x}, \bm{k}, t; \bm{x}^{\prime}, \bm{k}^{\prime}, t^{\prime})&=\me^{-\int_{t^{\prime}}^{t} \gamma(\bm{x}(t-s)) \D s}  \Gamma(\bm{x}(t-t^{\prime}), \bm{k}, t; \bm{x}^{\prime}, \bm{k}^{\prime}, t^{\prime}), ~~ t\ge t^{\prime}.
\end{align}

Before discussing the probabilistic approach to the integral equation~\eqref{Wigner_Fredholm_integral_form}, we would like first to derive its adjoint equation and attain an equivalent representation of $\langle A \rangle_{T}$, which serves as the cornerstone of WBRW.

\subsection{Dual system and adjoint equation}
\label{sec:branching:dual}

In quantum mechanics,  it's usually more important to study macroscopically observes $\langle \hat{A} \rangle_{t}$, such as the averaged position of particles, electron density, etc, than the Wigner function itself. In this regard, we turn to consider the inner product problem
\begin{equation}\label{Def:inner_product}
\langle g_0, f \rangle= \int_{0}^{T}  \D t  \int_{\mathbb{R}^d}  \D \bm{x}  \int_{\mathcal{K}} \D \bm{k}~g_0(\bm{x}, \bm{k}, t) f(\bm{x}, \bm{k} ,t),
\end{equation}
on the domain $\mathbb{R}^{d} \times \mathcal{K}$ and a finite time interval $[0,T]$.
For instance, to evaluate the average value $\langle \hat{A} \rangle_T$ at a given final time $T$,  we should take
\begin{equation}\label{Def:dual_system_g0}
g_0(\bm{x}, \bm{k},t)=A(\bm{x}, \bm{k})\delta(t-T),
\end{equation}
then
\begin{equation}\label{eq:AT_fT}
\langle \hat{A} \rangle_T = \langle g_0, f \rangle.
\end{equation}

The main goal of this section is to give the explicit formulation of the adjoint equation, starting from  Eq.~\eqref{Wigner_Fredholm_integral_form} and Eq.~\eqref{Def:dual_system_g0}. For brevity, we will assume that the
potential is time-independent,
and thus the kernels becomes
\begin{align}
K(\bm{x}, \bm{k}, t; \bm{x}^{\prime}, \bm{k}^{\prime}, t^{\prime}) &=\me^{-\int_{t^{\prime}}^{t} \gamma(\bm{x}(t-s)) \D s} \Gamma(\bm{x}(t-t^{\prime}), \bm{k}; \bm{x}^{\prime}, \bm{k}^{\prime}), ~ ~ ~t \ge t^{\prime},\\
\label{Def:kernel_function}
\Gamma(\bm{x}, \bm{k}; \bm{x}^{\prime}, \bm{k}^{\prime})&=\left[V^{+}_{w}(\bm{x},\bm{k}-\bm{k}^{\prime})-V^{-}_{w}(\bm{x},\bm{k}-\bm{k}^{\prime})+\gamma(\bm{x})\delta(\bm{k}-\bm{k}^{\prime})\right] \delta(\bm{x}-\bm{x}^{\prime}).
\end{align}

Suppose the kernel $K(\bm{x}, \bm{k}, t; \bm{x}^{\prime}, \bm{k}^{\prime}, t^{\prime})$ is bounded, then it is easy to verify that $\mathcal{S}$ is a bounded linear operator. Accordingly, we can define the adjoint operator $\mathcal{T}=\mathcal{S}^{\ast}$ by
\begin{equation}
\langle g, \mathcal{S}f\rangle=\langle \mathcal{S}^{\ast}g, f\rangle=\langle \mathcal{T}g, f\rangle,
\end{equation}
Applying Theorem 4.6 in\cite{bk:Kress2014} directly into
the Fredholm integral equation of the second kind \eqref{Wigner_Fredholm_integral_form} yields
\begin{equation}\label{eq:Toperator}
\mathcal{T}g(\bm{x}^{\prime}, \bm{k}^{\prime}, t^{\prime})=\int_{t^{\prime}}^{T} \D t \int_{\mathbb{R}^{d}}\D \bm{x} \int_{\mathcal{K}}\D \bm{k}~K(\bm{x}, \bm{k}, t; \bm{x}^{\prime}, \bm{k}^{\prime},t^{\prime}) g(\bm{x}, \bm{k}, t),~~~ t \ge t^{\prime}.
\end{equation}

Formally, it suffices to define
\begin{equation}\label{Def:dual_system}
g(\bm{x}^{\prime}, \bm{k}^{\prime}, t^{\prime})=\mathcal{T}g(\bm{x}^{\prime}, \bm{k}^{\prime}, t^{\prime}) +g_0(\bm{x}^{\prime}, \bm{k}^{\prime}, t^{\prime}),~ ~ ~ 0\leq t^{\prime} \leq T.
\end{equation}
Since
\begin{equation}
\langle g, f\rangle=\langle g, \mathcal{S}f+f_0\rangle=\langle \mathcal{T}g, f\rangle+\langle g, f_0\rangle= \langle g, f\rangle- \langle g_0, f\rangle+ \langle g, f_0\rangle,
\end{equation}
we have
\begin{equation}
\langle g_0, f\rangle = \langle g, f_0\rangle,
\end{equation}
namely
\begin{equation}\label{averaged_M}
\langle \hat{A} \rangle_{T}=\int_0^{T} \D t^{\prime} \int_{\mathbb{R}^d} \D \bm{x}^{\prime} \int_{\mathcal{K}}\D \bm{k}^{\prime} ~f(\bm{x}^{\prime}(t^{\prime}), \bm{k}^{\prime}, 0)\me^{-\int_{0}^{t^{\prime}} \gamma(\bm{x}^{\prime}(t^{\prime}-s)) \D s}g(\bm{x}^{\prime}, \bm{k}^{\prime}, t^{\prime}).
\end{equation}
Furthermore, we perform the coordinate conversion
\begin{equation}
\left\{
\begin{split}
&\bm{r}_0=\bm{x}^\prime(t^\prime) = \bm{x}^\prime-\hbar \bm{k}^\prime t^\prime/m,\\
&\bm{k}_0= \bm{k}^\prime,\\
&t_0= t^\prime,
\end{split}
\right.
\end{equation}
with which the Jacobian determinant
$\frac{\partial(\bm{r}_0,\bm{k}_0,t_0)}{\partial(\bm{x}^{\prime},\bm{k}^{\prime},t^{\prime})}=1$ implying the volume unit keeps unchanged (i.e., $\D \bm{r}_0\D \bm{k}_0 \D t=\D \bm{x}^{\prime} \D \bm{k}^{\prime}\D t^{\prime}$),
and thus Eq.~\eqref{averaged_M} becomes
\begin{equation}\label{averaged_M_2}
\langle \hat{A} \rangle_{T}=\int_0^{T} \D t_0 \int_{\mathbb{R}^d} \D \bm{r}_0 \int_{\mathcal{K}}\D \bm{k}_0 ~f(\bm{r}_0, \bm{k}_0, 0)\me^{-\int_{0}^{t_0} \gamma(\bm{r}_0(s)) \D s}g(\bm{r}_0(t_0), \bm{k}_0, t_0),
\end{equation}
where we have introduced a \emph{forward-in-time trajectory} (in
contrast to the backward-in-time trajectory $\bm{x}(\Delta t)$ given
in Eq.~\eqref{Def:backward-in-time trajectory}) as follows
\begin{equation}\label{Def:forward-in-time trajectory}
\bm{r}_0(\Delta t)=\bm{r}_0+{\hbar \bm{k}_0\Delta t}/{m},
\end{equation}
with $\Delta t \ge 0$ being the time increment.
Actually, Eq.~\eqref{averaged_M_2} motivates us to combine the exponential factor with $g$ and define a new function $\varphi(\bm{r}, \bm{k}, t)$ as
\begin{equation}\label{Def:varphi}
\varphi(\bm{r}, \bm{k}, t)=\int_{t}^{T}\D t^{\prime} \me^{-\int_{t}^{t^{\prime}} \gamma(\bm{r}(s-t)) \D s} g(\bm{r}(t^{\prime}-t), \bm{k}, t^{\prime}).
\end{equation}
Please keep in mind that, it is required $t^\prime\geq t$ for convenience in the definition~\eqref{Def:varphi}, before which $t^\prime\leq t$ is always assumed, for example, see Eq.~\eqref{eq:Toperator}. Consequently,
from Eq.~\eqref{averaged_M_2},
the inner product~\eqref{eq:AT_fT}
can be determined {\sl only by the `initial' data},
as stated in the following theorem.

\begin{theorem}
The average value $\langle \hat{A} \rangle_T$ of a macroscopic quantity $A(\bm{x}, \bm{k})$ at a given final time $T$
can be evaluated by
\begin{equation}\label{eq:important}
\langle \hat{A} \rangle_{T}=  \int_{\mathbb{R}^d} \D \bm{r} \int_{\mathcal{K}}\D \bm{k} ~f(\bm{r}, \bm{k}, 0) \varphi(\bm{r}, \bm{k}, 0),
\end{equation}
where $\varphi$ is defined in Eq.~\eqref{Def:varphi}.
\end{theorem}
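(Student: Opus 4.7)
The plan is to recognize that essentially all the heavy lifting has been done in the derivation leading to Eq.~\eqref{averaged_M_2}, which already expresses $\langle \hat{A}\rangle_T$ as a phase-space integral against the initial datum $f(\bm{r}_0,\bm{k}_0,0)$. Thus my proof will amount to a clean rearrangement, interpreting the residual $t_0$ integral as the value of $\varphi$ at $t=0$.

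First, I would start from Eq.~\eqref{averaged_M_2} and apply Fubini's theorem to swap the order of integration, pulling the $\D t_0$ integral to the innermost position while keeping $\D \bm{r}_0 \D \bm{k}_0$ on the outside. The integrability needed to justify Fubini follows from the standing assumptions that the kernel $K$ is bounded, $\gamma(\bm{x})\geq 0$ is bounded, $g_0$ comes from a macroscopic observable $A(\bm{x},\bm{k})$, and the integration is over the finite time interval $[0,T]$ and the compact momentum domain $\mathcal{K}$; the decay of $f(\cdot,\cdot,0)$ at infinity in $\bm{x}$ handles the $\bm{r}_0$ integration. After the swap, the expression reads
\begin{equation*}
\langle \hat{A}\rangle_T = \int_{\mathbb{R}^d}\D \bm{r}_0 \int_{\mathcal{K}}\D \bm{k}_0 \, f(\bm{r}_0,\bm{k}_0,0) \int_0^T \D t_0 \, \me^{-\int_0^{t_0}\gamma(\bm{r}_0(s))\,\D s} g(\bm{r}_0(t_0),\bm{k}_0,t_0).
\end{equation*}

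Next, I would compare the inner $t_0$ integral with the definition~\eqref{Def:varphi} specialized to $t=0$. Since $\bm{r}(s-0)=\bm{r}+\hbar \bm{k}s/m = \bm{r}_0(s)$ by Eq.~\eqref{Def:forward-in-time trajectory} when $\bm{r}=\bm{r}_0$, the two expressions match term by term: the exponential weight $\me^{-\int_0^{t'}\gamma(\bm{r}(s))\D s}$ coincides with $\me^{-\int_0^{t_0}\gamma(\bm{r}_0(s))\D s}$, and $g(\bm{r}(t'),\bm{k},t')$ coincides with $g(\bm{r}_0(t_0),\bm{k}_0,t_0)$. Hence the inner integral is precisely $\varphi(\bm{r}_0,\bm{k}_0,0)$, giving the claimed identity after renaming $(\bm{r}_0,\bm{k}_0)$ to $(\bm{r},\bm{k})$.

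There is no real obstacle beyond bookkeeping: the only point that warrants care is making sure the domain convention on the temporal variable in the definition~\eqref{Def:varphi} (which requires $t'\geq t$) is consistent with the integrand arising from Eq.~\eqref{averaged_M_2} (where indeed $t_0\in[0,T]$ with the outer time $t=0$). The remark already inserted after Eq.~\eqref{Def:varphi} in the paper flags exactly this subtlety, so I would echo it briefly in the proof to avoid any confusion with the earlier convention $t^\prime\leq t$ used in the adjoint operator~\eqref{eq:Toperator}. With that noted, the theorem follows directly.
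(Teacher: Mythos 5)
Your proposal is correct and follows essentially the same route as the paper: the paper obtains the theorem directly from Eq.~\eqref{averaged_M_2} by regrouping the $t_0$ integral as $\varphi(\bm{r}_0,\bm{k}_0,0)$ via the definition~\eqref{Def:varphi}, exactly as you do. Your explicit invocation of Fubini and the check of the trajectory/time-ordering conventions are harmless elaborations of steps the paper leaves implicit.
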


According to Eq.~\eqref{eq:important}, in order to evaluate $\langle \hat{A} \rangle_{T}$,
the remaining task is to calculate $\varphi(\bm{r}_0, \bm{k}_0, 0)$.
To this end, we need first to obtain the expression of $g(\bm{r}_0(t_0), \bm{k}_0, t_0)$ from the dual system \eqref{Def:dual_system}.

Replacing $(\bm{x}^{\prime},\bm{k}^{\prime},t^\prime)$ by $(\bm{r}_0(t_0),\bm{k}_0, t_0)$ and performing the coordinate conversion $\bm{x}(t-t^{\prime}) \to   \bm{r}_1$, $\bm{k} \to \bm{k}_1$, $t \to t_1$
in Eq.~\eqref{Def:dual_system} yield
\begin{equation}\label{eq:g(0)}
\begin{split}
g(\bm{r}_0(t_0), \bm{k}_0, t_0)=&g_0(\bm{r}_0(t_0), \bm{k}_0, t_0)+\int_{t_0}^{T} \D t_1 \int_{\mathbb{R}^d} \D \bm{r}_1 \int_{\mathcal{K}} \D \bm{k}_1 ~\me^{-\int_{t_0}^{t_1} \gamma(\bm{r}_1(s-t_0)) \D s} \\
& \times \Gamma(\bm{r}_1, \bm{k}_1; \bm{r}_0(t_0), \bm{k}_0) g(\bm{r}_1(t_1-t_0), \bm{k}_1, t_1),
\end{split}
\end{equation}
where the trajectory $\bm{r}_{1}(\Delta t)$ reads
\begin{equation}
\bm{r}_1(\Delta t)=\bm{r}_1+\hbar \bm{k}_1\Delta t/m, \quad
\Delta t\geq 0,
\end{equation}
which is \emph{not} the same as $\bm{r}_0(\Delta t)$ given in
Eq.~\eqref{Def:forward-in-time trajectory}
since the underlying wavevectors $\bm{k}_0$ and $\bm{k}_1$ are different!
Substituting Eq.~\eqref{eq:g(0)} into Eq.~\eqref{averaged_M_2} leads to
\begin{equation}\label{particle_model}
\begin{split}
\langle &\hat{A} \rangle_{T} = \langle \hat{A} \rangle_{T, 0}+\int_0^T \D t_0 \int_{\mathbb{R}^d} \D \bm{r}_0  \int_{\mathcal{K}}\D \bm{k}_0  ~f(\bm{r}_0, \bm{k}_0, 0)\me^{-\int_{0}^{t_0} \gamma(\bm{r}_0(s)) \D s}\int_{t_0}^{T} \D t_1    \\
&\times \int_{\mathbb{R}^d} \D \bm{r}_1  \int_{\mathcal{K}}\D \bm{k}_1 \me^{-\int^{t_1}_{t_0} \gamma(\bm{r}_1(s-t_0)) \D s} {\Gamma(\bm{r}_1, \bm{k}_1; \bm{r}_0(t_0), \bm{k}_0)} g(\bm{r}_1(t_1-t_0), \bm{k}_1, t_1),
\end{split}
\end{equation}
where
\begin{equation}
\langle  \hat{A} \rangle_{T, 0}=\int_{\mathbb{R}^d} \D \bm{r}_0  \int_{\mathcal{K}}\D \bm{k}_0 ~f(\bm{r}_0, \bm{k}_0, 0)\me^{-\int_{0}^{T} \gamma(\bm{r}_0(s)) \D s}A(\bm{r}_0(T), \bm{k}_0).
\end{equation}
From the dual system \eqref{Def:dual_system},
we can also obtain a similar expression to Eq.~\eqref{eq:g(0)} for $ g(\bm{r}_1(t_1-t_0), \bm{k}_1, t_1)$,
and then corresponding time integration with respect to $t_1$ in Eq.~\eqref{particle_model} becomes
\begin{align}
&\int_{t_0}^{T} \D t_1 ~ \me^{-\int^{t_1}_{t_0} \gamma(\bm{r}_1(s-t_0)) \D s} g(\bm{r}_1(t_1-t_0), \bm{k}_1, t_1)=\me^{-\int^{T}_{t_0} \gamma(\bm{r}_1(s-t_0)) \D s} A(\bm{r}_1(T-t_0), \bm{k}_1)\nonumber\\
&+\int_{t_0}^{T} \D t_1~\me^{-\int^{t_{1}}_{t_0} \gamma(\bm{r}_1(s-t_0)) \D s}  \int_{t_1}^{T} \D t_2  \int_{\mathbb{R}^d} \D \bm{r}_2 \int_{\mathcal{K}} \D \bm{k}_2 ~  g(\bm{r}_2(t_2-t_1), \bm{k}_2, t_2) \label{second_expansion_formula}
\\
&\times \me^{-\int^{t_2}_{t_1} \gamma(\bm{r}_2(s-t_1)) \D s} \Gamma(\bm{r}_2, \bm{k}_2; \bm{r}_1(t_1-t_0), \bm{k}_1),\nonumber
\end{align}
where
\begin{equation}
\bm{r}_2(\Delta t)=\bm{r}_2+{\hbar \bm{k}_2\Delta t}/{m},
\quad \Delta t\geq 0.
\end{equation}

Combining Eq.~\eqref{second_expansion_formula} and the definition~\eqref{Def:varphi} directly gives the adjoint equation for $\varphi$ as stated in Theorem~\ref{th:adjoint}.

\begin{theorem}[Adjoint equation]
\label{th:adjoint}
The function $\varphi(\bm{r}, \bm{k}, t)$ defined in Eq.~\eqref{Def:varphi}  satisfies the following integral equation
\begin{equation}\label{Adjoint_renewal_equation}
\begin{split}
\varphi(\bm{r}, \bm{k}, t)=&\me^{-\int^{T}_{t} \gamma(\bm{r}(s-t)) \D s} A(\bm{r}(T-t), \bm{k})+\int_{t}^{T} \D t^{\prime} \int_{\mathbb{R}^d} \D \bm{r}^{\prime} \int_{\mathcal{K}} \D \bm{k}^{\prime}\\
& \times \varphi(\bm{r}^{\prime}, \bm{k}^{\prime}, t^{\prime}) \me^{-\int^{t^{\prime}}_{t} \gamma(\bm{r}(s-t)) \D s} \Gamma(\bm{r}^{\prime}, \bm{k}^{\prime}; \bm{r}(t^{\prime}-t), \bm{k}).
\end{split}
\end{equation}
\end{theorem}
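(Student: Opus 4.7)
The plan is to combine the identity \eqref{second_expansion_formula} (already derived immediately above the theorem) with the definition \eqref{Def:varphi} of $\varphi$, since this is essentially all that is required. First I would observe that the left-hand side of \eqref{second_expansion_formula} is literally $\varphi(\bm{r}_1,\bm{k}_1,t_0)$: under the renaming $t_0\to t$, $\bm{r}_1\to \bm{r}$, $\bm{k}_1\to\bm{k}$, $t_1\to t^{\prime}$, its integrand coincides with the integrand in \eqref{Def:varphi}. By the same token, the inner $t_2$-integral on the right-hand side, namely
$$
\int_{t_1}^{T}\D t_2\,\me^{-\int_{t_1}^{t_2}\gamma(\bm{r}_2(s-t_1))\,\D s}\,g(\bm{r}_2(t_2-t_1),\bm{k}_2,t_2),
$$
is by definition $\varphi(\bm{r}_2,\bm{k}_2,t_1)$. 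Substituting these two recognitions back into \eqref{second_expansion_formula} and renaming $(\bm{r}_2,\bm{k}_2,t_1)\to(\bm{r}^{\prime},\bm{k}^{\prime},t^{\prime})$ reproduces \eqref{Adjoint_renewal_equation} verbatim.

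In case a more self-contained derivation of \eqref{second_expansion_formula} is wanted, I would plug $g=\mathcal{T}g+g_0$ from the dual system \eqref{Def:dual_system} into \eqref{Def:varphi} evaluated along $\bm{r}(t^{\prime}-t)$. The source term $g_0(\bm{x},\bm{k},\tau)=A(\bm{x},\bm{k})\delta(\tau-T)$ collapses the outer $t^{\prime}$-integral to $t^{\prime}=T$, producing the boundary contribution $\me^{-\int_{t}^{T}\gamma(\bm{r}(s-t))\,\D s}\,A(\bm{r}(T-t),\bm{k})$ of \eqref{Adjoint_renewal_equation}. The $\mathcal{T}g$ contribution, expanded with the explicit form \eqref{eq:Toperator} of $\mathcal{T}$ and the kernel $K$ whose $\Gamma$ factor carries the spatial delta $\delta(\bm{x}-\bm{x}^{\prime})$, becomes an integral against $\Gamma(\bm{r}^{\prime},\bm{k}^{\prime};\bm{r}(t^{\prime}-t),\bm{k})$ multiplied by another factor of $g$ evaluated along a new trajectory starting at $(\bm{r}^{\prime},\bm{k}^{\prime},t^{\prime})$; this is exactly the iterated form needed.

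The main obstacle is purely bookkeeping: each trajectory $\bm{r}_i(\Delta t)=\bm{r}_i+\hbar\bm{k}_i\Delta t/m$ depends on its own wavevector $\bm{k}_i$, so the exponential weights $\me^{-\int\gamma}$ do not merge into one exponential along a single path but rather stack into a product along a piecewise-deterministic curve whose velocity jumps at each scattering event. One must keep track of which trajectory each $\gamma$ is being integrated against, and invoke $\delta(\bm{x}-\bm{x}^{\prime})$ inside $\Gamma$ at the correct moment to eliminate the spatial variable introduced by $\mathcal{T}$. Once this is done carefully---which is precisely what the derivation of \eqref{second_expansion_formula} in the text accomplishes---the identification of $\varphi$ on both sides reduces the adjoint equation to a direct algebraic identity.
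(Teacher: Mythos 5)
Your proposal is correct and follows exactly the paper's route: the paper itself proves Theorem~\ref{th:adjoint} by the one-line observation that combining Eq.~\eqref{second_expansion_formula} with the definition~\eqref{Def:varphi} gives the result, and your identification of both the left-hand side and the inner $t_2$-integral of Eq.~\eqref{second_expansion_formula} as instances of $\varphi$ is precisely that step, carried out with the correct renamings. Your optional self-contained derivation of Eq.~\eqref{second_expansion_formula} by iterating the dual system~\eqref{Def:dual_system} also mirrors how the paper obtains that identity from Eq.~\eqref{eq:g(0)}.
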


We call Eq.~\eqref{Adjoint_renewal_equation} the adjoint equation of
Eq.~\eqref{Wigner_integral_form} is mainly because $f(\bm{r},
\bm{k}, 0)$ and $\varphi(\bm{r}, \bm{k}, 0)$ constitute a dual
system in the bilinear form~\eqref{eq:important} (denoted by
$\langle \cdot, \cdot \rangle_0$) for determining $\langle \hat{A}
\rangle_{T}$. Combining the formal solution $f(\bm{r}, \bm{k}, T)=
\me^{T(\mathcal{A+B})}f(\bm{r}, \bm{k}, 0)$ of the Wigner
equation~\eqref{eq.k_truncated_Wigner} as well as
Eqs.~\eqref{average_A_wf} and \eqref{eq:important} directly yields
\begin{align}
\langle \hat{A} \rangle_{T} &=
\langle f(\bm{r}, \bm{k}, 0), \varphi(\bm{r}, \bm{k}, 0) \rangle_0
=\langle f(\bm{r}, \bm{k}, T), A(\bm{r}, \bm{k})\rangle_0 \nonumber \\
&=\langle\me^{T(\mathcal{A+B})}f(\bm{r}, \bm{k}, 0), A(\bm{r}, \bm{k})\rangle_0=\langle f(\bm{r}, \bm{k}, 0), \me^{-T(\mathcal{A+B})}A(\bm{r}, \bm{k})\rangle_0.
\end{align}
In consequence, we formally obtain $\varphi(\bm{r}, \bm{k}, 0)=\me^{-T(\mathcal{A+B})}A(\bm{r}, \bm{k})$ with $\me^{-T(\mathcal{A+B})}$ being the adjoint operator of $\me^{T(\mathcal{A+B})}$, indicating that Eq.~\eqref{Adjoint_renewal_equation}, in some sense, can be treated as an inverse problem of Eq.~\eqref{Wigner_Fredholm_integral_form}, which produces a quantity $\varphi(\bm{r}, \bm{k}, 0)$ from the observation $A(\bm{r}, \bm{k})$ at the ending time $T$.

Moreover, for given $(\bm{r}, t)$ on $t^{\prime} \geq t$,  we can similarly
introduce a probability measure with respect to $t^{\prime}$ like
\begin{equation}\label{eq:measure2}
\mathcal{G}(t^{\prime}; \bm{r}, t)=\int_{t}^{t^{\prime}} \gamma(\bm{r}(\tau-t)) \me^{-\int^{\tau}_{t} \gamma(\bm{r}(s-t)) \D s} ~ \D \tau,
\end{equation}
because of
\begin{equation}
\dif \mathcal{G}(t^{\prime} ; \bm{r}, t ) \geq 0,
\quad \int_{t}^{+\infty} \dif \mathcal{G}(t^{\prime} ; \bm{r}, t )
= 1,
\end{equation}
under the assumption that
the auxiliary function satisfies
\begin{equation}\label{eq:gamma_condition_1}
\forall \bm{r}\in\mathbb{R}^d, \quad
\gamma(\bm{r})\geq 0, \quad \lim_{t^\prime\to+\infty}\int_{t}^{t^\prime} \gamma(\bm{r}(t-s))\dif s = +\infty.
\end{equation}
Substituting the measure~\eqref{eq:measure2} into Eq.~\eqref{Adjoint_renewal_equation} also yields
a renewal-type equation
\begin{equation}\label{Adjoint_renewal_type_equation}
\begin{split}
\varphi(\bm{r}, \bm{k}, t)=&\left[1-\mathcal{G}(T; \bm{r}, t)\right] A(\bm{r}(T-t), \bm{k})\\
&+\int_{t}^{T} \D \mathcal{G}(t^{\prime}; \bm{r}, t)  \int_{\mathbb{R}^d} \D \bm{r}^{\prime} \int_{\mathcal{K}} \D \bm{k}^{\prime}  \frac{\Gamma(\bm{r}^{\prime}, \bm{k}^{\prime}; \bm{r}(t^{\prime}-t), \bm{k})}{\gamma(\bm{r}(t^{\prime}-t))} \varphi(\bm{r}^{\prime}, \bm{k}^{\prime}, t^{\prime}).
\end{split}
\end{equation}

\subsection{Importance sampling}
\label{sec:importance_sample}


Before launching into the details of probabilistic interpretation, we would like to emphasize the central role of Eq.~\eqref{eq:important} in computation. Hereto we have shown in Eq.~\eqref{eq:important} that the average $\langle \hat{A} \rangle_{T}$ can be evaluated by sampling from the initial Wigner function $f(\bm{r}, \bm{k}, 0)$ and solving the adjoint equation \eqref{Adjoint_renewal_equation} for $\varphi(\bm{r}, \bm{k}, 0)$, instead of the direct calculation based on $f(\bm{r}, \bm{k}, T)$ as shown in Eq.~\eqref{eq:AT_fT}. Actually, the bilinear form~\eqref{eq:important} for determining
$\langle \hat{A} \rangle_{T}$ serves as the foundation of WBRW in which the importance sampling plays a key role.

Regarding of the fact that the Wigner function may take negative value, we have to introduce an {\sl instrumental probability distribution}  $f_I$ as follows
\begin{equation}\label{instrumental_distribution}
f_I(\bm{r}, \bm{k}, t)=\frac{1}{H(t)} \Big| f(\bm{r}, \bm{k}, t) \Big|,
\end{equation}
where $H(t)$ is the normalizing factor (we assume $f \in L^1(\mathbb{R}^d \times \mathcal{K})$)
\begin{equation}\label{eq:pq}
H(t)= \iint_{\mathbb{R}^d\times\mathcal{K}} \Big| f(\bm{r}, \bm{k}, t) \Big| \D \bm{x} \D \bm{k}.
\end{equation}

Now the inner product problem can be evaluated through the importance sampling. Owing to the Markovian property of the linear evolution system, it suffices to divide the time interval  $[0, T]$ into $n$ steps, denoted by $t_{l}$ with $l=0, 1, \cdots, n$, and set $f(\bm{x}, \bm{k}, t_l)$ as the initial condition. Then we have
\begin{align}
\langle A \rangle_{t_{l+1}}&=\iint_{\mathbb{R}^d \times \mathcal{K}} \varphi(\bm{r}, \bm{k}, t_{l}) \cdot \frac{f(\bm{r}, \bm{k}, t_l)}{f_I(\bm{r}, \bm{k}, t_l)}  \cdot  f_I(\bm{r}, \bm{k},t_l) \D \bm{r} \D \bm{k} \nonumber \\
&\approx {\sum_{\alpha} \varphi(\bm{r}_\alpha, \bm{k}_\alpha, t_l)  \cdot w_\alpha(t_l)},\label{eq:A_approx2}
\end{align}
where $N_\alpha$ discrete samples $\{(\bm{r}_\alpha, \bm{k}_\alpha)\}_{\alpha=1}^{N_\alpha}$ generated from the instrumental probability distribution $f_I(\bm{r}, \bm{k}, t_l)$,
and the `weight' $w_\alpha(t)$ reads
\begin{equation}
w_\alpha(t) = \frac{s_\alpha(t)}{\sum_{\alpha} s_\alpha(t)}
\end{equation}
with
\begin{equation}\label{eq:s}
s_\alpha(t) = \frac{f(\bm{r}_\alpha, \bm{k}_\alpha, t)}{f_I(\bm{r}_\alpha, \bm{k}_\alpha, t)H(t)}
\end{equation}
being either $-1$ or $1$ due to Eq.~\eqref{instrumental_distribution}.
In fact, the estimator adopted in Eq.~\eqref{eq:A_approx2} implicitly utilizes
the strong law of large number
\begin{equation}\label{eq:strong}
\frac{1}{N_\alpha}\sum_{\alpha} H(t) s_\alpha(t) \to 1
\,\, \text{as} \,\, N_\alpha\to+\infty, ~~ a.s.
\end{equation}
It must be emphasized here that
the sign function $s_\alpha(t)$ indicates that every super-particle must be endowed with a sign, either positive or negative, for resolving the possible negative part of the Wigner function. Actually, the concept of signed particles, which emerges naturally here via the importance sampling,
is the intrinsic feature of spWMC, that is never seen in classical Vlasov or Boltzmann simulations.

In particular, the Wigner probability on arbitrary domain $D$ can be estimated by
\begin{equation}\label{eq:ft_approx}
W_D(t_{l+1}) = \iint_{D} f(\bm{r}, \bm{k}, t_{l+1})\approx {\sum_{\alpha} \varphi(\bm{r}_\alpha, \bm{k}_\alpha, t_l) \cdot w_\alpha(t_l)},
\end{equation}
with $\varphi(\bm{x}, \bm{k}, t_{l+1}) = \mone_{D}(\bm{x}, \bm{k})$. This motivates us to estimate the Wigner function at $t_{l+1}$ by the
piecewise constant function
\begin{align}
&f(\bm{x}, \bm{k}, t_{l+1}) \approx \sum_{j=1}^{J}  d_{j}(t_{l+1}) \cdot \mone_{D_j} (\bm{x}, \bm{k}),\\
&f_I(\bm{x}, \bm{k}, t_{l+1}) \approx \frac{1}{H(t)}\sum_{j=1}^{J} |d_j(t_{l+1})| \cdot \mone_{D_j} (\bm{x}, \bm{k}),
\label{eq:histogram}
\end{align}
where $D_{j}, j=1,\cdots J$ gives a partition of $\mathbb{R}^d \times \mathcal{K}$, and $|D_j|$ denotes the volume of $D_j$ and $d_{j}(t) = W_{D_j}(t)/|D_j|$.  In fact, Eq.~\eqref{eq:histogram} is the basis of the resampling procedure, which will be discussed in  Section 5.3. Combining Eqs.~\eqref{eq:A_approx2} and \eqref{eq:histogram} allows us to solve the Wigner equation through a time-marching scheme.

The remaining problem is how to estimate $\varphi(\bm{r}_\alpha, \bm{k}_\alpha, t_l)$, which can be replaced by an additive functional of $A(\bm{x}, \bm{k})$,
as shown in the next section. Hence $\langle \hat{A} \rangle$ is evaluated by a purely particle-based scheme in which every super-particle, carrying a weight and a sign, is moving according to several specific rules in the branching particle system.




\section{{The Wigner branching random walk}}
\label{sec:branching}

This section is devoted to the probabilistic interpretation of the adjoint equation \eqref{Adjoint_renewal_type_equation}. Owing to the fact that all the moments of a branching process satisfy the renewal-type equations\cite{bk:Harris1963}, it motivates us to construct a  stochastic branching random walk such that its expectation is equal to the unique solution of Eq.~\eqref{Adjoint_renewal_type_equation}.
In addition to validating WBRW, we also analyze the inherent  mass conservation property and derive the exact growth rate of particle number. Most of  deductions can be straightforwardly generalized to the ($k$-truncated) Wigner equation due to the strong similarities between Eqs.~\eqref{Def:renewal_type_eq} and \eqref{Adjoint_renewal_type_equation}.

After the theoretical analysis, we turn to some numerical aspects and discuss the idea of resampling, which is aimed at suppressing the exponential growth of particle number. It is closely linked to the non-parameter density estimation and presents several challenges in high dimensional cases, as also found in the statistical learning and classification.
Finally, we will outline the procedures of the statistical algorithm.

\subsection{A branching particle system }
\label{sec:branching:model}
To illustrate the main theorem more clearly, we first introduce a probabilistic model, a branching particle system associated with an exit system, to describe the WBRW in a picturesque language.  The exit system means that a particle in the branching system will be frozen when its life-length exceeds the final time $T$. All related rigorous analysis is left for the next subsection.

Consider a system of particles moving in $\mathbb{R}^d \times \mathcal{K} \times [t, T]$ according to the following rules. Without loss of generality, the particle, starting at  time $t$ at state $(\bm{r}, \bm{k})$,  having a random life-length $\tau$ and carrying a weight $\phi$,
is marked. The chosen initial data corresponds to those adopted in the renewal-type equation~\eqref{Adjoint_renewal_type_equation}.

\begin{description}
\item[Rule 1] The motion of each particle is described by a {right continuous Markov process}.

\item[Rule 2]  The particle at $(\bm{r}, \bm{k})$ dies in the age time interval $(t, t^{\prime}) $  with probability $\mathcal{G}(t^{\prime}; \bm{r}, t)$, which depends on its position $\bm{r}$ and the time $t$ (see Eq.~\eqref{eq:measure2}). In particular, when using the constant auxiliary function $\gamma(x)\equiv \gamma_0$, the particle dies during time interval $(t ,t^{\prime})$ with probability $1-\me^{-\gamma_0(t^{\prime}-t)}\approx\gamma_0 (t^{\prime}-t)$ for small $t^{\prime}-t$,
which is totally independent of both its position and age.

\item[Rule 3]  If $t+\tau < T$, the particle dies at age $t^{\prime}=t+\tau$ at state $(\bm{r}(\tau), \bm{k})$,  and produces $2M+1$ new particles at  states $(\bm{r}^{\prime}_{(1)}, \bm{k}^{\prime}_{(1)})$,  $(\bm{r}^{\prime}_{(2)}, \bm{k}^{\prime}_{(2)})$,  $\cdots$, $(\bm{r}^{\prime}_{(2M+1)}, \bm{k}^{\prime}_{(2M+1)})$,
endowed with updated weights $\phi^\prime_{(1)}$, $\phi^\prime_{(2)}$, $\cdots$, $\phi^\prime_{(2M+1)}$, respectively. All these parameters can be determined by the kernel function in Eq.~\eqref{Adjoint_renewal_type_equation}:
\begin{equation}
\begin{split}
\frac{\Gamma(\bm{r}^{\prime}, \bm{k}^{\prime}; \bm{r}(\tau), \bm{k})}{\gamma(\bm{r}(\tau))}&=\sum_{m=1}^{M}
\frac{\xi_m(\bm{r}^{\prime})}{\gamma(\bm{r}(\tau))} \cdot \frac{V^{-}_{w, m}(\bm{r}^{\prime},\bm{k}-\bm{k}^{\prime})}{\xi_m(\bm{r}^{\prime})}\cdot \delta(\bm{r}(\tau)-\bm{r}^{\prime})\\
&-\sum_{m=1}^{M} \frac{\xi_m(\bm{r}^{\prime})}{\gamma(\bm{r}(\tau))} \cdot \frac{V^{+}_{w, m}(\bm{r}^{\prime},\bm{k}-\bm{k}^{\prime})}{\xi_m(\bm{r}^{\prime})}\cdot \delta(\bm{r}(\tau)-\bm{r}^{\prime})\\
&+ 1\cdot \delta(\bm{k}-\bm{k}^{\prime})\cdot \delta(\bm{r}(\tau)-\bm{r}^{\prime}),
\end{split}
\end{equation}
and thus for $1\le m \le M$
\begin{align}
\bm{r}^{\prime}_{(1)} = \bm{r}^{\prime}_{(2)} = \cdots = \bm{r}^{\prime}_{(2M+1)} = \bm{r}(\tau),\\
\bm{k}-\bm{k}^{\prime}_{(2m-1)} \propto \frac{V^{-}_{w,m}(\bm{r}(\tau), \bm{k})}{\xi_m(\bm{r}(\tau)) },\quad
\bm{k}-\bm{k}^{\prime}_{(2m)} \propto \frac{V^{+}_{w, m}(\bm{r}(\tau), \bm{k})}{\xi_m(\bm{r}(\tau)) }, \label{eq:ksample1} \\
\quad \bm{k}^{\prime}_{(2M+1)}=\bm{k}, \label{eq:ksample}\\
\phi^\prime_{(2m-1)}={\zeta_{2m-1}(\bm{r}(\tau))}\cdot \mone_{ \{ \bm{k}^\prime_{2m-1} \in \mathcal{K} \} } \cdot \phi, \;\;
\phi^\prime_{(2m)}={\zeta_{2m}(\bm{r}(\tau))}\cdot \mone_{ \{ \bm{k}^\prime_{2m} \in \mathcal{K} \} } \cdot \phi, \\
\quad \phi^\prime_{(2M+1)}= {\zeta_{2M+1}(\bm{r}(\tau))}  \cdot \phi = 1 \cdot \phi,
\end{align}
where the function $\xi_{m}(\bm{r})$ is the normalizing factor for both
$V^{+}_{w,m}$ and  $V^{-}_{w,m}$, i.e.,
\begin{equation}\label{eq:normalizing}
\xi_{m}(\bm{r})=\int_{2\mathcal{K}} V^{+}_{w, m}(\bm{r}, \bm{k}) \D \bm{k} =  \int_{2\mathcal{K}} V^{-}_{w, m}(\bm{r}, \bm{k}) \D \bm{k},
\end{equation}
because of the mass conservation~\eqref{eq:mass1},
and
\begin{equation}\label{eq:zeta}
\zeta_{2m-1}(\bm{r}) = \frac{\xi_m(\bm{r})}{\gamma(\bm{r})}, \quad \zeta_{2m}(\bm{r}) =  -\frac{\xi_m(\bm{r})}{\gamma(\bm{r})}, \quad \zeta_{2M+1}(\bm{r})=1.
\end{equation}

\item[Rule 4] If $t+\tau \ge T$, say, the life-length of the particle exceeds $T-t$, so it will immigrate to the state $(\bm{r}(T-t), \bm{k})$ and be frozen. This rule corresponds to the first right-hand-side term of Eq.~\eqref{Adjoint_renewal_type_equation},
and the related probability is
\begin{equation}
\Pr(\tau \ge T-t)=1-\mathcal{G}(T; \bm{r}, t)=\me^{-\int_{t}^{T} \gamma(\bm{r}(s-t)) \D s}.
\end{equation}

\item[Rule 5] The only interaction between the particles is that the birth time and state of offsprings coincide with the death time and state of their parent.
\end{description}

\begin{remark}
In Eqs.~\eqref{eq:ksample1} and \eqref{eq:normalizing}, we require that $V_{w}^{+}$ and $V_{w}^{-}$ can be normalized, which is not necessarily true for $|\mathcal{K}| \to \infty$. Nevertheless, when the convolution operator $\mathcal{B}$ is bounded and $f(\bm{x}, \bm{k}, t)$ decays in $\mathcal{K}$-space sufficiently rapidly, we can approximate the solution of the Wigner equation by  the $k$-truncated correspondence because of the boundedness of the semigroup $\me^{t\mathcal{A}}$ and
\begin{equation}
\Vert \int_{\mathbb{R}^d \setminus \mathcal{K}} V_{w}(\bm{x}, \bm{k}-\bm{k}^{\prime}, t) f(\bm{x}, \bm{k}^{\prime}, t) \D \bm{k}^{\prime}\Vert \to 0, ~~\textup{as}~~|\mathcal{K}| \to \infty.
\end{equation}
Thus it suffices to restrict our discussion on the $k$-truncated Wigner equation and its adjoint counterpart.
\end{remark}

Now we present the main result.  Let $\mathcal{E}_\alpha$ be the index set of all frozen particles with
the same ancestor initially at time $t=0$ at state $(\bm{r}_{\alpha}, \bm{k}_{\alpha})$ carrying the weight $\phi_0=1$,
$\{ (\bm{r}_{i, \alpha}, \bm{k}_{i, \alpha}), i\in \mathcal{E}_{\alpha}\}$ denote the collection of corresponding frozen states,
and $\phi_{i, \alpha}$ the updated weight of the $i$-th particle. Accordingly, the adjoint equation is solved by
\begin{equation}\label{eq:adjoint_probabilisitic}
\varphi(\bm{r}_\alpha, \bm{k}_\alpha, 0)
= \Pi_{0, \bm{r}_\alpha, \bm{k}_\alpha} \left( \sum_{i\in \mathcal{E}_\alpha} \phi_{i,\alpha} \cdot  A(\bm{r}_{i, \alpha}, \bm{k}_{i, \alpha}) \right),
\end{equation}
where $\Pi_{0, \bm{r}_\alpha, \bm{k}_\alpha}(\cdot)$ means the expectation with respect to the probability law defined by the above five rules (its definition is left in Section \ref{sec:branching:analysis}). Furthermore, from Eq.~\eqref{eq:A_approx2},
the quantity $\langle \hat{A} \rangle_T$ is solved by
\begin{equation}\label{eq:inner_product_probabilisitic}
\langle \hat{A} \rangle_{T} =  \mathbb{E}_{f_I}\left[ \Pi_{0, \bm{r}_\alpha, \bm{k}_\alpha} \left( s_\alpha(0) \cdot H(0) \cdot \sum_{i\in \mathcal{E}_\alpha} \phi_{i,\alpha} \cdot  A(\bm{r}_{i, \alpha}, \bm{k}_{i, \alpha})
 \right)\right],
\end{equation}
where $\mathbb{E}_{f_I}$ means the expectation with respect to $f_I$. The proof of  Eqs.~\eqref{eq:adjoint_probabilisitic} and \eqref{eq:inner_product_probabilisitic}
is left for Section \ref{sec:branching:analysis}.

\subsection{Stochastic interpretation}
\label{sec:branching:analysis}

In the theory of branching process, all the moments of an age-dependent branching processes satisfy renewal-type integral equations\cite{bk:Harris1963}, where the term ``age-dependent'' means the probability that a particle, living at $t$, dies at $(t, t+\Delta t)$ might not be a constant function of $t$.
In this regard, it suffices to define a stochastic branching Markov process (continuous in time parameter), corresponding to the branching particle system as described earlier.

The random variable of a branching particle system is the family history, a denumerable random sequence corresponding to a unique family tree. Firstly, we need a sequence to identify the objects in a family. Beginning with an ancestor, denoted by $\langle 0 \rangle$, and we can denote its $m$-th children by  $\langle m \rangle$. Similarly, we can denote the $j$-th child of $i$-the child by $\langle i j \rangle$, and thus $ \langle  i_1 i_2\cdots i_n \rangle$ means $i_n$-th child of  $i_{n-1}$-th child of $\cdots$ of the $i_2$-child of the $i_1$-th child, with $i_n \in \left\{1, 2, \cdots, 2M+1\right\}$. The ancestor $\langle 0 \rangle$ is omitted here and hereafter for brevity.

Our branching particle system involves three basic elements: the position $\bm{r}$ (or $\bm{x}$), the wavevector $\bm{k}$ and the life-length $\tau$, and each particle will either immigrate to $\bm{r}(\tau)= \bm{r}+ \hbar \bm{k} \tau/m$, then be killed and produce three offsprings, or  be frozen when hitting the first exit time $T$. Now we can give the definition of a family history, starting from one particle at age $t$ at state $(\bm{r}, \bm{k})$. In the subsequent discussion we let $(\bm{r}_0, \bm{k}_0)=(\bm{r}, \bm{k})$.

\begin{definition}
A family history $\omega$ stands for a random sequence
\begin{equation}\label{Def:family_history}
\omega=((\tau_0, \bm{r}_0, \bm{k}_0); (\tau_{1}, \bm{r}_{1}, \bm{k}_{1});  (\tau_{2}, \bm{r}_{2}, \bm{k}_{2}); (\tau_{3} ,\bm{r}_{3}, \bm{k}_{3}); (\tau_{11}, \bm{r}_{11}, \bm{k}_{11}); \cdots),
\end{equation}
where the tuple $Q_{i}=(\tau_{i}, \bm{r}_{i}, \bm{k}_{i})$ appears in a definite order of enumeration. $\tau_{i}$, $\bm{r}_i$, $\bm{k}_i$ denote the life-length, starting position and wavevector of the $i$-th particle, respectively. The exact order of $Q_{i}$ is immaterial but is supposed to be fixed. The collection of all family histories is denoted by $\Omega$.
\end{definition}

At this stage, the initial time $t$ and the initial state $(\bm{r}_0, \bm{k}_0)$ of the ancestor particle $\langle 0 \rangle$ are assumed to be non-stochastic.

\begin{definition}
For each $\omega=(Q_{0}; Q_{1}; Q_{2}; Q_{3}; Q_{11}\cdots)$, the subfamily $\omega_{i}$ is the family history of $\langle i \rangle$ and its descendants, defined by $\omega_{i}=(Q_{i}; Q_{i1}; Q_{i2}, Q_{i3}; \cdots)$.  The collection of $\omega_i$ is denoted by $\Omega_i$.
\end{definition}

Equivalently, we can also use the time parameter to identity the path of a particle and all its ancestors in the family history, and denote $\eta_s$ its state (i.e., starting position and wavevector) at time $s \ge t$. Taking the particle $i=\langle  i_1 i_2\cdots \rangle$ as an example, we have
$Q_0 = (\tau_0, \eta_t)$ with $\eta_t=(\bm{r}_0, \bm{k}_0)$,
$Q_{i_1}=(\tau_{i_1}, \eta_{t_{i_1}})$ with $t_{i_1}=t+\tau_0$
and $\eta_{t_{i_1}}=(\bm{r}_{i_1},\bm{k}_{i_1})$,  $Q_{i_1 i_2}=(\tau_{i_1 i_2}, \eta_{t_{i_1 i_2}})$ with $t_{i_1 i_2}=t_{i_1}+\tau_{i_1}$ and $\eta_{t_{i_1 i_2}}=(\bm{r}_{i_1 i_2},\bm{k}_{i_1 i_2})$, $\cdots$.
To characterize the freezing behavior of the particle, we denote the first exit time by $T$, as boundary conditions are not specified.

\begin{definition}\label{def:frozen}
Suppose the family history starts at time $t$. Then a particle $\langle  i_{1}i_{2}\cdots i_{n}\rangle$ is said to be frozen at $T$ if the following conditions hold
\begin{align}
t+\tau_0+\tau_{i_1}+\tau_{i_1 i_2}+\cdots+\tau_{i_1 i_2 \cdots i_{n-1}} &< T,\\
t+\tau_0+\tau_{i_1}+\tau_{i_1 i_2}+\cdots+\tau_{i_1 i_2 \cdots i_{n-1}}+\tau_{i_1 i_2 \cdots i_{n}} &\ge T.
\end{align}
In particular, when $t+\tau_0\ge T$, the ancestor particle $\langle 0 \rangle$ is frozen. Sometimes the particle $\langle i_{1}i_{2}\cdots i_{n}\rangle$ is also called alive in the time interval $[t, T]$. The collection of frozen particles is denoted by $\mathcal{E}(\omega)$.
\end{definition}

\begin{remark}
The first exit time $\tau_e(O)$ from an open set $O$ is defined by
\begin{equation*}
\tau_e(O)=\inf \left\{ s \ge t : (s, \eta_s) \notin O\right\}.
\end{equation*}
Since a boundary condition is not applied yet, we have $O=(-\infty, T) \times \mathbb{R}^d \times \mathcal{K^{\prime}}$
with $\mathcal{K^{\prime}}$ being an open cover of $\mathcal{K}$,
and thus $\tau_e(O)=T$.
\end{remark}

\begin{figure}[h]
\centering
\subfigure{\includegraphics[width=0.6\textwidth,height=0.4\textwidth]{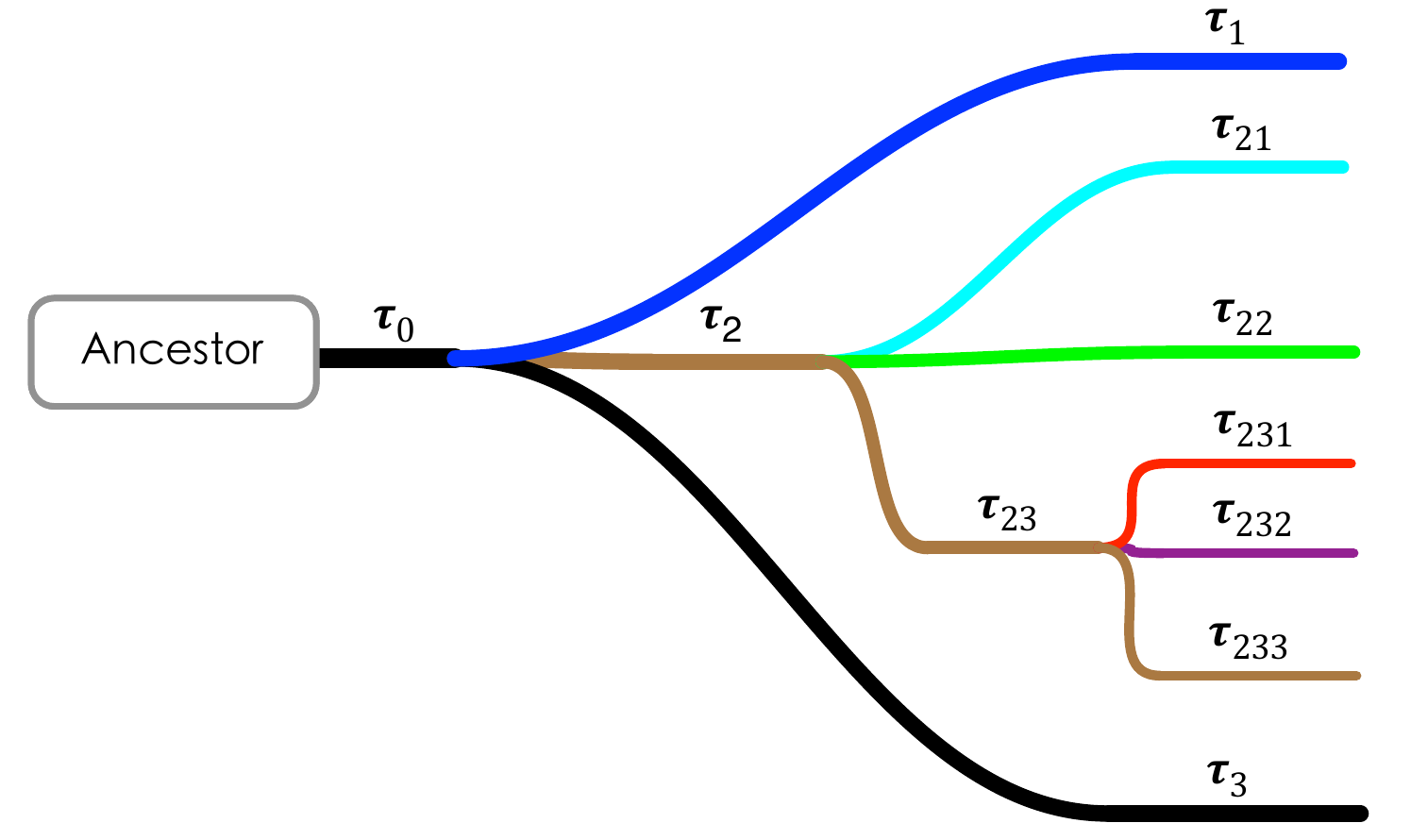}}
\caption{\small An example of family history tree.}
\label{fig:branching_tree}
\end{figure}

\begin{example}\rm
$\omega=(Q_{0}; Q_{1};  Q_{2}; Q_{3}; Q_{21}; Q_{22}; Q_{23};   Q_{231}; Q_{232}; Q_{233})$ uniquely determines a family history tree, as shown in Fig.~\ref{fig:branching_tree}. $\omega_2=(Q_{2};  Q_{21}; Q_{22}; Q_{23}; Q_{231}; Q_{232}; Q_{233})$ is a subfamily history that describes the family history of $Q_{2}$ and its descendants, and we have $\omega=(Q_{0};\omega_{1}; \omega_{2}; \omega_{3})$. The collection of frozen particles is $\mathcal{E}(\omega)=\{\langle 1 \rangle, \langle 21 \rangle, \langle 22 \rangle, \langle 231 \rangle, \langle 232 \rangle, \langle 233 \rangle, \langle 3 \rangle\}$.
\end{example}

Hereafter we assume that all particles in the branching particle system will move until reaching the frozen state, and still use $\Omega$ to denote the collection of the family history of all frozen particles. Now we need to define a probability measure $\Pi_{t, \bm{r}, \bm{k}}$ on $\Omega$,
corresponding to the branching process started from state $(\bm{r}, \bm{k})$ at time $t$.

For the Borel sets $T_i \subset [0,+\infty)~(i=0,1,\cdots,n)$ , $R_i \subset \mathbb{R}^d, K_i \subset \mathcal{K} ~(i=1,2,\cdots, n)$ on $\Omega$,  let $E =\{ \tau_0 \in T_0, (\tau_{i_1},\eta_{t_{i_1}}) \in T_1 \times R_1 \times K_1,  \cdots,
(\tau_{i_1 i_2 \cdots i_{n}}, \eta_{t_{i_1 i_2\cdots i_{n}}})
\in T_{n}\times R_{n}\times K_{n} \}$, then the probability of the event $E$ is
\begin{align}
\Pr(E)&=\int_{T_0} \D \tau_0 \cdots  \int_{T_{n}} \D \tau_{i_1 i_2\cdots i_{n}}  \int_{R_1} \D \bm{r}_{i_1} \int_{K_1} \D \bm{k}_{i_1} \cdots \int_{R_n} \D \bm{r}_{i_1 i_2\cdots i_n} \int_{K_n} \D\bm{k}_{i_1 i_2\cdots i_n} \nonumber\\
&\times p_{i_1}(t, \bm{r}_0, \bm{k}_0; t_{i_1}, \bm{r}_{i_1}, \bm{k}_{i_1}) \times  p_{i_2}(t_{i_1}, \bm{r}_{i_1}, \bm{k}_{i_1}; t_{i_1 i_2}, \bm{r}_{i_1 i_2}, \bm{k}_{i_1 i_2}) \times \cdots \nonumber \\
&\times p_{i_{n}}(t_{i_1 i_2\cdots i_{n-1}}, \bm{r}_{i_1 i_2\cdots i_{n-1}}, \bm{k}_{i_1 i_2\cdots i_{n-1}}; t_{i_1 i_2\cdots i_{n}}, \bm{r}_{i_1 i_2\cdots i_{n}}, \bm{k}_{i_1 i_2\cdots i_{n}}) \nonumber\\
&\times p(t_{i_1 i_2 \cdots i_n}, \bm{r}_{i_1 i_2 \cdots i_n} ; t_{i_1 i_2 \cdots i_n}+\tau_{i_1 i_2 \cdots i_n})
\label{Def:probability_measure}
\end{align}
with $i_{l} \in \{1,2,\cdots, 2M+1\}$ ($l=1,2,\cdots,n$).
Here the transition densities
$p_{i_l}$ and $p$ are given by ($1\le m \le M$)
\begin{align}
p_{2m-1}(t, \bm{r}, \bm{k}; t^{\prime}, \bm{r}^{\prime}, \bm{k}^{\prime})&= p(t, \bm{r} ; t^{\prime}) \cdot \frac{V_{w,m}^{-}(\bm{r}^\prime, \bm{k}-\bm{k}^{\prime}) }{\xi_m(\bm{r}^\prime)}  \cdot \delta(\bm{r}^{\prime}- \bm{r}(\tau)), \label{eq:p1}\\
p_{2m}(t, \bm{r}, \bm{k}; t^{\prime}, \bm{r}^{\prime}, \bm{k}^{\prime})&= p(t, \bm{r} ; t^{\prime}) \cdot \frac{V_{w,m}^{+}(\bm{r}^\prime, \bm{k}-\bm{k}^{\prime}) }{\xi_m(\bm{r}^\prime)}  \cdot \delta(\bm{r}^{\prime}- \bm{r}(\tau)), \\
p_{2M+1}(t, \bm{r}, \bm{k}; t^{\prime}, \bm{r}^{\prime}, \bm{k}^{\prime})&=p(t, \bm{r} ; t^{\prime}) \cdot
\delta(\bm{k}-\bm{k}^{\prime}) \cdot \delta(\bm{r}^{\prime}- \bm{r}(\tau)), \\
p(t, \bm{r} ; t^{\prime})&=\frac{\D \mathcal{G}(t^{\prime}; \bm{r}, t)}{\D t^{\prime}}\Big |_{t^{\prime}=t+\tau},
\end{align}
where $\mathcal{G}(t^{\prime}; \bm{r}, t)$ has been defined in Eq.~\eqref{eq:measure2} and the random life length $\tau=t^\prime - t$ satisfies $\tau \propto p(t, \bm{r} ; t^{\prime})$.  $(\bm{r}, \bm{k}) \to (\bm{r}^{\prime}, \bm{k}^{\prime})$ corresponds to a random walk. Combining with the independence assumption in Rule 5, we are able to define a probability measure $\Pi_{t, \bm{r}, \bm{k}}$ on $\Omega$ as follows
\begin{equation}\label{eq:Pi}
\Pi_{t, \bm{r}, \bm{k}}(\mone_E)= \int_{\Omega} \mone_E(\omega) \Pi_{t, \bm{r}, \bm{k}}(\D \omega) =\Pr(E),
\end{equation}
as well as a stochastic branching process $(\Omega, \Pi_{t, \bm{r}, \bm{k}})$. Moreover, from Eq.~\eqref{Def:probability_measure}, we can easily verify the following Markov property of the stochastic process $(\Omega, \Pi_{t, \bm{r}, \bm{k}})$
\begin{equation}\label{Def:Markov_property}
\Pi_{t, \bm{r}, \bm{k}} (XY) =\int_{\Omega_{X}}  X \Pi_{t+\tau, \eta_{t+\tau}}(Y) \Pi_{t, \bm{r}, \bm{k}}(\D \omega)
\end{equation}
{for any function $X$ in a measurable space $(\Omega_X, \mathcal{T}_{[t, t+\tau]} \otimes \mathcal{F}_{[t, t+\tau]})$ and any function $Y$ in $(\Omega_{Y}, \mathcal{T}_{[t+\tau, {+\infty})} \otimes \mathcal{F}_{[t+\tau, T]})$}, where $\mathcal{T}_{\mathcal{I}} $ is the $\sigma$-algebra on $\mathcal{I} \subset [0,+\infty)$, $\mathcal{F}_\mathcal{I}$ is the $\sigma$-algebra generated by $\eta_s$ with $s\in \mathcal{I}$,
and $\Omega=\Omega_X\times\Omega_Y$. That is, {\sl events observable before and after time $t+\tau$ are conditionally independent for given state  $\eta_{t+\tau}$}.

\begin{remark}
Since $\omega\in \Omega$ corresponds to a denumerable random sequence, the basic theorem of Kolmogorov (see Theorem 6.16, \cite{bk:Kallenberg2002}) ensures the existence and uniqueness of such a random process $(\Omega, \mathscr{B}_\Omega, \Pi_{t, \bm{r}, \bm{k}})$ with the probability measure $\Pi_{t, \bm{r}, \bm{k}}$ defined on the Borel extension $\mathscr{B}_\Omega$ of the cylinder sets on $\Omega$\cite{bk:Harris1963}.
\end{remark}

Next we need to define a signed measure valued function
$\mu: (\mathbb{R}^d \times \mathcal{K}, \mathscr{B}) \to \mathbb{R}$ through the particle weights and the frozen states, where $\mathscr{B} \in\mathscr{B}_{\Omega}$ and $\mathscr{B}_{\Omega}$ stands for the Borel cylinder sets on $\Omega$. According to Rule 4, the frozen state of a particle $\langle i_1 i_2 \cdots i_n \rangle$ is $(\bm{r}_{i_1 i_2 \cdots i_n}(T-t_{i_1 i_2 \cdots i_n}), \bm{k}_{i_1 i_2 \cdots i_n})$ with $t_{i_1 i_2 \cdots i_n}=t+\tau_0+ \cdots +\tau_{i_1 i_2 \cdots i_{n-1}}$, and the frozen state of $\langle 0 \rangle$ is $(\bm{r}_0(T-t), \bm{k}_0)$.

\begin{definition}\label{eq:exit}
Suppose $(\bm{r}_i, \bm{k}_i)$ is the starting state of a frozen particle $i$ in a given family history $\omega$, and let $\delta_{(\bm{r}, \bm{k})}$ mean the unit measure concentrated at  state $(\bm{r}, \bm{k})$.
Then we define the exit measure as follows
\begin{equation}
\mu=\sum_{i \in \mathcal{E}(\omega)} \phi_{i} \cdot \delta_{(\bm{r}_{i}(T-t_i), \bm{k}_{i})},
\end{equation}
where $\phi_i$ is the cumulative weight of particle $i$. For an object $i=\langle i_{1}i_{2}\cdots i_{n} \rangle$, $\phi_i$ is given by
\begin{equation}\label{eq:phii}
\phi_i=\phi_0 \cdot \zeta_{i_1}(\bm{r}_{i_{1}}) \cdot  \zeta_{i_2}(\bm{r}_{i_{1}i_{2}})\cdots \zeta_{i_{n-1}}(\bm{r}_{i_{1}i_{2}\cdots i_{n-1}}) \cdot \mone_{ \{ \bm{k}_{i_1} \in \mathcal{K}, \cdots, \bm{k}_{i_1 i_2 \cdots i_{n-1}} \in \mathcal{K} \}},
\end{equation}
where $\phi_0=1$ is
the initial weight of the ancestor
and the function $\zeta(\bm{r})$ has been defined in Eq.~\eqref{eq:zeta}. Moreover, for given $\omega$ and function $A(\bm{r}, \bm{k})$, we can further define a random integral on the point distribution
\begin{equation}\label{eq:muA}
\mu_A(\omega) = \int A(\bm{r}, \bm{k}) \mu(\D \bm{r} \times \D \bm{k}, \omega)=\sum_{i \in \mathcal{E}(\omega)} \phi_i \cdot A(\bm{r}_{i}(T-t_i), \bm{k}_{i}).
\end{equation}
To ensure a bounded weight, we require $\left|\zeta_{i}(\bm{r})\right|\leq 1$ for any $\bm{r}\in\mathbb{R}^d$ and $1\le i \le 2M+1$. The first moment of random function $\mu_A(\omega)$ is denoted by $\psi(\bm{r}, \bm{k}, t)$ which reads
\begin{equation}\label{eq:psi}
\psi(\bm{r}, \bm{k}, t)=\Pi_{t, \bm{r}, \bm{k}} (\mu_A)=\int_{\Omega} \mu_A(\omega) \Pi_{t, \bm{r}, \bm{k}}(\D \omega).
\end{equation}
\end{definition}

\begin{example}
Suppose the ancestor starts at $t=0$ carrying the initial weight $\phi=1$.
For the family history $\omega$ displayed in Fig.~\ref{fig:branching_tree},
the random integral ${\mu_A}(\omega)$ is
\begin{equation}
\begin{split}
{\mu_A}(\omega)&=\zeta(\bm{r}_{1})A(\bm{r}_1(T-\tau_0), \bm{k}_1)+\zeta(\bm{r}_{2})\zeta(\bm{r}_{21})A(\bm{r}_{21}(T-\tau_0-\tau_2), \bm{k}_{21})\\
&+\zeta(\bm{r}_{2})\zeta(\bm{r}_{22})A(\bm{r}_{22}(T-\tau_0-\tau_2), \bm{k}_{22})\\
&+\zeta(\bm{r}_{2})\zeta(\bm{r}_{23})\zeta(\bm{r}_{231})A(\bm{r}_{231}(T-\tau_0-\tau_2-\tau_{23}), \bm{k}_{231})\\
&+\zeta(\bm{r}_{2})\zeta(\bm{r}_{23})\zeta(\bm{r}_{232})A(\bm{r}_{232}(T-\tau_0-\tau_2-\tau_{23}), \bm{k}_{232})\\
&+\zeta(\bm{r}_{2})\zeta(\bm{r}_{23})\zeta(\bm{r}_{233})A(\bm{r}_{233}(T-\tau_0-\tau_2-\tau_{23}), \bm{k}_{233})\\
&+\zeta(\bm{r}_{3})A(\bm{r}_{3}(T-\tau_0), \bm{k}_3).
\end{split}
\end{equation}
\end{example}

In order to study the particle number in the branching
particle system with the family history $\omega$ starting from time $t$, we use a random function $Z(\omega, T-t)$ to stand for the total number of frozen particles at the final instant $T$.
In consequence,
the first moment of $Z(\omega, T-t)$ is
\begin{equation}
\mathbb{E} Z_{T-t}=\int_{\Omega} Z(\omega, T-t)\Pi_{t, \bm{r}, \bm{k}}(\D \omega),
\end{equation}
which also gives the expectation of the total number of alive
particles in time interval $[t, T]$, and should be finite (see
Theorem~\ref{th:finite}). This further means that $Z(\omega, T-t)$
is finite almost surely. As the easier case, the finiteness of
$\mathbb{E}Z_{T-t}$ for the constant auxiliary function is directly implied
from Theorem 13.1 and its corollary of Chapter VI
in~\cite{bk:Harris1963}.

\begin{theorem}
\label{th:finite}
Suppose the family history $\omega$ starts at time $t$ at state $(\bm{r}, \bm{k})$, and ends at $T$. Then $\mathbb{E}Z_{T-t}<\infty$ and as a consequence $\Pr(\{Z(\omega, T-t)< \infty\})=1$.
\end{theorem}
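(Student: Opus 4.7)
My plan is to reduce the almost-sure finiteness of $Z(\omega,T-t)$ to a moment bound and then obtain that bound by coupling the WBRW with a classical Markov branching process. Since $Z$ is $\mathbb{N}$-valued, $\mathbb{E}Z_{T-t}<+\infty$ immediately forces $\Pr(\{Z<\infty\})=1$, because otherwise $\Pr(\{Z=\infty\})>0$ would blow up the mean. So it suffices to prove the expectation bound $\mathbb{E}Z_{T-t}<+\infty$.

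When $\gamma(\bm{x})\equiv\gamma_0$, the process reduces to a Markov (age-independent) branching process with $\mathrm{Exp}(\gamma_0)$ lifetimes and a deterministic offspring count of exactly $N:=2M+1$ children per death; counting \emph{all} descendants regardless of whether their weight has been nullified by the indicator $\mone_{\{\bm{k}'\in\fk\}}$, the first-moment renewal equation yields $\mathbb{E}Z_s=\me^{(N-1)\gamma_0 s}=\me^{2M\gamma_0 s}$, exactly the growth rate announced in the introduction. This case is covered by Theorem 13.1 of Chapter VI in Harris, as already noted in the preceding remark.

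For a general nonnegative bounded $\gamma$, set $\gamma_0:=\|\gamma\|_\infty<+\infty$. From $\int_{t}^{t'}\gamma(\bm{r}(s-t))\,\dif s\le\gamma_0(t'-t)$, the survival tail $\exp(-\int_{t}^{t'}\gamma(\bm{r}(s-t))\,\dif s)$ dominates the $\mathrm{Exp}(\gamma_0)$ survival tail pointwise, so every lifetime in the WBRW stochastically dominates $\mathrm{Exp}(\gamma_0)$. I would then construct a coupling node by node along the Ulam--Harris family tree: assign to every label $\langle i_1\cdots i_n\rangle$ an independent seed $U\sim\mathrm{Uniform}(0,1)$, let the actual lifetime at that node be $\tau(U)=F^{-1}_{\bm{r}}(U)$ under the local hazard $\gamma(\bm{r}(\cdot))$, and let the dominating lifetime be $\widetilde\tau(U)=-\log(1-U)/\gamma_0$. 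Then $\widetilde\tau(U)\le\tau(U)$ pathwise, so every death in the WBRW before time $T$ has a counterpart occurring no later in the comparison tree; preserving the $N$-ary branching rule along both trees gives $Z(\omega,T-t)\le\widetilde Z(\omega,T-t)$ sample-wise. Taking expectations yields $\mathbb{E}Z_{T-t}\le\me^{2M\gamma_0(T-t)}<+\infty$, from which the first paragraph delivers $Z(\omega,T-t)<\infty$ almost surely.

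The main obstacle is lifting the single-lifetime domination to the whole (possibly infinite) random tree without disturbing the branching structure or the Markov property \eqref{Def:Markov_property}; this is handled by the standard Bellman--Harris construction on a common probability space indexed by the Ulam--Harris label set, attaching the independent seeds discussed above to each label so that offspring positions play no role in the comparison lifetime $\widetilde\tau(U)$. Apart from this bookkeeping the bound follows immediately from the constant-$\gamma_0$ case already granted by Harris's theorem.
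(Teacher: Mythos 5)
Your proposal is correct, but it reaches the moment bound by a genuinely different route from the paper. You argue by pathwise stochastic domination: since $\gamma(\bm{r})\le\gamma_0:=\|\gamma\|_\infty$, each survival probability $\me^{-\int_t^{t'}\gamma(\bm{r}(s-t))\,\D s}$ dominates $\me^{-\gamma_0(t'-t)}$, so a common-seed (inverse-CDF) coupling on the Ulam--Harris label set makes every lifetime in the WBRW at least as long as in a comparison Markov branching process with rate $\gamma_0$ and deterministic offspring number $2M+1$; deaths therefore occur no earlier in the comparison tree, giving $Z\le\widetilde Z$ sample-wise and $\mathbb{E}Z_{T-t}\le\me^{2M\gamma_0(T-t)}$. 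The paper instead works analytically at the level of the renewal densities: it majorizes each transition density $\D\mathcal{G}(t';\bm{r},u)/\D t'$ by $\tfrac{k}{2M+1}\,\D G(t')/\D t'$ for a large constant $k>(2M+1)\me^{\gamma_0 T}$, proves by induction that the birth-time tail of the $n$-th generation is bounded by $(\tfrac{k}{2M+1})^{n-1}G_{n-1}(T-u)$ in terms of convolution powers of the exponential distribution, and then sums the series $\sum_n k^{n-1}G_{n-1}(T)$ using a convergence lemma from Harris. Your coupling buys a sharper and more interpretable bound ($\me^{2M\gamma_0(T-t)}$, which the paper only recovers later in Theorem~\ref{th:exp} as the exact rate for constant $\gamma$), at the cost of having to construct the two trees on a common probability space without disturbing the law \eqref{Def:probability_measure} --- a step you correctly identify and resolve by the standard Bellman--Harris construction; the paper's convolution estimate avoids any coupling but produces a cruder constant. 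Both arguments use the same reduction from almost-sure finiteness to finiteness of the first moment, and both count all $2M+1$ offspring irrespective of the weight-nullifying indicator, so neither has a gap on that point.
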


\begin{proof}
. We define a random function $\mone_{i_1 i_2 \cdots i_n}(\omega)=1$ when the particle $\langle i_{1}i_{2}\cdots i_{n}\rangle$ appears in the family history $\omega$, otherwise $\mone_{i_1 i_2 \cdots i_n}(\omega)=0$. From Eq.~\eqref{eq:Pi} and Definition~\ref{def:frozen}, we have
\begin{equation}\label{eq:pi_1}
\Pi_{t, \bm{r}, \bm{k}}(1_{i_1 i_2 \cdots i_n}) = \int_{\Omega} \mone_{i_1 i_2 \cdots i_n}(\omega)  \Pi_{t, \bm{r}, \bm{k}}(\D \omega)=\Pr(\{t+\tau_0+\cdots+\tau_{i_1 i_2 \cdots i_{n-1}} < T \}).
\end{equation}

Let
\begin{equation}\label{eq:zbar0}
\bar{Z}(\omega, T-t)=1+\sum_{n=1}^{\infty} \sum_{i_1, \cdots, i_{n}=1}^{2M+1} 1_{i_1 i_2 \cdots i_n}(\omega),
\end{equation}
that corresponds to the number of particles born up to the final time $T$.
It is obvious that
\begin{equation}\label{eq:Zbar}
Z(\omega, T-t)\le \bar{Z}(\omega, T-t).
\end{equation}

For constant $\gamma_0$,
we introduce an exponential distribution
\begin{equation}
G(t^{\prime})=1-\me^{-\gamma_0 (t^{\prime}-t)},\quad
t^\prime\geq t,
\end{equation}
and define its $n$-th convolution by
\begin{equation}
G_0(t^\prime)=G(t^\prime), \quad G_{n}(t^{\prime})=\int_{0}^{t^\prime} G_{n-1}(t^{\prime}-u) \D G(u).
\end{equation}
It can be readily verified that
\begin{equation}
\frac{\D \mathcal{G}(t^{\prime}; \bm{r}, u)}{\D t^{\prime}} \le \frac{k}{ 2M+1} \cdot \frac{\D G(t^{\prime})}{\D t^{\prime}}, ~~\forall \, t^{\prime} \in [u, T], ~ \forall \,\bm{r}\in\mathbb{R}^d, ~ \forall \, u\in [0,T],
\end{equation}
holds for a sufficiently large integer $k$, e.g.,   $k>(2M+1)\me^{\gamma_0 T}$.

We first show by the mathematical induction that there exists a sufficient large integer $k$ and a sufficient large constant $\gamma_0>0$
such that
\begin{equation}\label{Eq:estimate}
\Pr(\{t+\tau_0+\cdots+\tau_{i_1 i_2 \cdots i_{n-1}}<T-u \}) \le (\frac{k}{2M+1})^{n-1} G_{n-1}(T-u), ~~~\forall u \in [0, T-t].
\end{equation}

For $n=1$, we only need $\gamma_0 \ge \max\{\gamma(\bm{r})\}$ and then have
\begin{equation}
\begin{split}
\Pr(\{t+\tau_0<T-u\})&=\frac{\D \mathcal{G}(t^{\prime}; \bm{r}, t)}{\D t^{\prime}}\Big |_{t^\prime=T-u}=1-\me^{-\int_{t}^{T-u} \gamma(\bm{r}(s-t)) \D s} \\
&\le 1- \me^{-\gamma_0(T-u-t)}=G_0(T-u).
\end{split}
\end{equation}

Assume Eq.~\eqref{Eq:estimate} is true for $n$.
Direct calculation shows
\begin{equation}
\begin{split}
&\Pr(\{t+\tau_0+\cdots+\tau_{i_1 i_2 \cdots i_{n}}<T-u\})=\Pi_{t, \bm{r}, \bm{k}}(\mone_{\{t+\tau_0+\cdots+\tau_{i_1 i_2 \cdots i_{n}}<T-u\} })\\
&=\int_{0}^{T-t-u}  \Pi_{t, \bm{r}, \bm{k}} (\mone_{\{t+\tau_0+\cdots+\tau_{i_1 i_2 \cdots i_{n-1}}<T-u-v\} } \mone_{\{\tau_{i_1 i_2 \cdots i_{n}}<v\}  }) \D v\\
&=\int_{0}^{T-t-u}  \Pi_{t, \bm{r}, \bm{k}} (\mone_{\{t+\tau_0+\cdots+\tau_{i_1 i_2 \cdots i_{n-1}}<T-u-v\} } \cdot \Pi_{ \sigma , \eta_{\sigma}}( \mone_{\{\tau_{i_1 i_2 \cdots i_{n}}<v\}  })) \D v\\
&\le \int_{0}^{T-t-u}  (\frac{k}{ 2M+1})^{n-1} G_{n-1}(T-u-v) \cdot \frac{k}{2M+1} \D G(v) = (\frac{k}{2M+1})^{n} G_{n}(T-u),
\end{split}
\end{equation}
which implies that Eq.~\eqref{Eq:estimate} holds for $n+1$,
where $\sigma$ is short for $t+\tau_0+\cdots+\tau_{i_1 i_2 \cdots i_{n-1}}$.

Finally, using Eqs.~\eqref{eq:pi_1} and \eqref{Eq:estimate} yields
\begin{equation}
\begin{split}
\int_{\Omega} \bar{Z}(\omega, T-t) \Pi_{t, \bm{r}, \bm{k}}(\D \omega) &= 1+\sum_{n=1}^{\infty} \sum_{i_1, \cdots, i_{n}=1}^{2M+1} \Pi_{t, \bm{r}, \bm{k}}(1_{i_1 i_2 \cdots i_n}) \\
&\le 1+(2M+1)\sum_{n=1}^{\infty} k^{n-1} G_{n-1}(T),
\end{split}
\end{equation}
and implies $\int_{\Omega} \bar{Z}(\omega, T-t) \Pi_{t, \bm{r}, \bm{k}}(\D \omega)$ is bounded for the infinite series is convergent (see Lemma 1 of the Appendix to Chapter VI in~\cite{bk:Harris1963}). Hence the proof is completed according to Eq.~\eqref{eq:Zbar}.
\end{proof}

Moreover, according to Definition~\ref{eq:exit} and
Theorem~\ref{th:finite}, we can directly show that
$\mu_A$ is integrable, say,
\begin{equation}\label{eq:mu_A_L1}
\int_{\Omega} |\mu_A(\omega)| \Pi_{t, \bm{r}, \bm{k}}(\D \omega) < \infty,
\end{equation}
provided that $A(\bm{r},\bm{k})$ is essentially bounded.
That is, both $\mu_A$ in Eq.~\eqref{eq:muA} and $\psi$ in Eq.~\eqref{eq:psi} are well defined.

With the above preparations, we begin to prove Eqs.~\eqref{eq:adjoint_probabilisitic} and \eqref{eq:inner_product_probabilisitic}.

\begin{theorem}
\label{th:psi}
The first moment $\psi(\bm{r}, \bm{k}, t)$ defined in Eq.~\eqref{eq:psi} equals to the solution of the adjoint equation~\eqref{Adjoint_renewal_equation}.
 \end{theorem}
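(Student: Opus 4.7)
The plan is to show that $\psi(\bm r,\bm k,t)=\Pi_{t,\bm r,\bm k}(\mu_A)$ satisfies the same integral equation \eqref{Adjoint_renewal_equation} as $\varphi$, and then to invoke uniqueness of that Fredholm equation of the second kind to conclude $\psi\equiv\varphi$. To derive the integral equation, I would condition on the fate of the ancestor particle $\langle 0\rangle$ using the dichotomy introduced in Rule 3/Rule 4: either $\tau_0\ge T-t$, in which case $\langle 0\rangle$ is the unique frozen particle and $\mu_A(\omega)=A(\bm r(T-t),\bm k)$, or $\tau_0<T-t$, in which case $\langle 0\rangle$ dies at time $t+\tau_0$ at state $(\bm r(\tau_0),\bm k)$ and gives birth to $2M+1$ offspring, each spawning an independent subfamily. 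The probability that the ancestor survives past $T$ is $1-\mathcal G(T;\bm r,t)=\me^{-\int_t^T\gamma(\bm r(s-t))\,\D s}$, which precisely matches the first term on the right-hand side of \eqref{Adjoint_renewal_equation}.

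On the event $\tau_0<T-t$, I would use the Markov property \eqref{Def:Markov_property} together with the independence built into Rule~5 to write the contribution of each subfamily as the conditional expectation $\Pi_{t+\tau_0,\eta_{t+\tau_0}}(\mu_A)=\psi(\bm r',\bm k'_{(i)},t+\tau_0)$ evaluated at the offspring's starting state, multiplied by the cumulative weight factor $\zeta_i(\bm r(\tau_0))$ inherited from Eq.~\eqref{eq:phii}. Integrating over the life-length density $p(t,\bm r;t')\,\D t'=\D\mathcal G(t';\bm r,t)$ and over the offspring momentum according to the transition densities $p_{2m-1},p_{2m},p_{2M+1}$ in \eqref{eq:p1}, and using the identity $\zeta_{2m-1}V^{-}_{w,m}/\xi_m=V^{-}_{w,m}/\gamma$, $\zeta_{2m}V^{+}_{w,m}/\xi_m=-V^{+}_{w,m}/\gamma$, $\zeta_{2M+1}=1$, the sum of the $2M+1$ offspring contributions collapses into
\begin{equation*}
\int_{\mathbb R^d}\D\bm r'\int_{\mathcal K}\D\bm k'\,\frac{\Gamma(\bm r',\bm k';\bm r(t'-t),\bm k)}{\gamma(\bm r(t'-t))}\,\psi(\bm r',\bm k',t'),
\end{equation*}
thanks to the definition of $\Gamma$ in \eqref{Def:kernel_function} and the anti-symmetry relation \eqref{anti_symmetry}. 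After rewriting $\D\mathcal G$ in the renewal-type form, this is exactly the second term of \eqref{Adjoint_renewal_equation} (equivalently, the expanded form \eqref{Adjoint_renewal_type_equation}).

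The final step is uniqueness. Under the standing assumption that $\mathcal B$ is bounded and $|\zeta_i|\le 1$, the kernel $K$ in Eq.~\eqref{Wigner_Fredholm_integral_form} (and its adjoint kernel in \eqref{Adjoint_renewal_equation}) defines a bounded Volterra-type operator in time on $L^\infty(\mathbb R^d\times\mathcal K\times[0,T])$; its Neumann series converges, and the fixed-point equation has a unique solution on $[0,T]$. Since $\varphi$ and $\psi$ both satisfy the same equation, they must coincide, which establishes \eqref{eq:adjoint_probabilisitic}; combining this with \eqref{eq:A_approx2} and the definition of the instrumental distribution $f_I$ immediately yields \eqref{eq:inner_product_probabilisitic}.

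The main obstacle will be the bookkeeping in the branching step: one has to justify that the exchange of the expectation $\Pi_{t,\bm r,\bm k}$ with the sum over $2M+1$ offspring and the time integration against $\D\mathcal G$ is legitimate. This rests on the integrability estimate \eqref{eq:mu_A_L1} (which in turn follows from Theorem~\ref{th:finite}) together with Fubini's theorem applied subfamily by subfamily, and on carefully identifying the transition densities $p_i$ with the pieces of $\Gamma/\gamma$. Once that identification is done cleanly, the renewal equation for $\psi$ drops out, and the rest is standard fixed-point theory.
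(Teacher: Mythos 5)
Your proposal follows essentially the same route as the paper's proof: you condition on whether the ancestor is frozen ($\tau_0\ge T-t$) or branches, use the Markov property \eqref{Def:Markov_property} and the independence of subfamilies from Rule~5 to express each offspring's contribution as $\zeta_i(\bm r(\tau_0))\,\psi(\bm r_i,\bm k_i,t+\tau_0)$, and identify the transition densities with the pieces of $\Gamma/\gamma$ to recover the kernel term of \eqref{Adjoint_renewal_equation}. The only cosmetic difference is that you justify uniqueness by a Volterra/Neumann-series argument where the paper simply invokes the Fredholm alternative after the proof; the substance is the same.
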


\begin{proof}
Let $E=\left\{\tau_0: t+\tau_0\ge T\right\}\cap\Omega$ correspond to the case in which the particle travels to $(\bm{r}(T-t), \bm{k})$ and then is frozen. The probability of such event is $1-\mathcal{G}(T; \bm{r}, t)$ by Rule 4. Then the remaining case is denoted by $E^{c}=\left\{\tau_0: t+\tau_0<T\right\}\cap\Omega$. Accordingly, from Eq.~\eqref{eq:psi},
we have
\begin{equation}\label{Eq_75}
\psi (\bm{r}, \bm{k}, t)=\int_{E} \mu_A(\omega) \Pi_{t, \bm{r}, \bm{k}}(\D \omega)+\int_{E^{c}} \mu_A(\omega) \Pi_{t, \bm{r}, \bm{k}}(\D \omega),
\end{equation}
and direct calculation gives
\begin{equation}
\int_{E} \mu_A (\omega) \Pi_{t, \bm{r}, \bm{k}}(\D \omega)=\me^{-\int_{t}^{T} \gamma(\bm{r}(s-t)) \D s} A(\bm{r}(T-t), \bm{k}),
\end{equation}
which recovers the first right-hand-side term of Eq.~\eqref{Adjoint_renewal_equation}.
When event $E^{c}$ occurs,
it indicates that $2M+1$ offsprings are generated.
Notice that $\omega=(Q_{0}; \omega_{1}; \omega_{2}; \cdots; \omega_{2M+1})$
and thus we have
\begin{align}\label{Eq_79}
\mu_A(\omega)&=\sum_{i=1}^{2M+1} \phi_i \cdot \mu_A(\omega_{i})= \sum_{i=1}^{2M+1} \zeta_{i}(\bm{r}(\tau_0))\cdot \mone_{ \{ \bm{k}_i \in \mathcal{K} \} } \cdot \mu_A(\omega_{i})
\end{align}
where we have applied Rule 3.

Substitute Eq.~\eqref{Eq_79} into the second right-hand-side term of Eq.~\eqref{Eq_75} leads to
\begin{equation}\label{eq:Ec}
\int_{E^{c}} \mu_A(\omega) \Pi_{t, \bm{r}, \bm{k}}(\D \omega)=\sum_{i=1}^{2M+1} \int_{E^{c} \cap \{ \bm{k}_i \in \mathcal{K} \} }\zeta_i(\bm{r}(\tau_0))  \left\{ \int_{{\Omega_i}} \mu_A(\omega_i) \Pi_{t+\tau_0, \bm{r}_i, \bm{k}_i}(\D \omega_i)\right\}\Pi_{t, \bm{r}, \bm{k}}(\D \omega).
\end{equation}
where we have used the Markov property~\eqref{Def:Markov_property}
as well as the mutual independence among the subfamilies inherited in Rule 5.

Finally, by the definition \eqref{eq:psi}, we have
\begin{equation}
\int_{{\Omega_i}} \mu_A(\omega_{i}) \Pi_{t+\tau_0, \bm{r}_{i}, \bm{k}_{i}}(\D \omega_{i})=\psi(\bm{r}_{i}, \bm{k}_{i}, t+\tau_0), \quad i=1,\cdots, 2M+1.
\end{equation}
Then the first right-hand-side term of Eq.~\eqref{eq:Ec} becomes
\begin{equation}
\begin{split}
&\int_{E^{c} \cap \{ \bm{k}_1 \in \mathcal{K} \} } \zeta_1(\bm{r}(\tau_0))  \psi(\bm{r}_1, \bm{k}_1, t+\tau_0) \Pi_{t, \bm{r}, \bm{k}}(\D \omega)=\int_{t}^{T} \D t_1 \me^{-\int_{t}^{t_1} \gamma(\bm{r}(s-t)) \D s} \\
& \times\int_{\mathbb{R}^d} \D \bm{r}_1 \int_{\mathcal{K}}\D \bm{k}_1\left\{V_{w}^{-}(\bm{r}(t_1-t), \bm{k}-\bm{k}_{1})\cdot \delta(\bm{r}(t_1-t)-\bm{r}_1)\right\}\psi(\bm{r}_{1}, \bm{k}_1, t_1),
\end{split}
\end{equation}
where we have let $t+\tau_0 \to t_1$ and used Eqs.~\eqref{eq:zeta} and \eqref{eq:p1}.
The remaining right-hand-side term of Eq.~\eqref{eq:Ec} terms can be treated in a similar way,
and putting them together recovers the second right-hand-side term of Eq.~\eqref{Adjoint_renewal_equation}.
We complete the proof.
\end{proof}

So far we have proven the existence of the solution of the adjoint equation~\eqref{Adjoint_renewal_equation},
while its uniqueness can be deduced by the Fredholm alternative. It remains to validate Eq.~\eqref{eq:inner_product_probabilisitic}. To this end, we let
$\nu$ be a probability measure on the Borel sets of $\mathbb{R}^d \times \mathcal{K}$ with the density $f_I(\bm{r}, \bm{k}, 0)$, then it yields a unique product measure $\nu \otimes \Pi_{t, \bm{r}, \bm{k}}$.
Consequently, from Eq.~\eqref{eq:A_approx2}, we obtain
\begin{equation}
\begin{split}
\langle \hat{A} \rangle_{T} = &\iint_{\mathbb{R}^d \times \mathcal{K}} \varphi(\bm{r}, \bm{k}, 0) \cdot \frac{f(\bm{r}, \bm{k}, 0)}{f_I(\bm{r}, \bm{k}, 0)} \cdot  f_I(\bm{r}, \bm{k},0) ~\D \bm{r} \D \bm{k}\\
=&\iint_{\mathbb{R}^d \times \mathcal{K}}
\left[\int_\Omega \mu_A(\omega) \Pi_{t_l, \bm{r}, \bm{k}} (\D \omega)\right]
\cdot \frac{f(\bm{r}, \bm{k}, 0)}{f_I(\bm{r}, \bm{k}, 0)} ~  \nu(\D \bm{r}\times \D \bm{k}) \\
=&\iint_{\mathbb{R}^d \times \mathcal{K} \times \Omega} \mu_A (\omega)  \cdot \frac{f(\bm{r}, \bm{k}, 0)}{f_I(\bm{r}, \bm{k}, 0)} ~ \nu\otimes\Pi_{0, \bm{r}, \bm{k}}(\D \bm{r} \times \D \bm{k} \times \D \omega)  \\
=&\mathbb{E}_{f_I}\left[ \Pi_{t_l, \bm{r}_\alpha, \bm{k}_\alpha} \left ( s_\alpha(0) \cdot H(0) \cdot \sum_{i\in\mathcal{E}(\omega_\alpha)} \phi_{i,\alpha} \cdot A(\bm{r}_{i,\alpha},\bm{k}_{i,\alpha}) \right) \right],
\end{split}
\end{equation}
and thus fully recover Eq.~\eqref{eq:inner_product_probabilisitic} (noting that $\mathcal{E}_\alpha$ and $\mathcal{E}(\omega_\alpha)$ denote the same set).

In particular, we set $t=0$, $\phi_0 = 1$, $A(\bm{r}, \bm{k})\equiv 1$,
and then it's easy to verify that $\varphi (\bm{r}, \bm{k}, t) \equiv 1$ is the unique solution of Eq.~\eqref{Adjoint_renewal_equation}
using the mass conservation~\eqref{eq:mass1}.
By Theorem~\ref{th:psi}, the first moment of $\mu_A$ also equals to 1, i.e.,
\begin{equation}\label{stochastic_mass_conservation}
1\equiv \varphi (\bm{r}, \bm{k}, 0)=\int_\Omega \mu_A(\omega)
~ \Pi_{0,\bm{r},\bm{k}}(\D\omega)=\int_\Omega
(\sum_{i\in\mathcal{E}(\omega)}\phi_i) ~ \Pi_{0,\bm{r},\bm{k}}(\D\omega),
\end{equation}
and then it further implies
\begin{equation}\label{eq:mass_AT}
\int_{\mathbb{R}^d} \D \bm{r} \int_{\mathcal{K}} \D \bm{k} ~ f(\bm{r}, \bm{k}, T) =  \int_{\mathbb{R}^d} \D \bm{r} \int_{\mathcal{K}} \D \bm{k} ~ f(\bm{r}, \bm{k}, 0), \quad \forall\,T\geq 0,
\end{equation}
due to Eqs.~\eqref{eq:AT_fT} and \eqref{eq:important},
which is nothing but the mass conservation law~\eqref{eq:mass_k_truncated}.
Furthermore, for such special case,
we can show in Theorem \ref{Thm_3} that $\sum_{i\in\mathcal{E}(\omega)}\phi_i=1$
is almost sure for any family history $\omega$,
not just the first moment as shown in Eq.~\eqref{stochastic_mass_conservation}.
It implies that any estimator
using finite number of super-particles is still able to preserve the mass conservation with probability $1$.
\begin{equation}
\int_{\mathcal{R}^d \times \mathcal{K}} f(\bm{r}, \bm{k}, T) \D \bm{r} \D \bm{k} \approx \sum_{\alpha} \mu_A(\omega_\alpha) \cdot w_{\alpha}({0})=\sum_{\alpha} w_{\alpha}({0})=1,
~ a.s.
\end{equation}

\begin{theorem}[Mass conservation]\label{Thm_3}
Suppose that the ancestor particle starts at $t=0$ and carries a weight $\phi_0=1$.
Then we have
\begin{equation}
\Pr (E)=\int_{{\Omega}}  \mone_{E}(\omega) \Pi_{0, \bm{r}, \bm{k}} (\D \omega)=1,
\end{equation}
where the event $E$ is given by
\begin{equation}
E=\{\omega \in \Omega: \sum_{i \in \mathcal{E}(\omega)} \phi_i = 1 \}.
\end{equation}
\end{theorem}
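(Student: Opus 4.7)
The plan is to induct on $N(\omega)$, the total number of branching events in the family history, which is finite $\Pi_{0,\bm{r},\bm{k}}$-a.s.\ as a direct consequence of Theorem~\ref{th:finite}. The base case $N(\omega)=0$ is immediate: the ancestor is frozen at birth, so $\mathcal{E}(\omega)=\{\langle 0\rangle\}$ and $\sum_i \phi_i = \phi_0 = 1$.

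For the inductive step, I would decompose $\omega=(Q_0;\omega_1;\ldots;\omega_{2M+1})$ via Rule~5. Each subfamily $\omega_l$ is started at $(\bm{r}(\tau_0),\bm{k}_l)$ at time $\tau_0$ with $N(\omega_l)<N(\omega)$, and by the multiplicative structure of~\eqref{eq:phii} the inductive hypothesis applied subfamily-by-subfamily gives
\[
\sum_{i\in\mathcal{E}(\omega)}\phi_i = \sum_{l=1}^{2M+1}\zeta_l(\bm{r}(\tau_0))\,\mone_{\{\bm{k}_l\in\mathcal{K}\}}.
\]
Splitting off the no-scattering child $l=2M+1$ (for which $\zeta_{2M+1}=1$ and $\bm{k}_{2M+1}=\bm{k}\in\mathcal{K}$), and using $\zeta_{2m-1}=-\zeta_{2m}=\xi_m/\gamma$, the proof reduces to the pairwise local conservation
\[
\mone_{\{\bm{k}_{2m-1}\in\mathcal{K}\}}=\mone_{\{\bm{k}_{2m}\in\mathcal{K}\}}\quad \Pi_{0,\bm{r},\bm{k}}\text{-a.s., for each } m=1,\ldots,M,
\]
conditional on the ancestor branching. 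This is where the antisymmetry~\eqref{anti_symmetry} enters: I would draw a single $\Delta\bm{k}\sim V^+_{w,m}/\xi_m$ on $2\mathcal{K}$ and couple the pair by setting $\bm{k}_{2m}=\bm{k}-\Delta\bm{k}$ and $\bm{k}_{2m-1}=\bm{k}+\Delta\bm{k}$; the marginals then agree with~\eqref{eq:ksample1} because $V^+_{w,m}(\bm{r},\bm{k})=V^-_{w,m}(\bm{r},-\bm{k})$.

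The main obstacle is verifying this pairwise local conservation rigorously: $\mathcal{K}$ is centered at the origin while $\bm{k}\pm\Delta\bm{k}$ is symmetric about $\bm{k}$, so the two indicators need not coincide under an arbitrary coupling. To close this gap I would examine the support of $V^+_{w,m}$ carefully---recalling that the normalization $\xi_m$ in~\eqref{eq:normalizing} is computed over $2\mathcal{K}$ and that $\bm{k}\in\mathcal{K}$---to confirm that the admissible $\Delta\bm{k}$ automatically keep both $\bm{k}\pm\Delta\bm{k}$ inside $\mathcal{K}$. Should this direct argument fail, a fallback is to derive a renewal-type equation for the second moment $\Pi_{0,\bm{r},\bm{k}}\big[(\sum_i\phi_i)^2\big]$ analogous to Theorem~\ref{th:psi}, show it equals $1$, and conclude $\operatorname{Var}(\sum_i\phi_i)=0$, so that the almost-sure identity follows from the first-moment result~\eqref{stochastic_mass_conservation}.
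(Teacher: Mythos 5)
Your overall skeleton---induction on the size of the family tree, the decomposition $\omega=(Q_0;\omega_1;\ldots;\omega_{2M+1})$, the multiplicative structure of \eqref{eq:phii}, and the reduction to the pairwise cancellation $\zeta_{2m-1}=-\zeta_{2m}$---is exactly the paper's argument; the paper merely phrases the induction as $\Pr(E_n)=\Pr(E_n^{\ast})$ with $E_n^{\ast}=\{Z(\omega,T-t)\le 2n+1\}$, uses the conditional independence of the subfamilies to multiply the conditional probabilities, and passes to the limit by monotone convergence. The genuine gap is precisely the ``main obstacle'' you flag: the identity for $\sum_{i\in\mathcal{E}(\omega)}\phi_i$ after one branching step silently requires $\mone_{\{\bm{k}_{2m-1}\in\mathcal{K}\}}=\mone_{\{\bm{k}_{2m}\in\mathcal{K}\}}$, and the paper does not prove this either---it states, in the remark immediately following the proof, that the argument rests on the \emph{latent assumption} that the two opposite-sign offspring are ``generated in pair,'' i.e.\ are either both retained or both discarded.

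Neither of your two proposed repairs closes this gap. The coupling $\bm{k}_{2m}=\bm{k}-\Delta\bm{k}$, $\bm{k}_{2m-1}=\bm{k}+\Delta\bm{k}$ with a single $\Delta\bm{k}\sim V^{+}_{w,m}/\xi_m$ does reproduce the marginals of \eqref{eq:ksample1} via \eqref{anti_symmetry}, but it does not equalize the indicators: take $d=1$, $\mathcal{K}=[-L,L]$, $k=L/2$ and $\Delta k=0.6L$, which is admissible because $V^{\pm}_{w,m}$ is supported on essentially all of $2\mathcal{K}$ (see e.g.\ the Gaussian kernel \eqref{eq:gp_vw}, whose positive part occupies half of every neighborhood in $k$); then $k+\Delta k\notin\mathcal{K}$ while $k-\Delta k\in\mathcal{K}$. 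So the hoped-for support argument fails whenever $\bm{k}$ is not well inside $\mathcal{K}$. The second-moment fallback fails for the same reason: without the pairing convention, the event that exactly one member of a pair is rejected has positive probability, and on it $\sum_{i}\phi_i=1\mp\xi_m(\bm{r}(\tau_0))/\gamma(\bm{r}(\tau_0))\neq 1$, so $\operatorname{Var}\bigl(\sum_i\phi_i\bigr)>0$ and the second moment cannot equal $1$; a true statement cannot be proved false, but here the unmodified statement is actually false pathwise. The correct resolution is the one the paper adopts by fiat: build into the branching rule that offspring $2m-1$ and $2m$ are created or annihilated together (equivalently, replace the two indicators by a common one). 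Under that convention your pathwise induction goes through verbatim and coincides with the paper's proof.
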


\begin{proof}
Since $Z(\omega, T-t)$ only takes odd values,
it suffices to take
\begin{align}
E_{n}&=\{\omega \in \Omega: \sum_{i \in \mathcal{E}(\omega)} \phi_{i}= 1; Z(\omega, T-t) \leq 2n+1\}, \\
E^{\ast}_{n}&=\{\omega \in \Omega: Z(\omega, T-t) \leq 2n+1\},
\end{align}
and it is easy to see $\Pr(E_0)=\Pr(E_0^{\ast})$.
In the remaining part of the proof, we will omit $\omega \in \Omega$ for brevity.  Assume that the statement that $\Pr(E_k)=\Pr( E^{\ast}_k)$ is true
for $0\leq k\leq n$, $\forall t\in [0, T]$. We show below by the mathematical induction that it still holds for $k=n+1$.

From $\omega=(Q_{0};\omega_{1}; \cdots; \omega_{2M+1})$,
it can be easily verified that
\begin{equation}
Z(\omega, T-t)=\sum_{l=1}^{2M+1} Z(\omega_l, T-t-\tau_0)
\end{equation}
thus we have
\begin{equation}
Z(\omega_l, T-t-\tau_0)< Z(\omega, T-t), ~~ \forall \, l\in\{1,2,\cdots, 2M+1\},
\end{equation}
implying
\begin{equation}\label{condition_1}
E^{\ast}_{n+1} \subset \{Z(\omega_l, T-t-\tau_0) \le 2n+1\},
~~ \forall \, l\in\{1,2,\cdots, 2M+1\}.
\end{equation}
Furthermore, since $ \zeta_{2m-1}(\bm{r}) =  -\zeta_{2m}(\bm{r})$, it yields
\begin{equation}
\sum_{i \in \mathcal{E}(\omega)} \phi_{i}=  \sum_{m=1}^{M} \zeta_{2m-1}(\bm{r}(\tau_0))  \cdot [ \sum_{i \in \mathcal{E}(\omega_{2m-1})} \phi_{i}- \sum_{i \in \mathcal{E}(\omega_{2m})} \phi_{i} ]+\sum_{i \in \mathcal{E}(\omega_{2M+1})} \phi_{i},
\end{equation}
we obtain
\begin{equation}\label{condition_2}
\bigcap_{l=1}^{2M+1} \{\sum_{i \in \mathcal{E}(\omega_l)} \phi_{i}=1 \}   \subset \{\sum_{i \in \mathcal{E}(\omega)} \phi_{i}=1\}.
\end{equation}

Combining Eqs.~\eqref{condition_1} and \eqref{condition_2} with
the conditionally independence of $\omega_l$ yields
\begin{align*}
\frac{\Pr( E_{n+1})}{\Pr(E^{\ast}_{n+1})} &\ge \Pr(\bigcap_{l=1}^{2M+1} \{\sum_{i \in \mathcal{E}(\omega_l)} \phi_{i}=1\} \big | E^{\ast}_{n+1} )=\prod_{l=1}^{2M+1} \Pr( \{\sum_{i \in \mathcal{E}(\omega_l)}  \phi_{i}=1 \} \big | E^{\ast}_{n+1} )\\
& =\prod_{l=1}^{2M+1} \Pr(  \{\sum_{i \in \mathcal{E}(\omega_l)}  \phi_{i}=1, Z(\omega, T-t) \le 2n+3 \} ) / \Pr(E_{n+1}^{\ast})=1,
\end{align*}
where the induction hypothesis is applied in the last line,
and thus $\Pr(E_{n+1}) \ge \Pr(E^{\ast}_{n+1})$.
Accordingly, we have $\Pr(E_{n+1}) = \Pr(E^{\ast}_{n+1})$
for it is obvious that $\Pr(E_{n+1}) \le \Pr(E^{\ast}_{n+1})$.

Finally, according to the fact that  $E_0 \subset E_1 \subset \cdots E_n \subset E_{n+1} \cdots \subset E$, we have
\begin{equation}
\Pr(E)=\lim_{n \to +\infty} \Pr(E_{n}) =1,
\end{equation}
due to the monotone convergence theorem. Hence we complete the proof by setting $t=0$.
\end{proof}

In the proof of Theorem~\ref{Thm_3},
the latent assumption, namely,
\emph{two particles carrying the same weight but opposite sign must
be generated in pair},
is required to conserve numerically the mass.

Now we turn to estimate the growth rate of particles in the branching system.
For the constant auxiliary function $\gamma(\bm{r}) \equiv \gamma_0$, the random life-length $\tau$ of a particle starting at time $t$ is characterized by an exponential distribution
\begin{equation}\label{Def:exponential_distribution_fun}
G(t^{\prime}) = \Pr(\tau<t^{\prime}-t)=1-\me^{-\gamma_0 (t^{\prime}-t)}, ~~ t^{\prime} \ge t.
\end{equation}
In this case, the growth of particle number has been thoroughly
studied in the literature\cite{bk:Harris1963,KosinaNedjalkovSelberherr2004} and we are able to obtain a simple
calculation formula of the particle number as shown in
Theorem~\ref{th:exp}.

\begin{theorem}\label{th:exp}
Suppose the family history $\omega$ starts at $t=0$ and the constant
auxiliary function $\gamma(\bm{r}) \equiv \gamma_0$ is adopted. Then
the expectation of the total number of frozen particles in time
interval $[0, T]$ is
\begin{equation}\label{particle_number_formula}
\mathbb{E}Z_T=\me^{2M\gamma_0 T}.
\end{equation}
\end{theorem}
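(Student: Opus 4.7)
The plan is to derive and solve a renewal equation for $m(T) := \mathbb{E}Z_T$. First I would condition on the life-length $\tau_0$ of the ancestor. Under constant $\gamma(\bm{r})\equiv\gamma_0$ the measure $\mathcal{G}$ defined in Eq.~\eqref{eq:measure2} degenerates to the exponential distribution~\eqref{Def:exponential_distribution_fun}, so $\tau_0\sim\mathrm{Exp}(\gamma_0)$ independently of the spatial coordinates. If $\tau_0\geq T$, which occurs with probability $\me^{-\gamma_0 T}$, the ancestor alone is frozen and contributes $Z_T=1$. Otherwise the ancestor dies at some instant $s<T$ and, by Rule 3, produces exactly $2M+1$ offspring, each of which (by Rule 5 and the Markov property~\eqref{Def:Markov_property}) initiates an independent subfamily distributed as a fresh copy of the process over the remaining horizon $T-s$.

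By Theorem~\ref{th:finite} we already know $m(T)<\infty$ for every finite $T$, which justifies interchanging expectation with the finite sum of offspring contributions. The additivity $Z_T=\sum_{l=1}^{2M+1} Z(\omega_l,T-\tau_0)$ on the event $\{\tau_0<T\}$ together with linearity of expectation then yields the renewal equation
\begin{equation}
m(T) = \me^{-\gamma_0 T} + (2M+1)\gamma_0\int_0^T \me^{-\gamma_0 s}\, m(T-s)\, \dif s.
\end{equation}

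Next I would reduce this to an ODE. Multiplying through by $\me^{\gamma_0 T}$ and substituting $u=T-s$ gives
\begin{equation}
\me^{\gamma_0 T} m(T) = 1 + (2M+1)\gamma_0\int_0^T \me^{\gamma_0 u} m(u)\, \dif u.
\end{equation}
Setting $F(T):=\me^{\gamma_0 T} m(T)$, the right-hand side shows that $F$ is absolutely continuous with $F'(T)=(2M+1)\gamma_0 F(T)$ and $F(0)=1$. Integrating gives $F(T)=\me^{(2M+1)\gamma_0 T}$, whence
\begin{equation}
m(T) = \me^{2M\gamma_0 T},
\end{equation}
as claimed.

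There is essentially no serious obstacle here beyond routine bookkeeping: the memorylessness of the exponential life-length is precisely what makes the renewal kernel of convolution type and enables the reduction to a first-order linear ODE, while Theorem~\ref{th:finite} and Eq.~\eqref{eq:mu_A_L1} supply the integrability needed to exchange expectation with the finite offspring sum. As an alternative sanity check, one may recognize the total-population process as a classical continuous-time Markov branching process with deterministic offspring number $2M+1$ and death rate $\gamma_0$; the standard Malthusian formula $\mathbb{E}N_T=\me^{\gamma_0(\mathbb{E}\xi-1)T}$ immediately reproduces $\me^{2M\gamma_0 T}$, consistent with the derivation above.
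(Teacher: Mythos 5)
Your proof is correct and follows essentially the same route as the paper: both rest on the renewal equation $m(T)=1-G(T)+(2M+1)\int_0^T m(T-u)\,\dif G(u)$ with $G(t)=1-\me^{-\gamma_0 t}$. The only difference is cosmetic — the paper cites Harris (Theorem 15.1, Chapter VI) for the renewal equation and verifies $\me^{2M\gamma_0 T}$ by substitution, whereas you derive the equation by conditioning on $\tau_0$ and solve it by reduction to a first-order ODE.
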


\begin{proof}
According to Theorem 15.1 of Chapter VI in~\cite{bk:Harris1963}, the expectation $\mathbb{E}Z_{t^{\prime}}$ satisfies the following renewal integral equation
\begin{equation}\label{renewal_equation_N}
\mathbb{E}Z_{t^{\prime}}=1-G(t^{\prime})+(2M+1) \int_{0}^{t^{\prime}} \mathbb{E}Z_{t^{\prime}-u} \D G(u), ~~ \mathbb{E}Z_0=1.
\end{equation}
We substitute Eq.~\eqref{Def:exponential_distribution_fun} into
Eq.~\eqref{renewal_equation_N} and then can easily verify that $\mathbb{E}Z_{t^{\prime}}=\me^{2M\gamma_0 t^{\prime}}$  is the solution.
The proof is finished.
\end{proof}

Considering the fact that randomly generated $\bm{k}^{\prime}$ may be rejected  in numerical application according to the indicator function in Eq.~\eqref{eq:phii}, we can modify Eq.~\eqref{renewal_equation_N} by replacing $2M+1$ with $2\alpha_0 M+1$, with $\alpha_0$ the average acceptance ratio of $\bm{k}^{\prime}$. In consequence, the modified expectation of  total particle number is $\mathbb{E}Z_{T} = \me^{2\alpha_0 M \gamma_0 T}$.

Theorem~\ref{th:exp} also provides an upper bound of $\mathbb{E}Z_T$ when the variable auxiliary function  satisfies $\gamma(\bm{r}) \le \gamma_0$. Whatever the auxiliary function is, it is clear that the particle number will grow exponentially.  \emph{To suppress the particle number, we can either decrease the parameter $\gamma_0$ or choose a smaller final time $T$}.

Finally, suppose we would like to evolve the branching particle system until the final time $T$. Usually, there are two ways. One is to evolve the system until each particle is frozen at the final time $T$ in a single step. The other is to divide $T$ into
$0=t_0<t_{1}<\cdots <t_{n-1}<t_{n}=T$ with $t_n=n\Delta t$,
and then we evolve the system successively in $n$ steps.
However, the following theorem tells us that both produce the same $\mathbb{E}Z_T$.

\begin{theorem}[Theorem 11.1 of Chapter VI in~\cite{bk:Harris1963}] \label{th:G(t)}
Suppose $G(t)=1-\me^{-\gamma_0 t}$. Then $Z(\omega, t)$ is a Markov branching process. In addition, $Z(\omega, \Delta t), Z(\omega, 2\Delta t), \cdots $ is a Galton-Watson process.
\end{theorem}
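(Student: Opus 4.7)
The proof hinges on the memoryless property of the exponential distribution. First I would record the elementary fact that if $\tau\sim\mathrm{Exp}(\gamma_0)$ then, conditional on $\{\tau>s\}$, the residual lifetime $\tau-s$ is again $\mathrm{Exp}(\gamma_0)$, independent of $s$. Since the hypothesis $G(t)=1-\me^{-\gamma_0 t}$ precisely makes each individual life-length exponential with rate $\gamma_0$, every particle alive at an arbitrary time $t_0$ has, conditionally on its survival, a fresh $\mathrm{Exp}(\gamma_0)$ remaining lifetime, regardless of how long it has already lived. This is the single ingredient that upgrades the age-dependent construction of Section~\ref{sec:branching:analysis} into a genuine Markov branching process in the sense of Chapter~V of \cite{bk:Harris1963}.

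Next I would verify the Markov property for the counting process $Z(\omega,t)$. Fix $0\leq t_1<\cdots<t_k<t$ and $s\geq 0$, and show that the conditional law of $Z(\omega,t+s)$ given $\sigma(Z(\omega,t_1),\ldots,Z(\omega,t_k),Z(\omega,t))$ depends only on $Z(\omega,t)$. By the product form~\eqref{Def:probability_measure} of $\Pi_{t,\bm{r},\bm{k}}$ combined with~\eqref{Def:Markov_property}, the future evolution of each alive particle is determined by (i) its residual lifetime, which by the memorylessness just established is $\mathrm{Exp}(\gamma_0)$ irrespective of the past, (ii) the deterministic offspring number $2M+1$ emitted at each death, and (iii) the mutual independence of subfamilies asserted in Rule~5. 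Because $G$ is time-homogeneous and the offspring count is non-random, these three items together identify $Z(\omega,\cdot)$ as a homogeneous Markov branching process.

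For the Galton--Watson claim, I would set $\zeta_n:=Z(\omega,n\Delta t)$ and view each alive particle at time $n\Delta t$ as the root of an independent sub-history on the slab $[n\Delta t,(n+1)\Delta t]$. By the Markov property just proved (applied at the deterministic time $n\Delta t$), plus Rule~5, these $\zeta_n$ subtrees are conditionally independent given $\zeta_n$; by time-homogeneity of $G$, they are identically distributed. Hence
\begin{equation*}
\zeta_{n+1}=\sum_{j=1}^{\zeta_n}\eta_{j}^{(n)},
\end{equation*}
where $\{\eta_j^{(n)}\}$ are i.i.d.\ copies of the number of descendants alive at time $\Delta t$ of a single particle started at time $0$. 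This is exactly the Bienaym\'e--Galton--Watson recursion, and the offspring law is $\mathcal{L}(Z(\omega,\Delta t))$ under $\Pi_{0,\bm{r},\bm{k}}$ with a single ancestor.

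The main obstacle is conceptual rather than computational: the particles in this paper carry continuous phase-space coordinates $(\bm{r},\bm{k})$, and in general such side information can destroy the Markovianity of the pure count $Z$. One must therefore verify carefully that when $\gamma(\bm{r})\equiv\gamma_0$ is constant the death clock decouples completely from the spatial trajectory, so that the offspring distribution for $Z$ does not depend on the $(\bm{r},\bm{k})$-coordinates of the alive particles and only the integer count is needed to predict the future. If $\gamma$ were position-dependent, this decoupling would fail, which is precisely why the conclusion is stated under the specific hypothesis $G(t)=1-\me^{-\gamma_0 t}$.
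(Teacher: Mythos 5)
The paper offers no proof of this statement at all: it is imported verbatim as Theorem 11.1 of Chapter VI of Harris's monograph, and the text simply cites that reference before drawing the consequence $\mathbb{E}Z_{n\Delta t}=\beta^n$. Your proposal therefore cannot coincide with ``the paper's argument,'' but it is a correct, self-contained reconstruction of the standard proof, and it is essentially the one Harris gives: the lack-of-memory property of the exponential law is exactly what promotes the age-dependent process to a Markov branching process, and the discrete skeleton then satisfies the Bienaym\'e--Galton--Watson recursion $\zeta_{n+1}=\sum_{j=1}^{\zeta_n}\eta_j^{(n)}$ with offspring law $\mathcal{L}(Z(\omega,\Delta t))$ because the residual lifetimes of the particles alive at $n\Delta t$ are again i.i.d.\ $\mathrm{Exp}(\gamma_0)$ and the subfamilies are independent by Rule~5. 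Your closing observation is also the right one to make in the context of this paper: the count $Z$ is Markov only because the constant $\gamma_0$ decouples the death clock from the phase-space trajectory $(\bm{r},\bm{k})$; for a position-dependent $\gamma(\bm{r})$ the integer count alone is not a Markov state, which is precisely why the paper can only assert an upper bound $\mathbb{E}Z_T\leq \me^{2M\gamma_0 T}$ in that case. The one point worth tightening is the interpretation of $Z$: in the paper $Z(\omega,T-t)$ counts the particles \emph{frozen} at the terminal instant, which coincides with the number of particles alive on $[t,T]$, so your reading of $Z(\omega,t)$ as the alive count is the intended one, but you should say so explicitly since the freezing mechanism is how the paper defines the random variable.
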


We recall that for a Galton-Watson model, if $\mathbb{E}Z_{\Delta t}=\beta$, then $\mathbb{E}Z_{n\Delta t}=\beta^{n}$. From Eq.~\eqref{particle_number_formula}, we know that $\beta= \me^{2M \gamma_0 \Delta t}$, so that  $\mathbb{E}Z_{n\Delta t}=\me^{2M \gamma_0  T}$. Therefore, we cannot expect to reduce the particle number by simply dividing $T$ into several steps and evolve the particle system successively, which also manifests the indispensability of resampling.

\subsection{Resampling}

As illustrated in Section \ref{sec:importance_sample}, it suffices to set the initial and final time to be $t_l$ and $t_{l+1}$, respectively. Thus from the integrability of $\mu_A$ in Eq.~\eqref{eq:mu_A_L1} and the strong law of large number in Eq.~\eqref{eq:strong}, we can use the following estimator to calculate Eq.~\eqref{eq:inner_product_probabilisitic}.
\begin{equation}\label{inner_product_estimator}
\langle \hat{A} \rangle_{t_{l+1}} \approx {\sum_{\alpha=1}^{N_\alpha} \mu_A(\omega_\alpha)  \cdot w_\alpha(t_l)} = \sum_{\alpha=1}^{N_\alpha} \sum_{i\in\mathcal{E}(\omega_\alpha)} \phi_{i,\alpha} \cdot A(\bm{r}_{i,\alpha},\bm{k}_{i,\alpha})\cdot w_\alpha(t_l),
\end{equation}
with ancestor particles $(\bm{r}_{0, \alpha} , \bm{k}_{0,\alpha})$ drawn from $f_{I}(\bm{r}, \bm{k}, t_l)$, which
converges almost surely when $N_\alpha \to \infty$. According to Eq.~\eqref{inner_product_estimator},
we would like to point out three important features below.
\begin{description}
\item[(1)] It is unnecessary to know the normalizing factor $H(t)$ in Eq.~\eqref{eq:inner_product_probabilisitic}, since it has been absorbed in the sign function $s_{\alpha}(t_l)$ in Eq.~\eqref{eq:s}.

\item[(2)] It is unnecessary to take multiple replicas of branching particle system starting from the same ancestor because we only need to evaluate the expectation of $\mu_A(\omega_\alpha)\cdot w_{\alpha}(t_l)$ with respect to the product measure $\nu \otimes \Pi_{t_l, \bm{r}_\alpha, \bm{k}_\alpha}$.

\item[(3)] Theorem \ref{Thm_3} ensures the mass conservation property. It must be mentioned that the conserved quantity is the summation of particle sign function $\sum_{\alpha=1}^{N_\alpha} s_{\alpha}(t_l)$, instead of total particle number $N_\alpha$. In fact, the total particle number may increase in order to capture the negative values of Wigner function.
\end{description}

Unfortunately, Theorem \ref{th:exp} has presented an unpleasant
property of such estimator, namely, the exponentially increasing
complexity. In this regard, a resampling procedure, which is based on the statistical properties of the Wigner function and density estimation method, must be introduced to save the efficiency, say, to
reduce the particle number from $\mathcal{O}(N_{\alpha}
\me^{2\alpha_0 M\gamma_0 \Delta t)})$ to $\mathcal{O}(N_{\alpha})$.



The first step is to use the non-parameter density estimation method
(the histogram) to evaluate $f(\bm{r}, \bm{k}, t_{l+1})$ through the
branched particles on a given suitable partition of the phase space
$\mathbb{R}^d \times \mathcal{K} = \bigcup_{j=1}^{J}D_j$. The
instrumental density $f_I(\bm{r}, \bm{k}, t_{l+1})$ can be simply
estimated by Eq.~\eqref{eq:histogram}. The successive step is to
draw new samples according to the resulting piecewise constant
density $f_I(\bm{r}, \bm{k}, t_{l+1})$. The main problem is how to
determine the phase space partition. The simplest way, as suggested
in \cite{SellierNedjalkovDimov2014}, is using the uniformly
distributed cells in phase space: $\mathbb{R}^d \times \mathcal{K} =
\bigcup_{j_1=1}^{J_1} \mathcal{X}_{j_1} \times \bigcup_{j_2=1}^{J_2}
\mathcal{K}_{j_2} $, then $f$ is estimated by a piecewise constant
function
\begin{equation}
f(\bm{r}, \bm{k}, t_{l}) \approx \sum_{j_1 = 1}^{J_1}\sum_{j_2=1}^{J_2} d_{j_1, j_2}(t_l) \cdot \mone_{\mathcal{X}_{j_1} \times \mathcal{K}_{j_2}}(\bm{r}, \bm{k}),
\end{equation}
with $d_{j_1, j_2}(t) = {W_{\mathcal{X}_{j_1} \times \mathcal{K}_{j_2}}(t)}/{(|\mathcal{X}_{j_1}| \cdot |\mathcal{K}_{j_2}|)}$. Then the number of particles allocated in each cell is determined by $W_{\mathcal{X}_{j_1} \times \mathcal{K}_{j_2}}(t_{l})$ and the sign by $W_{\mathcal{X}_{j_1} \times \mathcal{K}_{j_2}}(t_{l})/|W_{\mathcal{X}_{j_1} \times \mathcal{K}_{j_2}}(t_{l})|$. The position and wave vector are assumed to be uniformly distributed in each cell.  This approach, usually termed {\sl annihilation} in previous work, e.g. \cite{SellierNedjalkovDimov2014,ShaoSellier2015,th:Ellinghaus2016}, can reduce the particle number effectively for $d=1$ and still works fairly for $d=2$ (as shown in Section \ref{sec:num_res}). Unfortunately, it cannot work for higher dimensional systems because of the following problems, as also manifested in the statistical community\cite{bk:Liu2001,bk:RobertCasella2004,bk:HastieTibshiraniFriedman2009}.

\begin{description}

\item[(1)] In high dimensional situations, the dimension $J_1 \times J_2$ of the feature space (phase space cells)  is too much higher than the sample number, leading to a non-sparse structure and severe over-fitting.

\item[(2)] The uniform distributed hypercube in high-dimensional space is not very useful to characterize the edges of samples.

\item[(3)] The piecewise constant function is discontinuous in nature, so that sampling from a locally uniform distribution may cause additional bias.

\end{description}

To resolve these problems, one can utilize many advanced techniques in the statistical learning and density estimation. The key point is to choose an appropriate $J$ (or feature in statistical terminology) of the partition.  In principle, $J$ must be chosen to strike the balance between accuracy and efficiency. Too small $J$ is unable to capture the fine structure of the Wigner function $f$, whereas too large $J$ may increase the complexity and overfit $f$. For instance, a possible approach is to resort to tree-based methods to partition the phase space \cite{bk:HastieTibshiraniFriedman2009}. Considering that all the statistical techniques are devised for estimating a positive semidefinite density, instead of the quasi-distribution, we need to separate the positive and negative signed particles into two groups and make individual histograms, then merge them into a piecewise constant function. Such pattern is based on the decomposition of the signed measure. In a word, the story in the higher dimensional phase space is totally different because how to efficiently implement the so-called annihilation exploiting the cancelation of weights of opposite sign 
is still in progress.

The resampling in high dimensional phase space is a complicated issue and beyond the scope of this paper, so we would like to discuss it in our future work. In Section~\ref{sec:num_res}, we mainly focus on several typical tests for $d=1$ and $d=2$ and show the accuracy of WBRW as well as of the piecewise constant approximation by comparing with two accurate deterministic solvers,
i.e., SEM\cite{ShaoLuCai2011} and ASM\cite{XiongChenShao2015}.

In summary, the outline of WBRW is illustrated below from $t_l$ to $t_{l+1}$ with the time step $\Delta t$, $l=1,2,\cdots,n-1$. It suffices to take $\bm{r}=\bm{r}_\alpha, \bm{k}=\bm{k}_\alpha, t=t_l$ as the initial state,
and $\bm{r}^{\prime}=\bm{r}^{\prime}_\alpha, \bm{k}^{\prime}=\bm{k}^{\prime}_\alpha, t^{\prime}=t^{\prime}_{\alpha}$ for the offsprings in Eq.~\eqref{Adjoint_renewal_type_equation}.

\begin{description}

\item[Step 1: Sample from $f_I(\bm{r}, \bm{k}, t_l)$]
The first step is to sample $N_\alpha$  ancestor particles according to the instrumental distribution $f_I(\bm{r}, \bm{k}, t_l)$ (see Eq.~\eqref{instrumental_distribution}). Each particle
has a state $(\bm{r}_{\alpha}, \bm{k}_{\alpha})$ and carries an initial weight $\phi_{\alpha}$ and a sign $s_\alpha$. In general, we can simply take $\phi_{\alpha} = 1$.

\begin{figure}
\centering
\includegraphics[width=0.49\textwidth,height=0.38\textwidth]{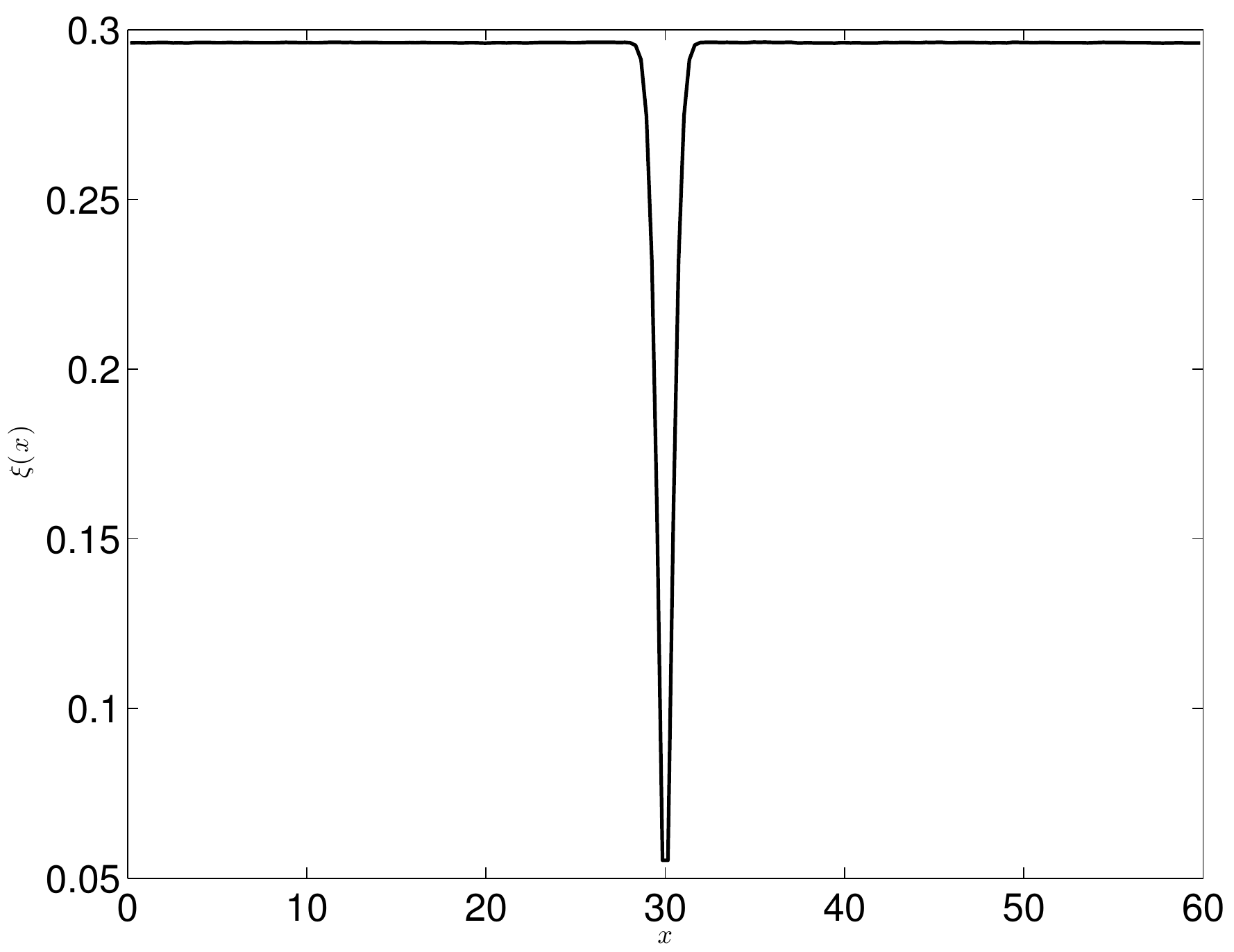}
\caption{\small The normalization factor $\xi(x)$ for the Gaussian barrier~\eqref{eq:gp} with $H_B=0.3$eV and $x_B=30$nm is utilized in the $k$-truncated Wigner simulations.}
\label{fig:xi}
\end{figure}

\item[Step 2: Evolve the particles]
The second step is to evolve super-particles according to the rules of branching particle systems. Suppose a particle is born at $t^{\prime}_\alpha \in [t_l, t_{l+1}]$ at state $(\bm{r}_\alpha^{\prime}, \bm{k}_\alpha^{\prime})$ with weight  $\phi_{\alpha}^\prime$, and it has a random life-length $\tau_{\alpha}^{\prime}$ satisfying
\begin{equation}
\tau_{\alpha}^{\prime} \propto \frac{\dif \mathcal{G}(t^\prime; \bm{r}_\alpha^\prime, t_\alpha^\prime)}{\dif t^\prime}\Big |_{t^\prime=t_\alpha^\prime+\tau_\alpha^\prime}=\gamma(\bm{r}_{\alpha}^{\prime}(\tau_{\alpha}^{\prime})) \me^{-\int_{t^{\prime}_{\alpha}}^{t^{\prime}_{\alpha}+\tau_{\alpha}^{\prime}} \gamma(\bm{r}^{\prime}_{\alpha}(s-t^{\prime}_\alpha))\D s}.
\end{equation}
For the ancestor particle, we have  $t_{\alpha}^{\prime}=t_l$, $(\bm{r}_\alpha^{\prime}, \bm{k}_\alpha^{\prime})= (\bm{r}_\alpha, \bm{k}_\alpha)$, $\phi_{\alpha}^{\prime}= \phi_{\alpha}$.

If $\tau_{\alpha}^{\prime} \ge t_{l+1}-t_{\alpha}^{\prime}$,
the particle is frozen at the state $(\bm{r}^{\prime}_{\alpha}(t_{l+1}-t^{\prime}_{\alpha}), \bm{k}^{\prime}_{\alpha})$ and the probability of this event is
\begin{equation}
\Pr(\tau_{\alpha}^{\prime} \ge t_{l+1}-t_{\alpha}^{\prime})=
1-\mathcal{G}(t_{l+1}; \bm{r}_\alpha^\prime, t_\alpha^\prime) =\me^{-\int_{t_{\alpha}^{\prime}}^{t_{l+1}} \gamma(\bm{r}_{\alpha}^{\prime}(s-t_{\alpha}^{\prime})) \D s}.
\end{equation}
Otherwise, the particle travels to a new position $\bm{r}^{\prime}_{\alpha}(\tau^{\prime}_{\alpha})$ and dies at time $t^{\prime}_{\alpha}+\tau_{\alpha}^{\prime}$ at state $(\bm{r}^{\prime}_{\alpha}(\tau^{\prime}_{\alpha}), \bm{k}^{\prime}_{\alpha})$, and meanwhile, several new particles are generated according to Rule 3, the probability of which is $\mathcal{G}(t_\alpha^\prime+\tau_\alpha^\prime; \bm{r}_\alpha^\prime, t_\alpha^\prime)$.

\begin{figure}
\subfigure[$t=5$.]{\includegraphics[width=0.49\textwidth,height=0.27\textwidth]{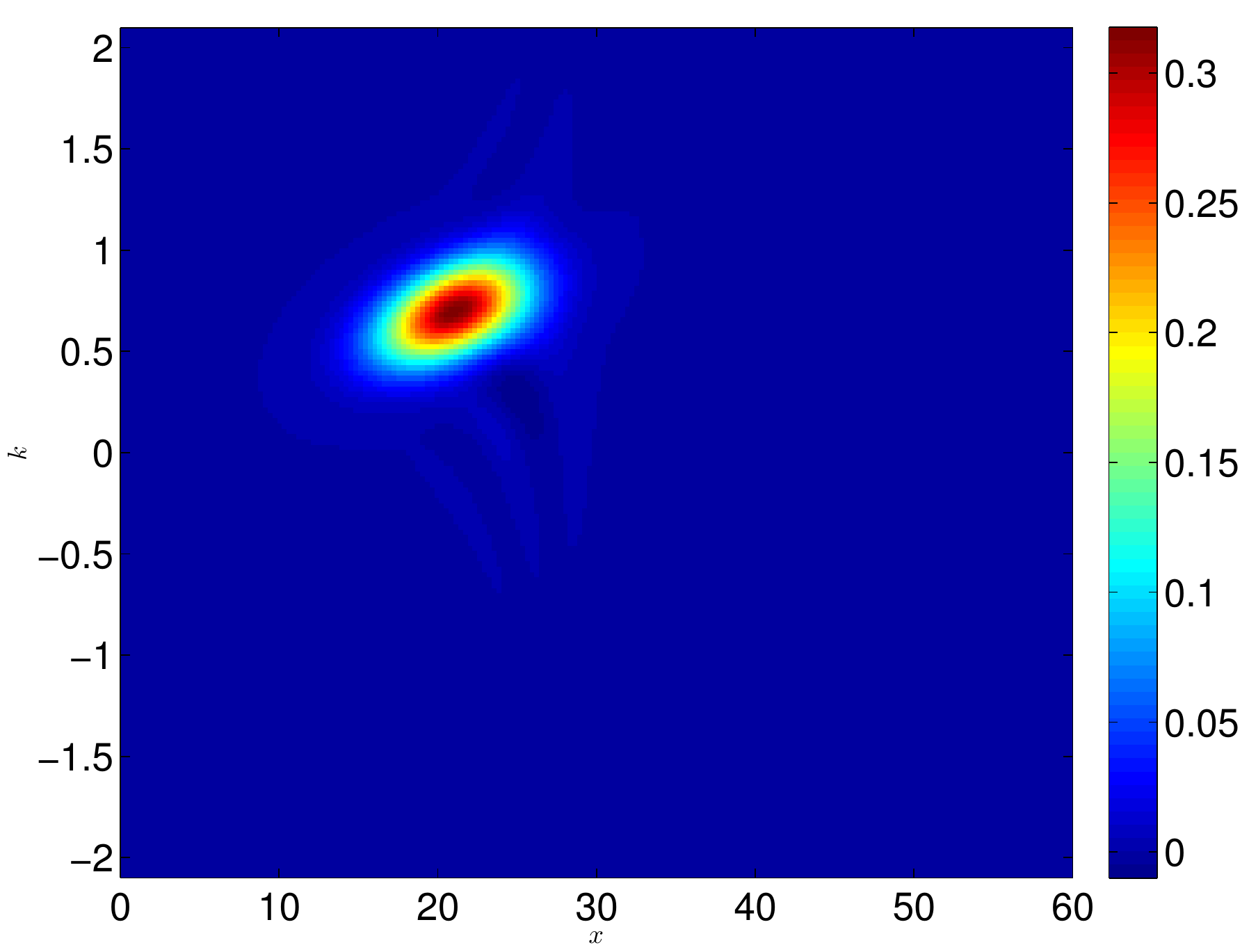}{\includegraphics[width=0.49\textwidth,height=0.27\textwidth]{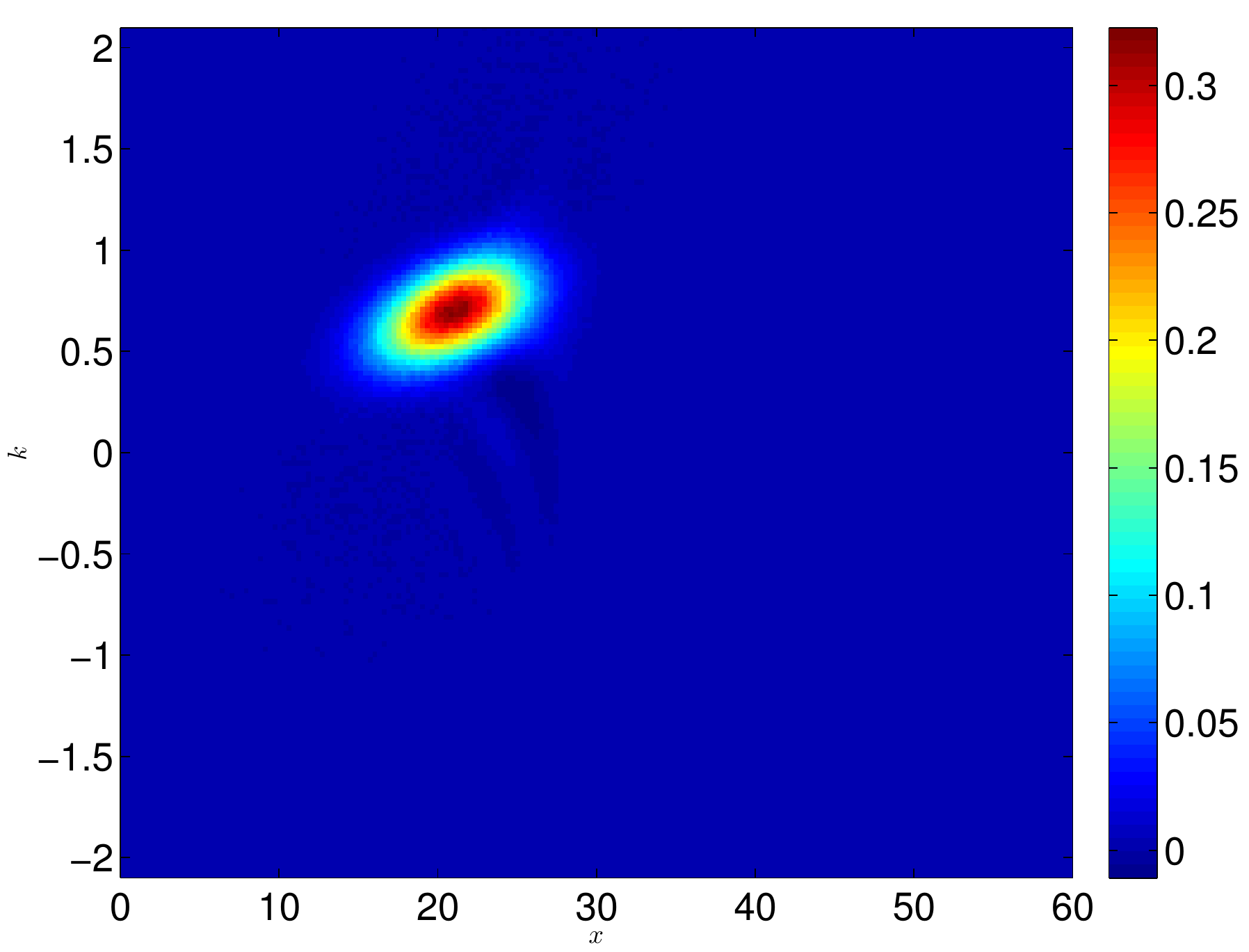}}}
\subfigure[$t=10$.]{\includegraphics[width=0.49\textwidth,height=0.27\textwidth]{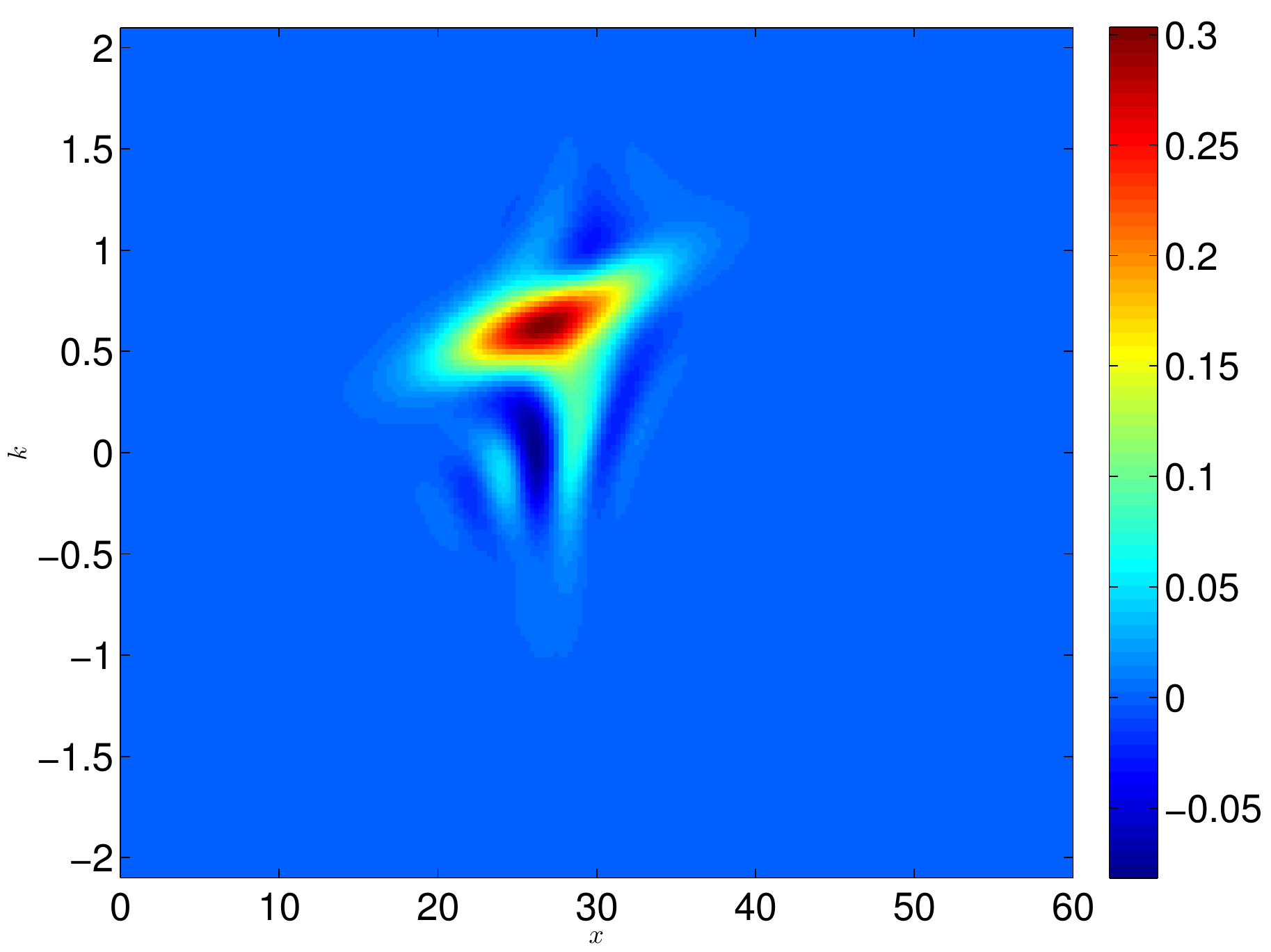}{\includegraphics[width=0.49\textwidth,height=0.27\textwidth]{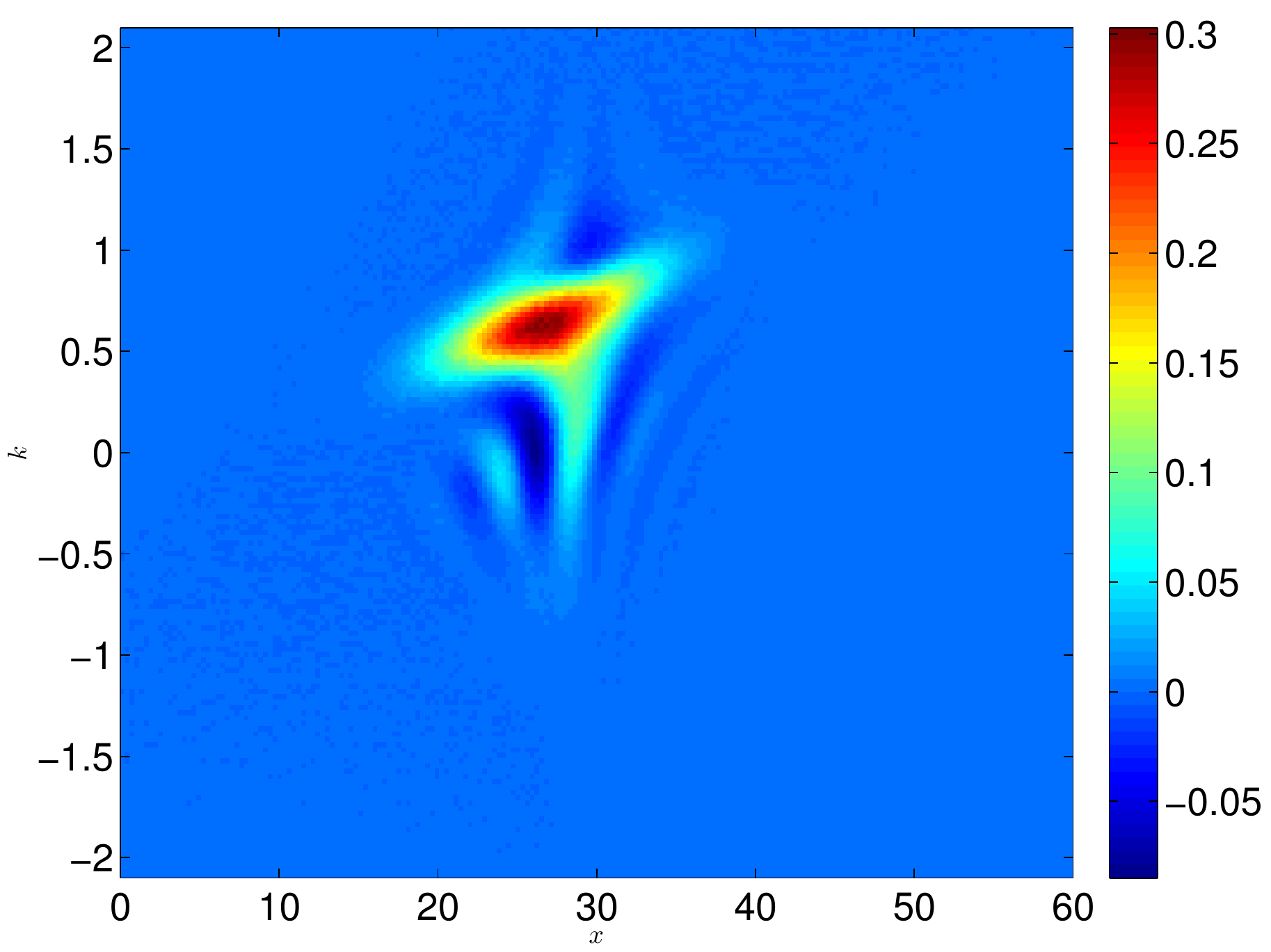}}}
\subfigure[$t=15$.]{\includegraphics[width=0.49\textwidth,height=0.27\textwidth]{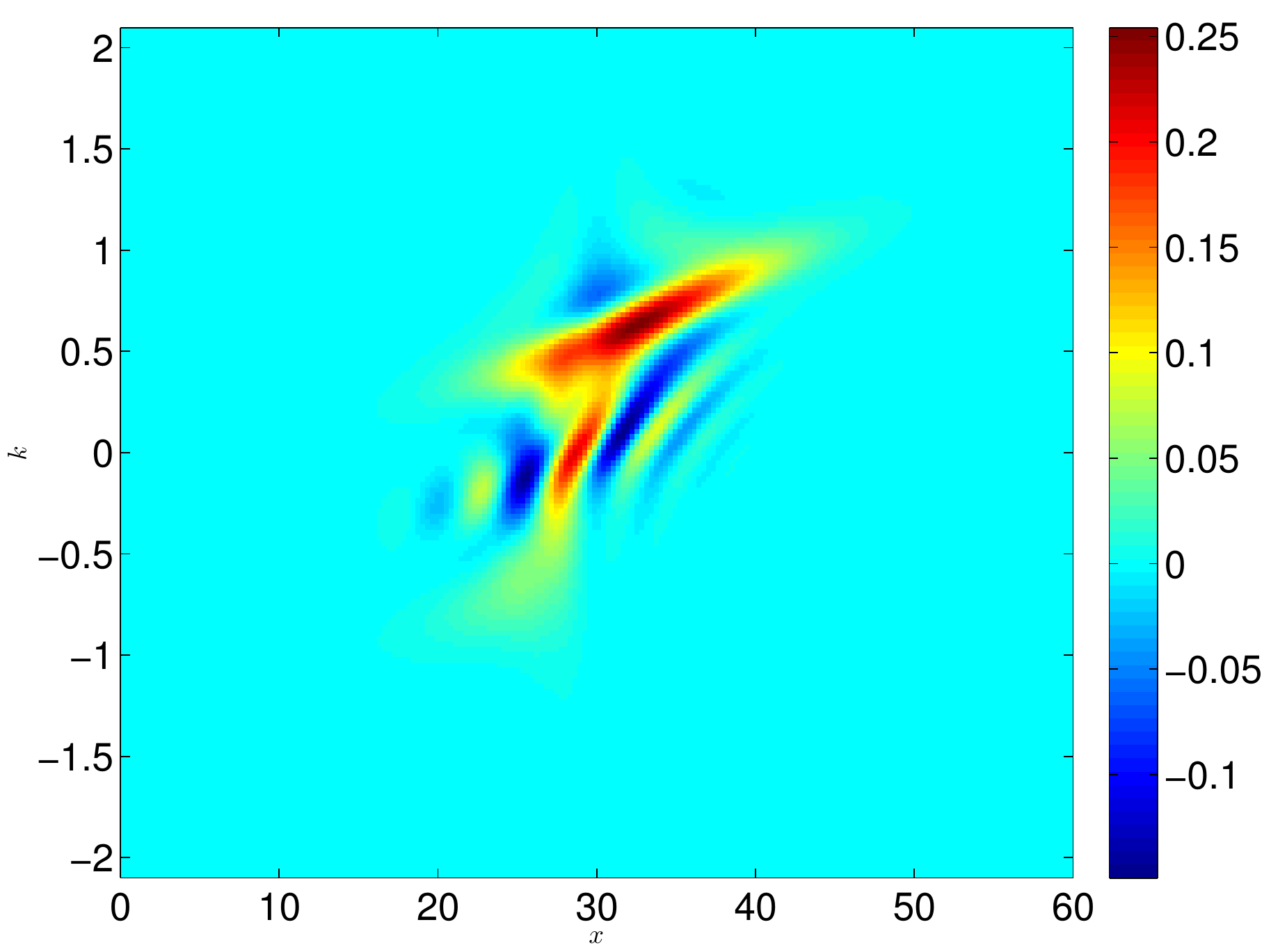}{\includegraphics[width=0.49\textwidth,height=0.27\textwidth]{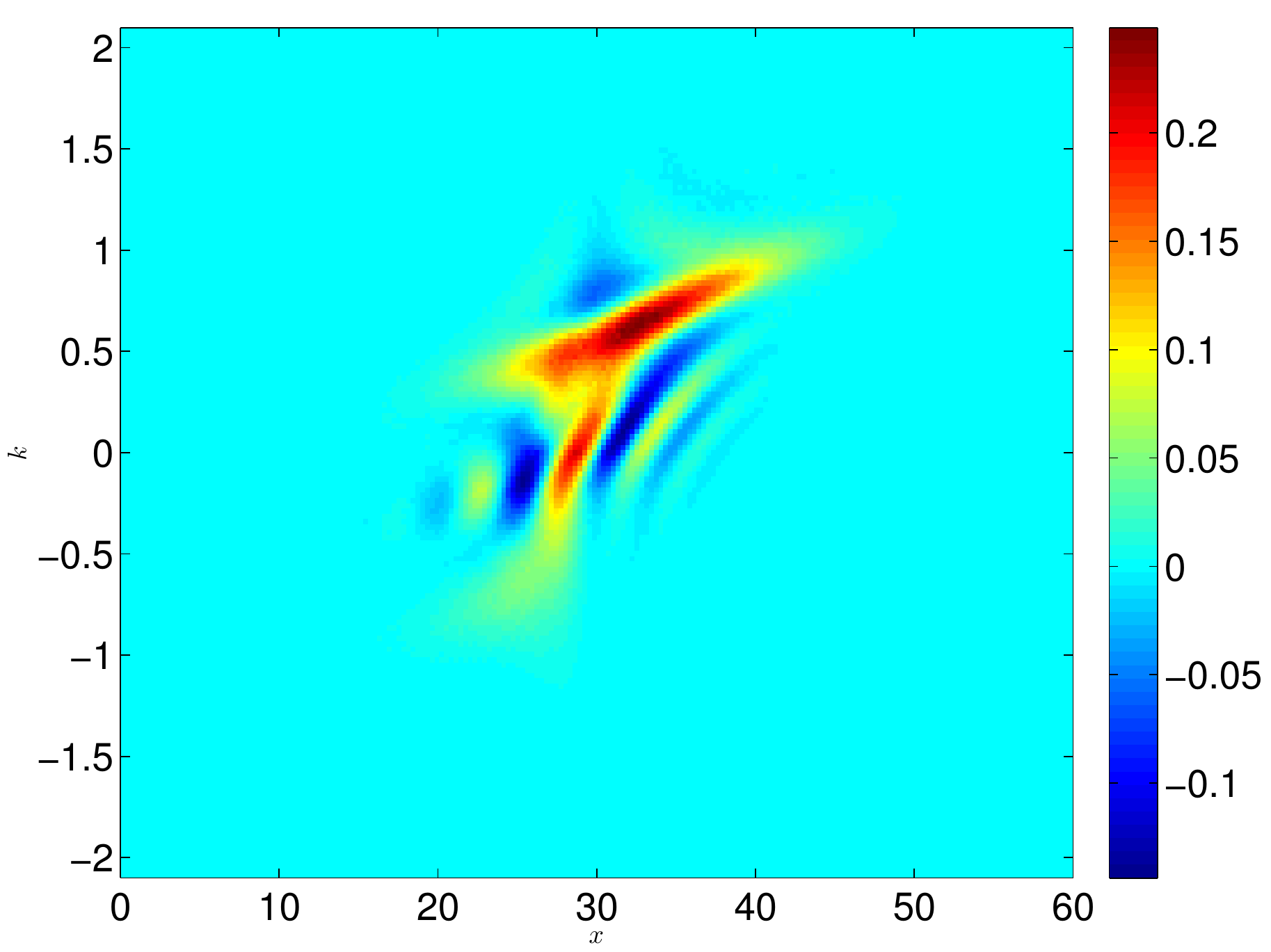}}}
\subfigure[$t=20$.]{\includegraphics[width=0.49\textwidth,height=0.27\textwidth]{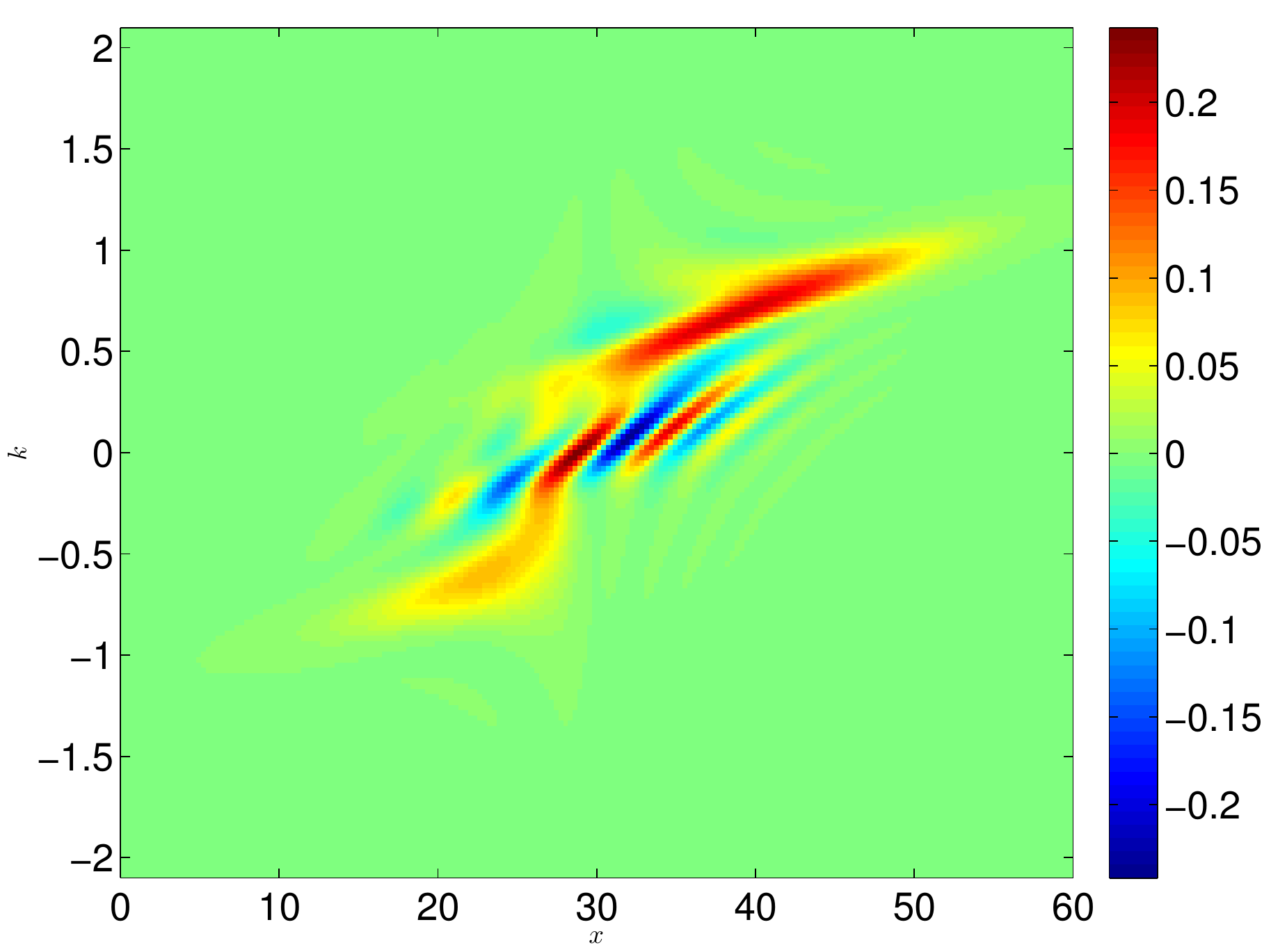}{\includegraphics[width=0.49\textwidth,height=0.27\textwidth]{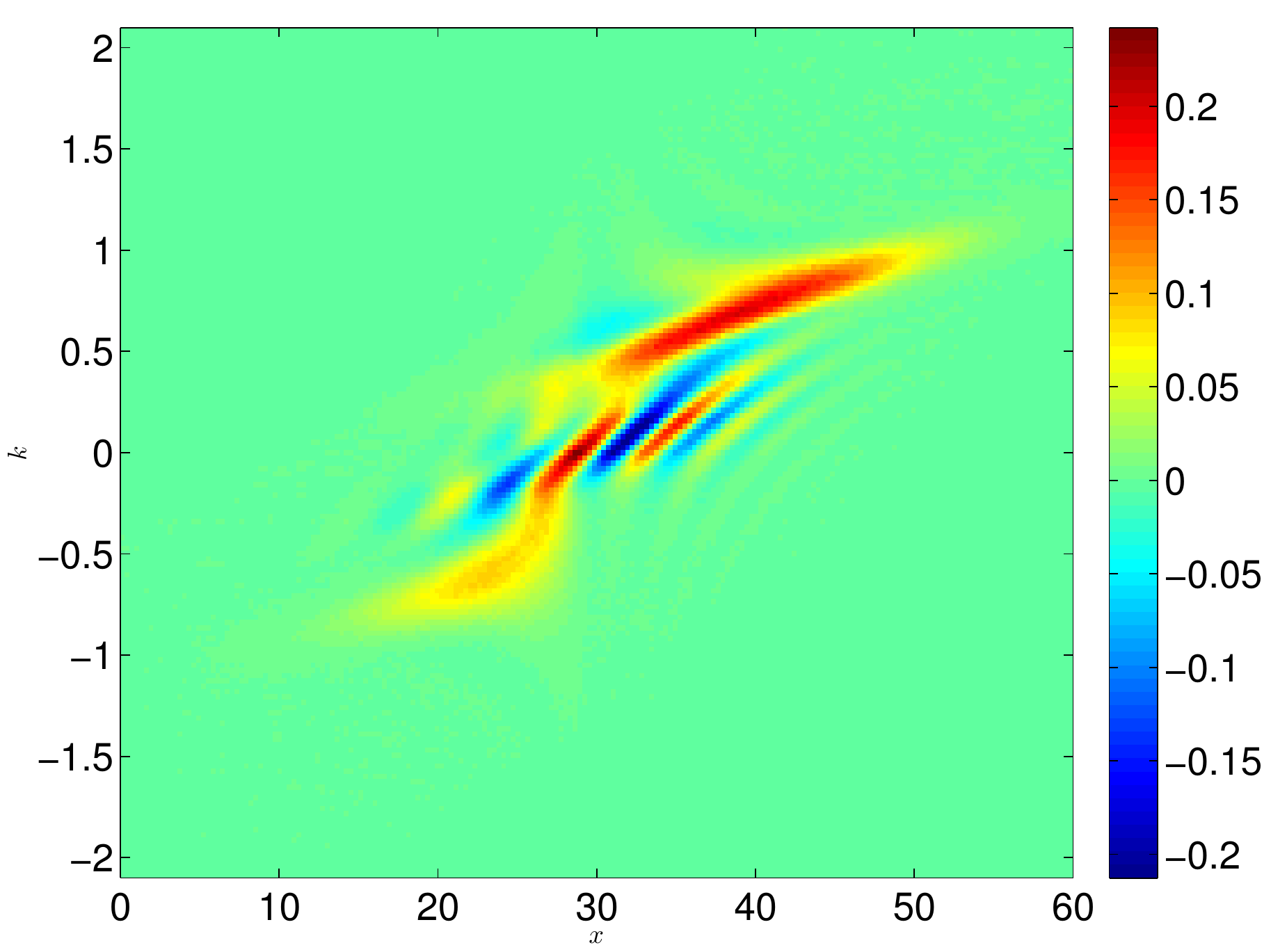}}}
\caption{\small Partial reflection by the Gaussian barrier: Numerical Wigner functions at different time instants $t=5,20,15,20$fs.
The reference solution by SEM is displayed in the left-hand-side column, while the right-hand-side column shows the numerical solution obtained by WBRW with the auxiliary function $\gamma(x)=3\check{\xi}$ as well as $\Delta t=1$fs and $T_A=1$fs.}
\label{fig:ex1-wf}
\end{figure}

\item[Step 3: Density estimation]
When all particles in the branching system are frozen, one can record their positions, wavevectors, and weights.
Let $\mathcal{E}_\alpha$ denote the index set of all frozen particles with
the same ancestor initially at state $(\bm{r}_{\alpha}, \bm{k}_{\alpha})$,
$\{ (\bm{r}_{i, \alpha}, \bm{k}_{i, \alpha}), i\in \mathcal{E}_{\alpha}\}$  the collection of corresponding frozen states,
and $\phi_{i, \alpha}$ the updated weight of the $i$-th particle. Accordingly, $\langle \hat{A} \rangle_{t_{l+1}}$ can be estimated as
\begin{equation}\label{eq:A_approx}
\langle \hat{A} \rangle_{t_{l+1}}\approx \sum_{\alpha} \sum_{i\in \mathcal{E}_\alpha} \phi_{i,\alpha}\cdot w_\alpha(t_l) \cdot  A(\bm{r}_{i, \alpha}, \bm{k}_{i, \alpha}).
\end{equation}
Particularly, plugging into $A(\bm{r}, \bm{k})= \mone_{D_j}(\bm{r},
\bm{k})$, we obtain $W_{D_j}(\bm{r}, \bm{k})$.

Based on a good partition of phase space: $\mathbb{R}^d \times \mathcal{K} = \bigcup_{j=1}^{J}D_j$, we are able to update the instrumental density function $f_I(\bm{r}, \bm{k}, t_{l+1})$ by the histogram \eqref{eq:histogram}.

\end{description}

\section{Numerical experiments}
\label{sec:num_res}

In order to investigate the performance of the WBRW algorithm as well as to verify the theoretical predictions
as we discussed earlier such as the effect of constant $\gamma_0$,
the increasing behavior of the particle number and the
effect of the time step and the annihilation frequency,
we simulate a one-body Gaussian barrier scattering in 2D phase space and a two-body Helium-like system in 4D phase space. The relative $L^{2}$ error is adopted to study the accuracy.
Let $f^{\text{ref}}(x,k,t)$ denote the
reference Wigner function (wf) which could be the exact solution or the numerical solution
on a relatively fine mesh, and $f^{\text{num}}(x,k,t)$ the numerical solution.
Then, the relative errors are written as
\begin{equation}
\text{err}_{wf}(t)  =(\frac{\int_{\mathcal{X}\times\mathcal{K}}(\Delta
f(x,k,t))^{2}\mathrm{d}x\mathrm{d}k}{\int_{\mathcal{X}\times\mathcal{K}}(f^{\text{ref}}(x,k,t))^{2}\mathrm{d}x\mathrm{d}k})^\frac12,
\end{equation}
where $\Delta f(x,k,t)=|f^{\text{num}}(x,k,t)-f^{\text{ref}}(x,k,t)|,$ and the
integrals above are evaluated using a simple rectangular rule over a uniform mesh. To obtain a more complete view of the accuracy, we also
measure corresponding relative errors for physical quantities, e.g.
the spatial marginal (sm) probability distribution and the momental
marginal (mm) probability distribution in a similar way, denoted by
$\text{err}_{sm}(t)$ and $\text{err}_{mm}(t)$, respectively.

\begin{figure}[t!]
\subfigure[The fifth period.]{\includegraphics[width=0.49\textwidth,height=0.38\textwidth]{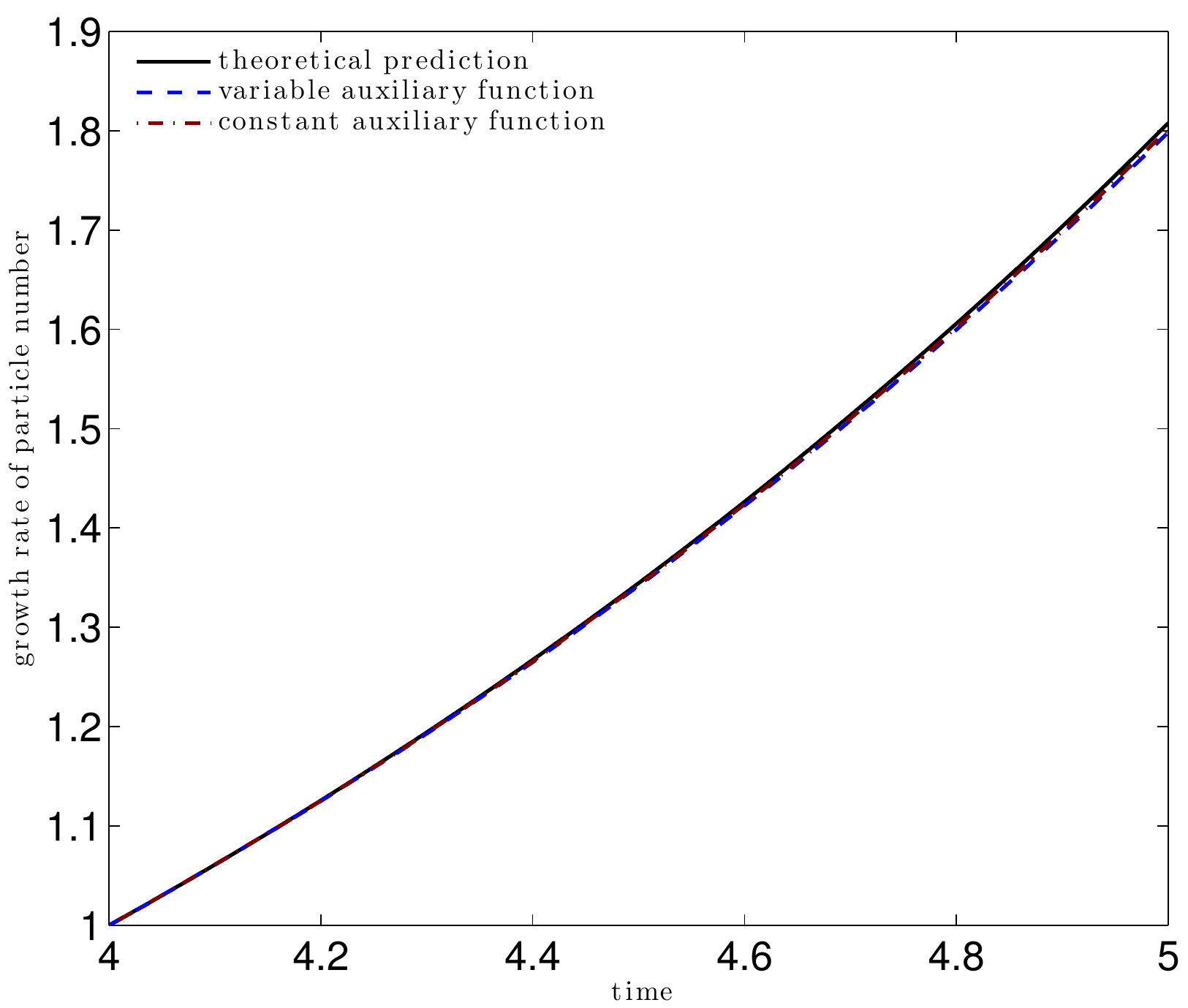}}
\subfigure[The tenth period.]{\includegraphics[width=0.49\textwidth,height=0.38\textwidth]{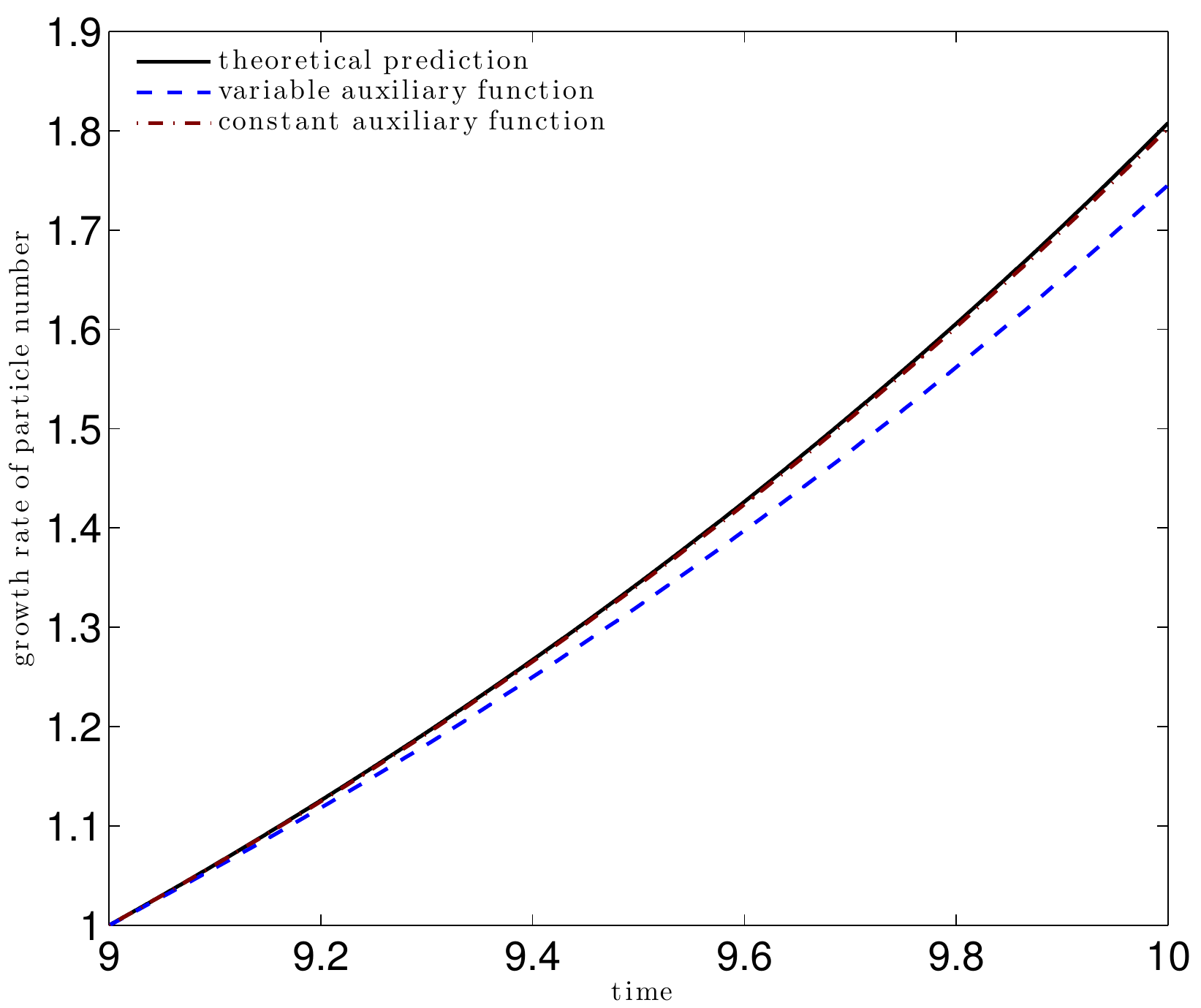}}\\
\subfigure[The fifteenth
period.]{\includegraphics[width=0.49\textwidth,height=0.38\textwidth]{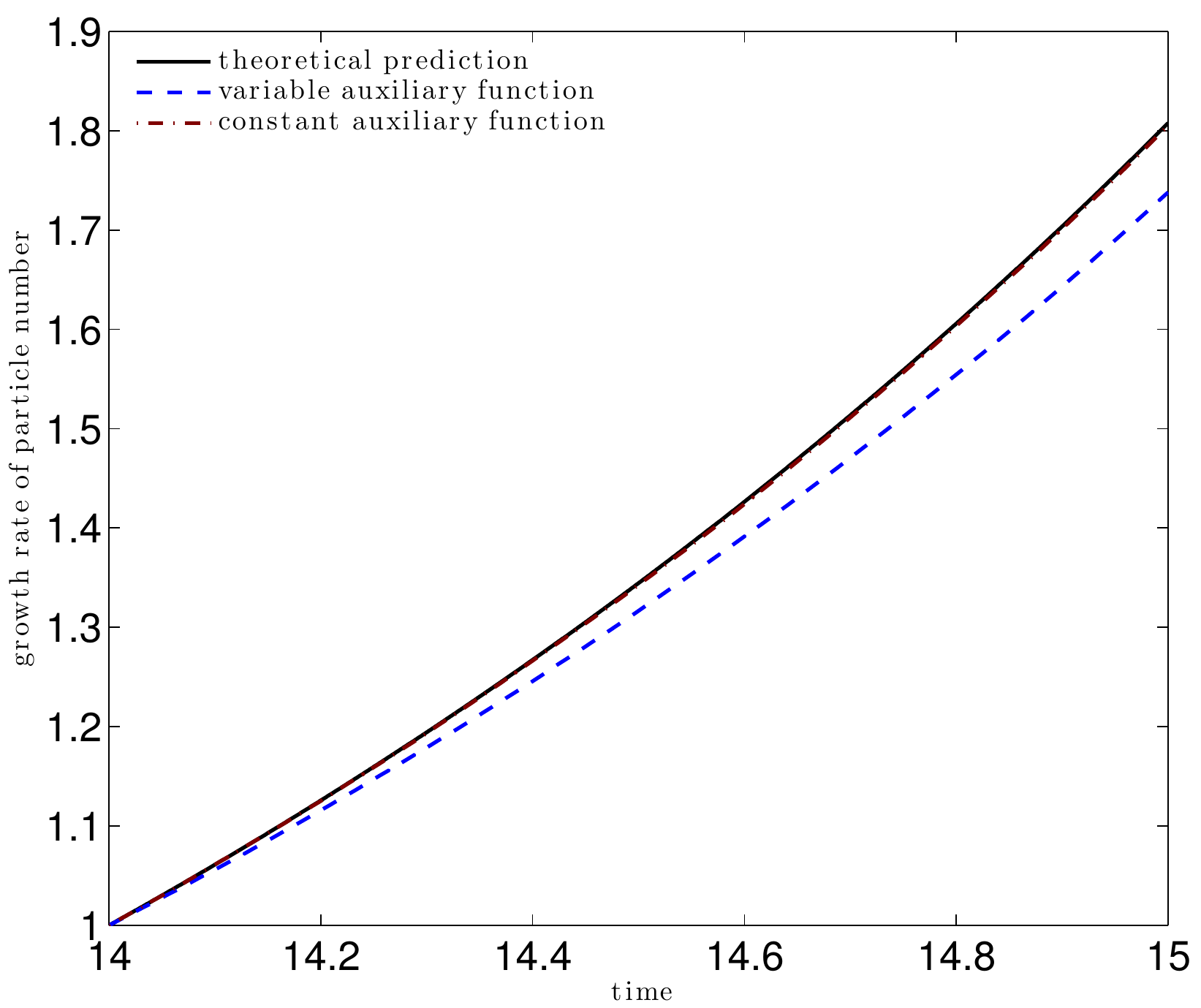}}
\subfigure[The twentieth
period.]{\includegraphics[width=0.49\textwidth,height=0.38\textwidth]{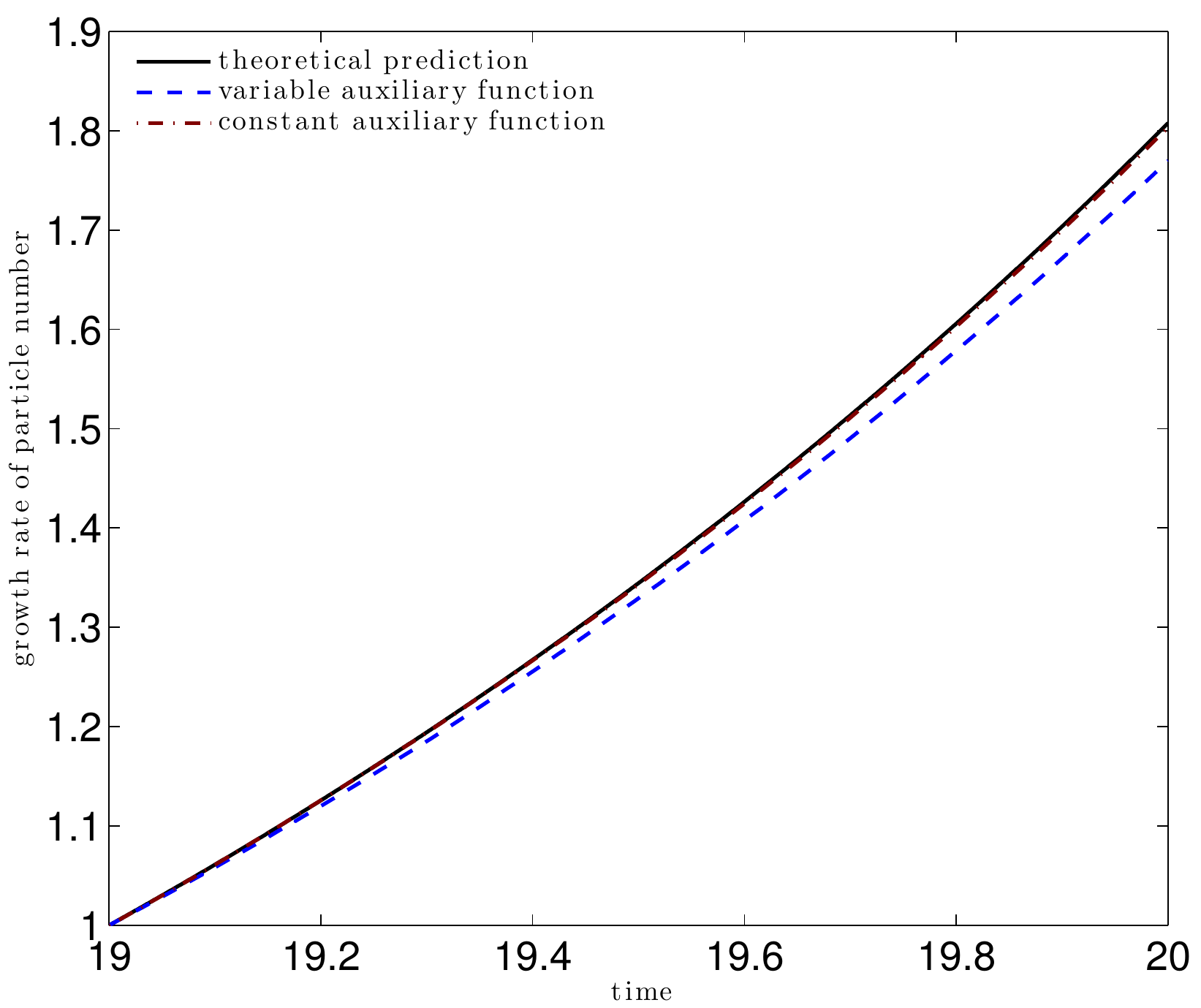}}
\caption{\small Partial reflection by the Gaussian barrier: Growth rates of particle number
within different annihilation periods for WBRW with $\Delta t=0.008$fs and $T_A=1$fs. The
curve of theoretical prediction can be described analytically by
$\me^{2\gamma_0t}$ when using a constant auxiliary function
$\gamma_0$. Here we set the constant auxiliary function
$\gamma_0=\check{\xi}$ and the variable one $\gamma(x)=\xi(x)$. }
\label{fig:ex1-np}
\end{figure}

\begin{figure}
\subfigure[Different $\gamma(x)$.]{\includegraphics[width=0.49\textwidth,height=0.38\textwidth]{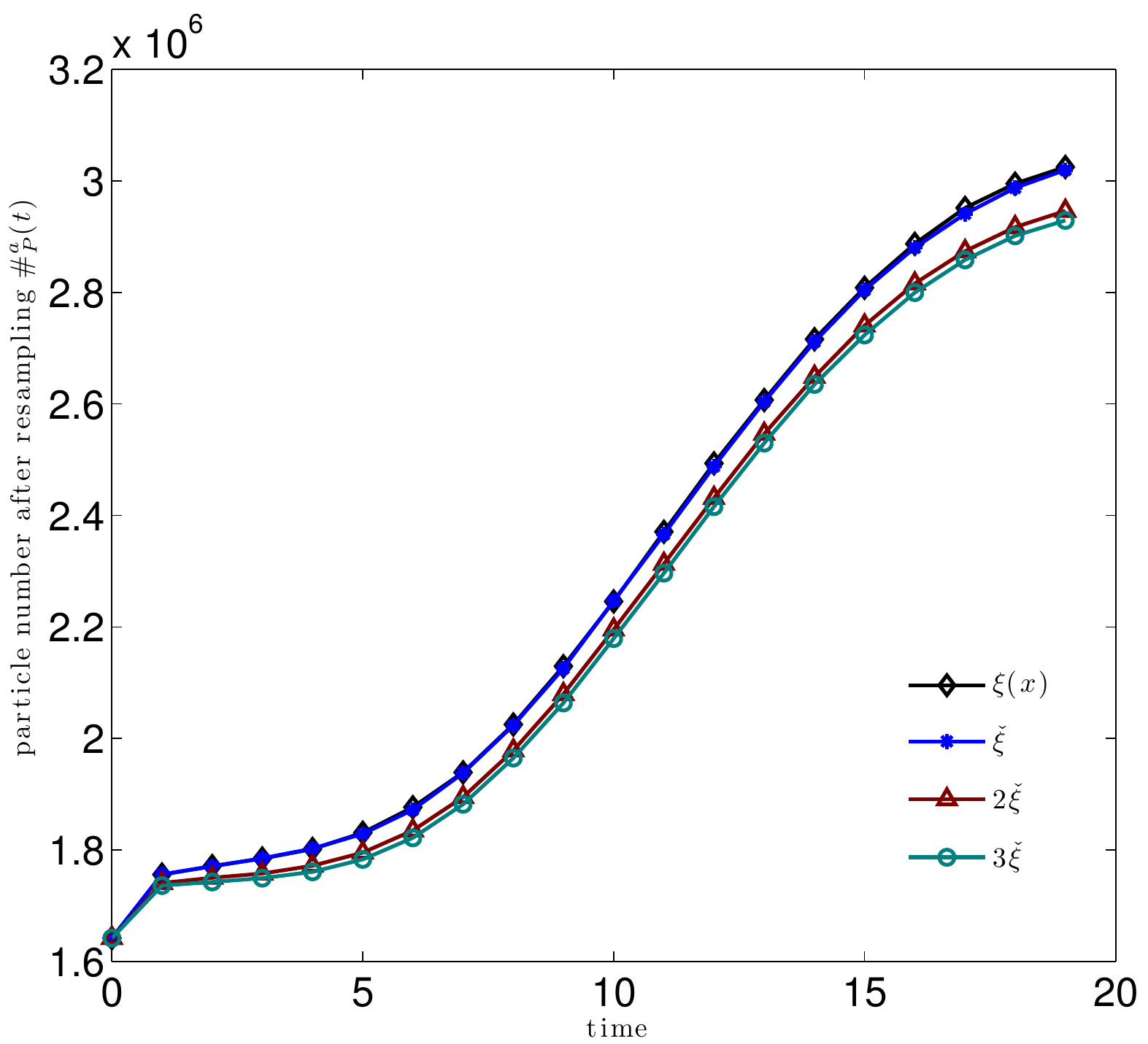}}
\subfigure[Different $T_A$.]{\label{fig:ex1-npaa-b}\includegraphics[width=0.49\textwidth,height=0.38\textwidth]{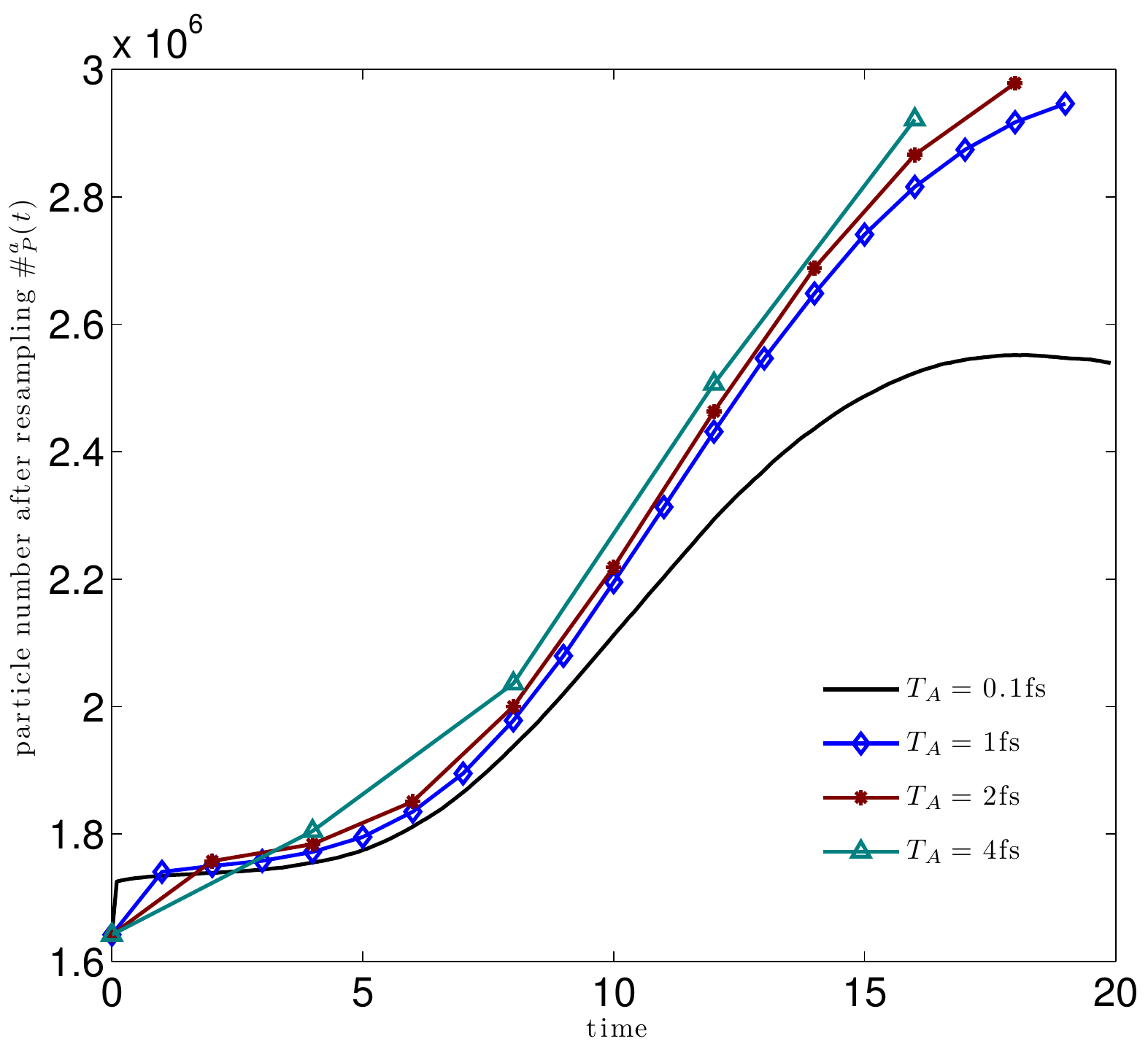}}
\caption{\small Partial reflection by the Gaussian barrier: Particle number after resampling (annihilation). The left plot shows the behavior for different auxiliary functions $\gamma(x)$ with the same annihilation period $T_A=1$fs. The right plot displays
the behavior for different annihilation periods with the same constant auxiliary function $\gamma_0=2\check{\xi}$.}
\label{fig:ex1-npaa}
\end{figure}


Once WBRW starts, the number of particles
increases exponentially with time, thus necessary annihilation
operations are required to make the simulation go on well, though
they are not inherited in the branching process from the theoretical
point of view. In this work, we do such annihilations at a constant
frequency, say $1/T_A$. That is, we divide equally the time interval
$[0,t_{fin}]$ into $n_A$ subintervals with the partition being
\[
0 = t^0 < t^1 < t^2 < \cdots < t^{n_A} = t_{fin}, \quad n_A={t_{fin}}/{T_A}.
\]
The annihilations occur exactly at the time instant $t^i$ for $1\leq
i\leq n_A-1$, at which the particle number decreases significantly
from $\#_P^b(t^i)$ to $\#_P^a(t^i)$, where $\#_P^b$ (resp. $\#_P^a$)
represents the particle number before (resp. after) the
annihilation. For convenience, we denote the particle number at
$t^0$ and $t^{n_A}$ by $\#_P^a(t^0)$ and $\#_P^b(t^{n_A})$,
respectively. In each time period $[t^{i-1},t^{i}]$, the particle
number increases from $\#_P^a(t^{i-1})$ to $\#_P^b(t^i)$ and
corresponding multiple is denoted by
\begin{equation}\label{eq:rate}
M_i = \#_P^b(t^i)/\#_P^a(t^{i-1}).
\end{equation}
For the $k$-truncated Wigner branching particle model with constant auxiliary function $\gamma(x)\equiv\gamma_0$,
it has been proved in Theorem~\ref{th:exp} that such increasing multiple only depends on the time increment $T_A$ and $\gamma_0$, which means
the same increasing multiple exists for each time period,
i.e., $M_i\equiv \me^{2\gamma_0T_A}$ for any $i\in\{1,2,\cdots,n_A\}$.

All the numerical results are obtained with our own Fortran implementations of WBRW, SEM and ASM on the computing platform: Dell Poweredge R820 with $4\times$ Intel Xeon processor E5-4620
(2.2 GHz, 16 MB Cache, 7.2 GT/s QPI Speed, 8 Cores, 16 Threads) and 256GB memory.
A fixed time step $\Delta t$ is applied and then the total number of time steps becomes
\[
n = t_{fin}/\Delta t.
\]
When the branching process evolves from $t_{j-1}=(j-1)\Delta t$ to $t_{j}=j \Delta t$ for $1\leq j\leq n$,
particle offspring will be generated.

\subsection{Gaussian barrier scattering}

Two Gaussian barrier scattering experiments are conducted in 2D
phase space. The first experiment is exactly the same as that
adopted in \cite{ShaoSellier2015}, while the only change for the
second one is the barrier height is increased to $1.3$eV. The
readers are referred to \cite{ShaoSellier2015} for the details on
the problem setting. As we pointed out earlier,  both the
$k$-truncated (see Eq.~\eqref{eq.k_truncated_Wigner}, the model
parameter is $\Delta y$) and $y$-truncated (see
Eq.~\eqref{eq.discrete_Wigner}, the model parameter is $\Delta k$)
branching particle models can be regarded as approximations of the
same Wigner equation in the unbounded domain, and thus comparable
results are expected on the same footing because both Gaussian
wavepacket and Gaussian barrier possess a very nice localized
structure. Hence we only report numerical results for the
$k$-truncated model and those for the $y$-truncated model can be
found in \cite{ShaoSellier2015} as well as in an early version of
this work \cite{ShaoXiong2016}. The initial particle number is fixed
to be $\#_P^a(t^0) = 1641810$, and the reference solutions are
obtained by SEM, the spectral accuracy of which was well
demonstrated in \cite{ShaoLuCai2011,ShaoSellier2015}.

\begin{table}
  \centering
  \caption{\small Partial reflection by the Gaussian barrier: Numerical data for WBRW.
The errors in the second, third and fourth columns are calculated at
the final time $t_{fin}=20$fs. The particle numbers in the fifth and
sixth columns and the running CPU time in the last column are
measured in million and minutes, respectively. While using constant
auxiliary function $\gamma(x)\equiv\gamma_0$ the increasing multiple
of particle number within an annihilation period is
$\me^{2\gamma_0T_A}$. Three kinds of constant auxiliary functions,
$\gamma_0=\check{\xi},2\check{\xi},3\check{\xi}$, are tested, where
$\check{\xi}=\max_{x\in\fx}\{\xi(x)\}\approx$ 2.96E-01.
}\label{tab:eg1-1}
 \newsavebox{\tablebox}
 \begin{lrbox}{\tablebox}
  \begin{tabular}{ccccccccc}
\hline\hline
$\gamma(x)$ & $\text{err}_{wf}$ & $\text{err}_{sm}$ & $\text{err}_{mm}$ & $\check{\#}_P^b$ & $\check{\#}_P^a$ & $\overline{M}$ & $\me^{2\gamma_0T_A}$ & Time\\
\hline
\multicolumn{9}{c}{$\Delta t=0.008$fs, $T_A=1$fs}\\
\hline
$\xi(x)$ &
9.08E-02 &
3.08E-02 &
3.20E-02 &
5.34 &
3.02 &
1.77 &
-- &
286.28\\
$\check{\xi}$ &
9.01E-02 &
3.16E-02 &
3.05E-02 &
5.44 &
3.02 &
1.80 &
1.81 &
291.45\\
$2\check{\xi}$ &
7.95E-02 &
2.74E-02 &
2.46E-02 &
9.62 &
2.95 &
3.26 &
3.27 &
297.72\\
$3\check{\xi}$ &
7.51E-02 &
2.52E-02&
1.97E-02&
17.25 &
2.93 &
5.89 &
5.91 &
314.83\\
\hline
\multicolumn{9}{c}{$\Delta t=1$fs, $T_A=1$fs}\\
\hline
$\xi(x)$ &
9.25E-02 &
3.21E-02 &
3.31E-02 &
5.35 &
3.02 &
1.77 &
-- &
3.38 \\
$\check{\xi}$ &
8.98E-02 &
3.32E-02 &
2.73E-02 &
5.44 &
3.02 &
1.80 &
1.81 &
3.37\\
$2\check{\xi}$ &
7.94E-02 &
2.68E-02 &
2.21E-02 &
9.61 &
2.95 &
3.26 &
3.27 &
4.57 \\
$3\check{\xi}$ &
7.55E-02 &
2.51E-02 &
1.99E-02 &
17.26 &
2.93 &
5.89 &
5.91 &
6.55\\
\hline
\multicolumn{9}{c}{$\Delta t=2$fs, $T_A=2$fs}\\
\hline
$\xi(x)$ &
9.43E-02 &
3.18E-02 &
3.75E-02 &
9.72 &
3.12 &
3.13 &
-- &
2.22\\
$\check{\xi}$ &
9.00E-02 &
3.45E-02 &
3.43E-02 &
10.07 &
3.10 &
3.25 &
3.27 &
2.95\\
$2\check{\xi}$ &
6.32E-02 &
2.70E-02 &
2.36E-02 &
31.62 &
2.98 &
10.62 &
10.68 &
5.50\\
$3\check{\xi}$ &
5.48E-02 &
2.45E-02 &
2.30E-02 &
102.25 &
2.94 &
34.71 &
34.88 &
14.63 \\
\hline
\multicolumn{9}{c}{$\Delta t=4$fs, $T_A=4$fs}\\
\hline
$\xi(x)$ &
1.39E-01 &
4.63E-02 &
4.66E-02 &
30.69 &
3.26 &
9.72 &
-- &
2.25 \\
$\check{\xi}$ &
1.27E-01 &
4.93E-02 &
4.74E-02 &
33.19 &
3.21 &
10.46 &
10.68 &
2.63 \\
$2\check{\xi}$ &
6.69E-02 &
2.83E-02 &
2.79E-02 &
326.99 &
2.92 &
112.65 &
113.98 &
25.10\\
\hline
\multicolumn{9}{c}{$\Delta t=0.1$fs, $T_A=0.1$fs}\\
\hline
$\xi(x)$ &
2.86E-01 &
8.12E-02 &
6.82E-02 &
2.73 &
2.58 &
1.06 &
-- &
23.67\\
$\check{\xi}$ &
2.87E-01 &
8.02E-02 &
6.53E-02 &
2.73 &
2.57 &
1.06 &
1.06 &
23.82\\
$2\check{\xi}$ &
2.84E-01 &
8.01E-02 &
6.61E-02 &
2.87 &
2.55 &
1.13 &
1.13 &
24.07 \\
$3\check{\xi}$ &
2.84E-01 &
8.10E-02 &
6.60E-02 &
3.04 &
2.55 &
1.19 &
1.19 &
24.83\\
\hline
\hline
  \end{tabular}
\end{lrbox}
\scalebox{0.88}{\usebox{\tablebox}}
\end{table}

In general, the calculation of the normalizing factor $\xi(x)$ in Eq.~\eqref{eq:normalizing} and  sampling from $V_{w}^{+}(x, k) / \xi(x)$ can be realized simultaneously, say, we can calculate the normalization factor through sampling. The Gaussian barrier potential reads
\begin{equation}\label{eq:gp}
V(x)=H_{B} \exp{\left[-\frac{(x-x_{B})^{2}}{2}\right]},
\end{equation}
where $H_B$ and $x_B$ denote the barrier height and the barrier center,
respectively, and the explicit expression of
corresponding Wigner kernel is
\begin{equation}\label{eq:gp_vw}
V_{w}(x, k)=\frac{2H_{B}}{\hbar} \sqrt{\frac{2}{\pi}} \me^{-2k^2} \sin(2k(x-x_{B})).
\end{equation}
It can be easily seen here that $\sqrt{\frac{2}{\pi}} \me^{-2k^2}$ in Eq.~\eqref{eq:gp_vw} is the probability density of the normal distribution $\mathcal{N}(0, 1/2)$, with which we can calculate the integral by the rejection sampling. In actual simulations, the number of samples are chosen as $2\times 10^8$ for each $x$, and the numerical $\xi(x)$ is shown in Fig.~\ref{fig:xi} for $H_B=0.3$eV and $x_B=30$nm. We find there that the normalization factor has a sharp decrease around $x=30$, whereas it is very flat outside the neighborhood of $x=30$.

\begin{figure}
\subfigure[$t=5$.]{\includegraphics[width=0.49\textwidth,height=0.27\textwidth]{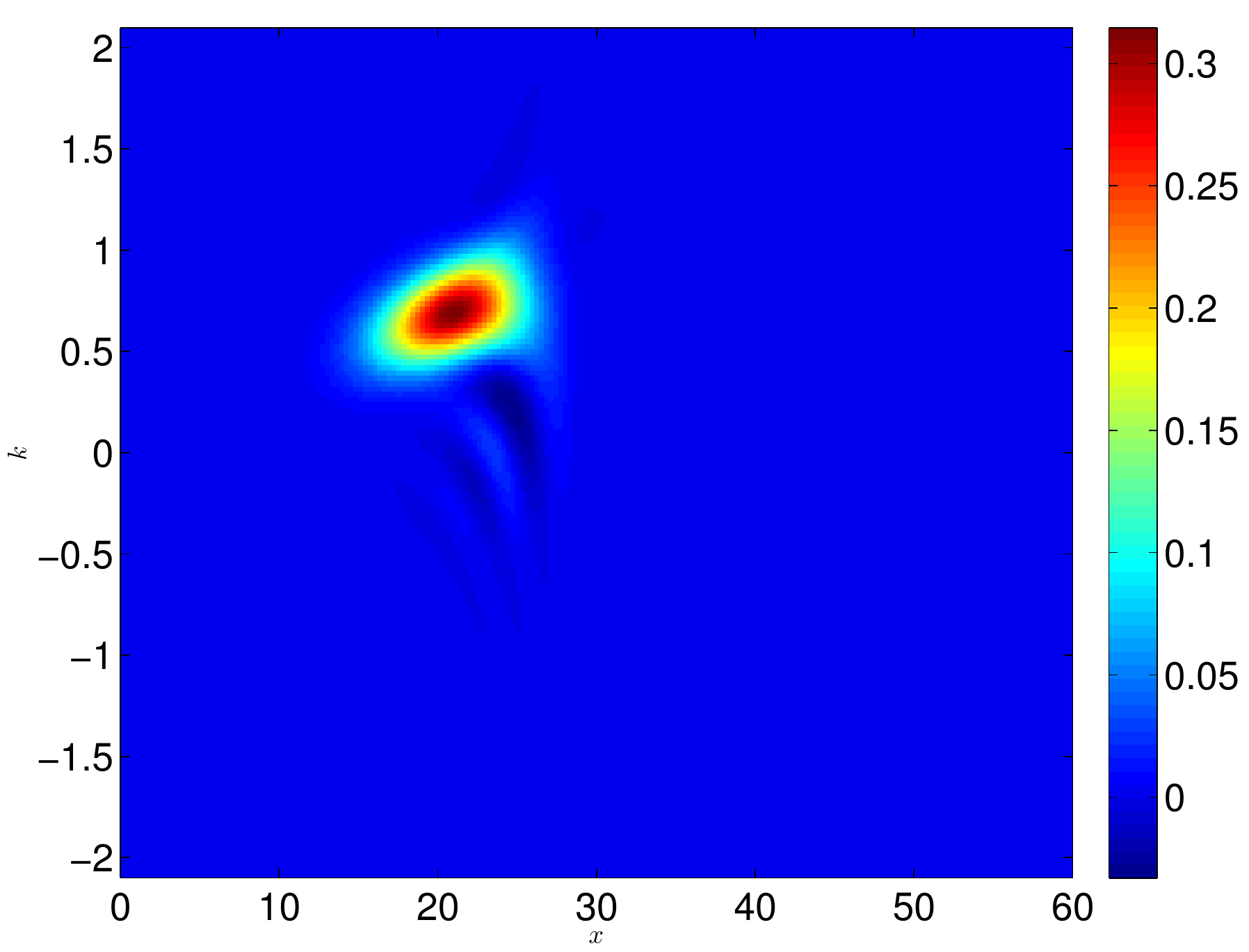}{\includegraphics[width=0.49\textwidth,height=0.27\textwidth]{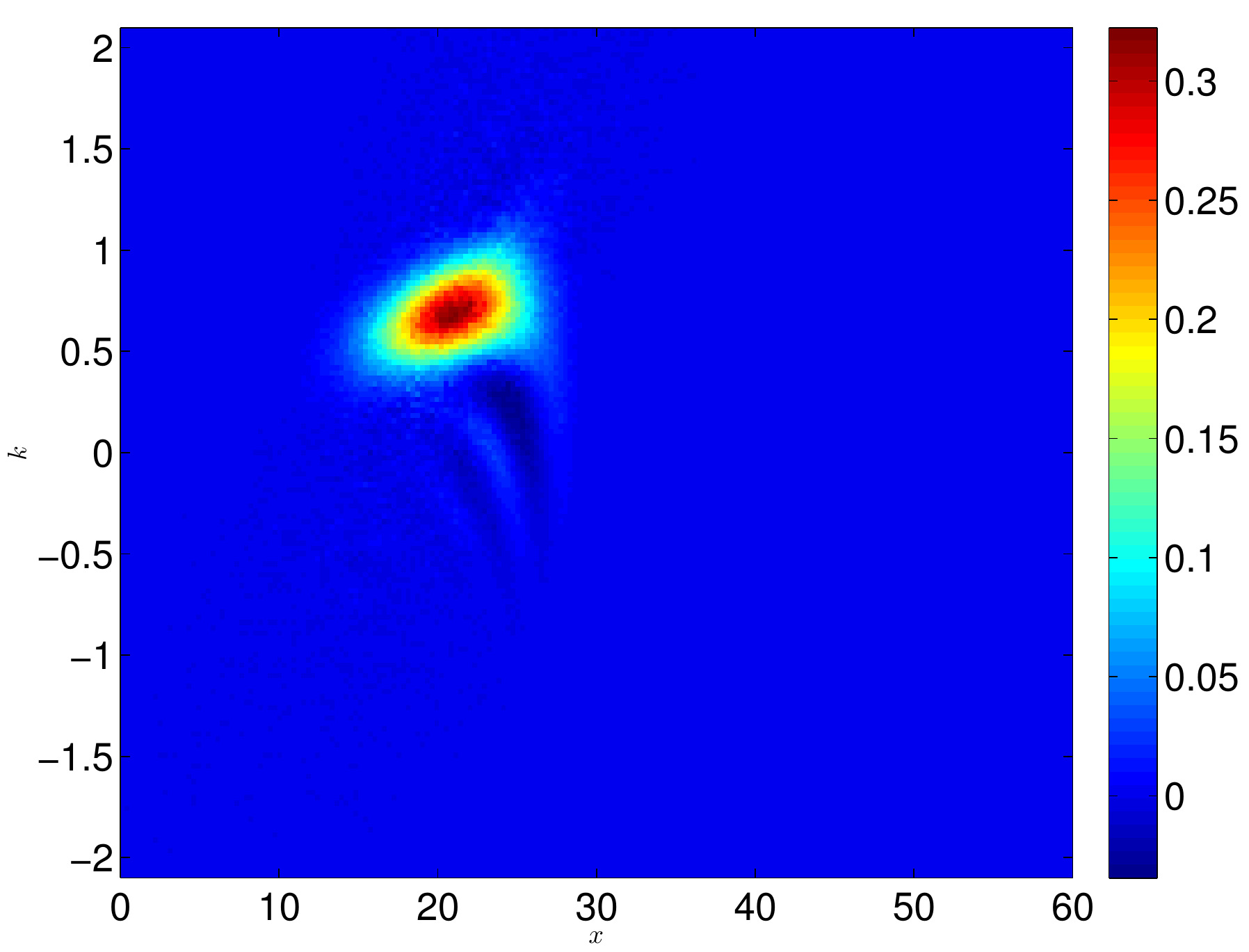}}}
\subfigure[$t=10$.]{\includegraphics[width=0.49\textwidth,height=0.27\textwidth]{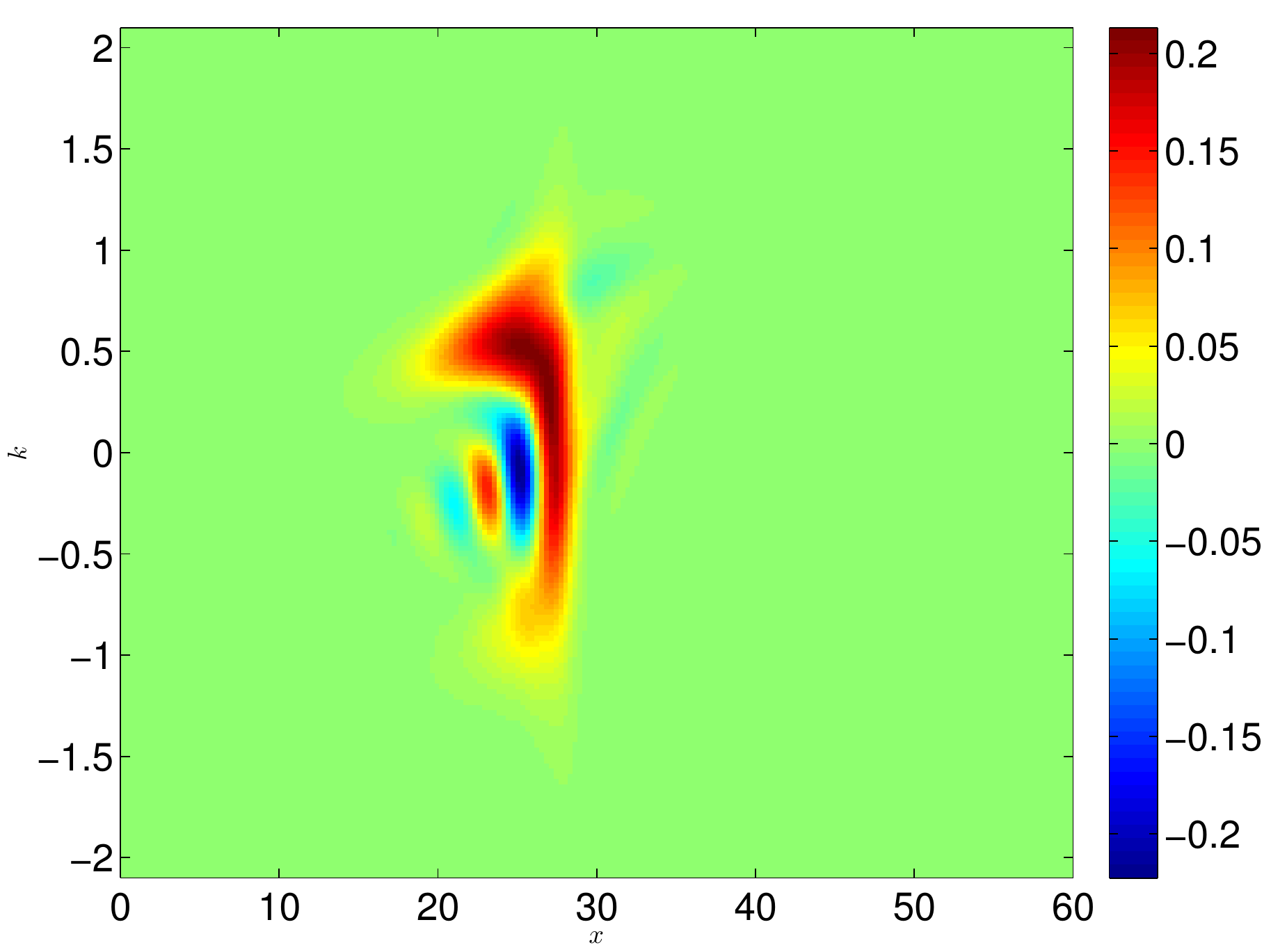}{\includegraphics[width=0.49\textwidth,height=0.27\textwidth]{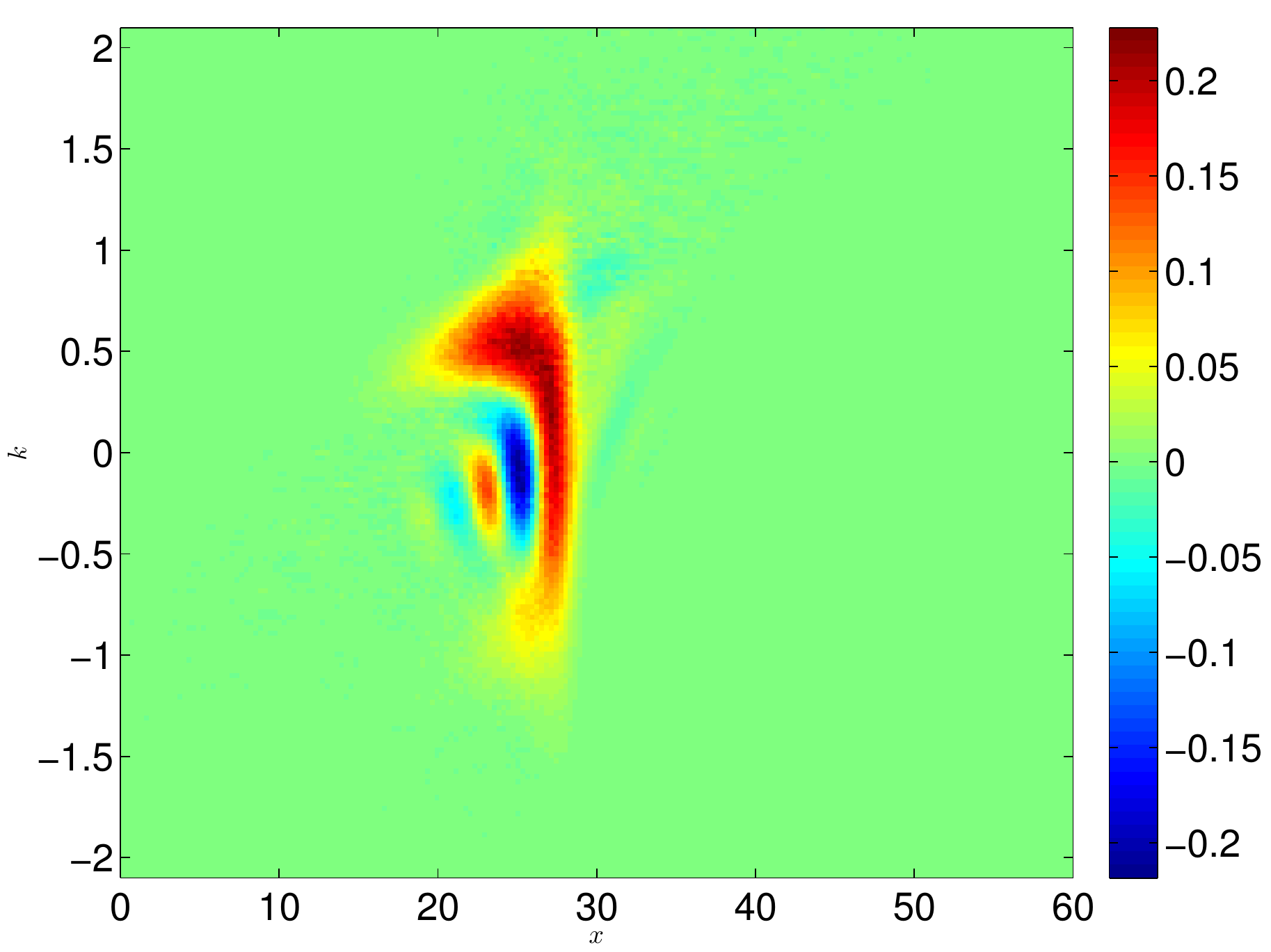}}}
\subfigure[$t=15$.]{\includegraphics[width=0.49\textwidth,height=0.27\textwidth]{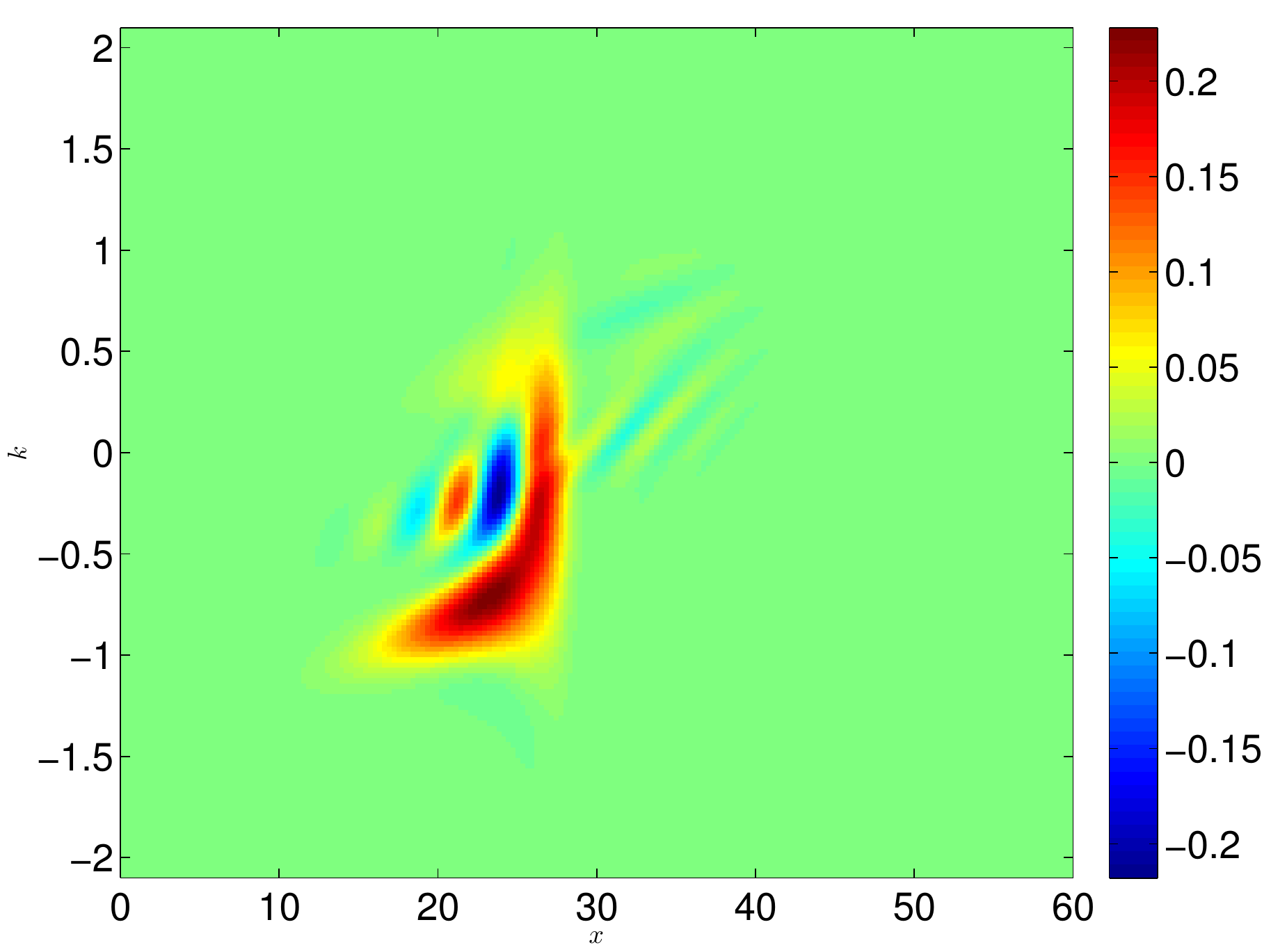}{\includegraphics[width=0.49\textwidth,height=0.27\textwidth]{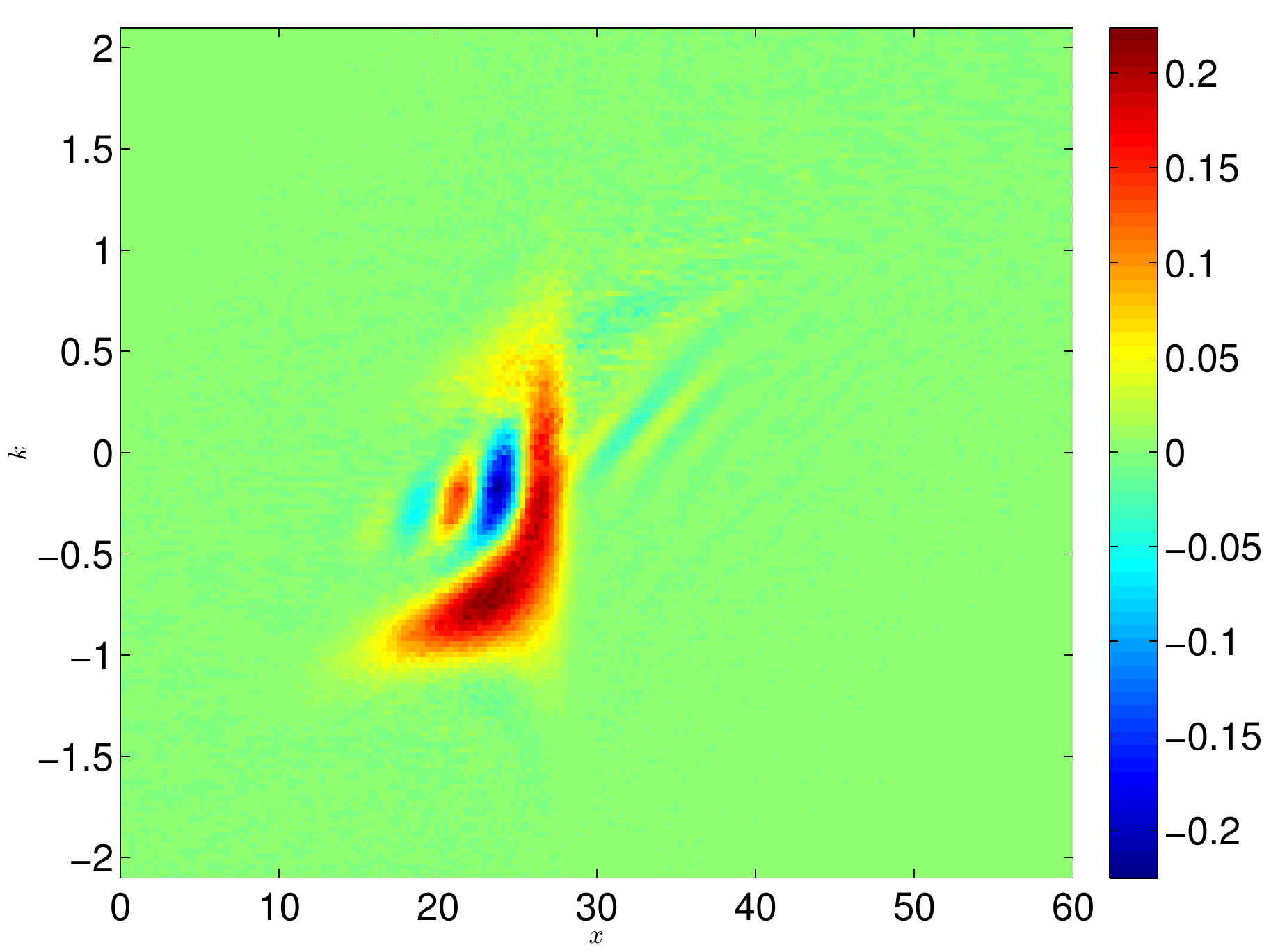}}}
\subfigure[$t=20$.]{\includegraphics[width=0.49\textwidth,height=0.27\textwidth]{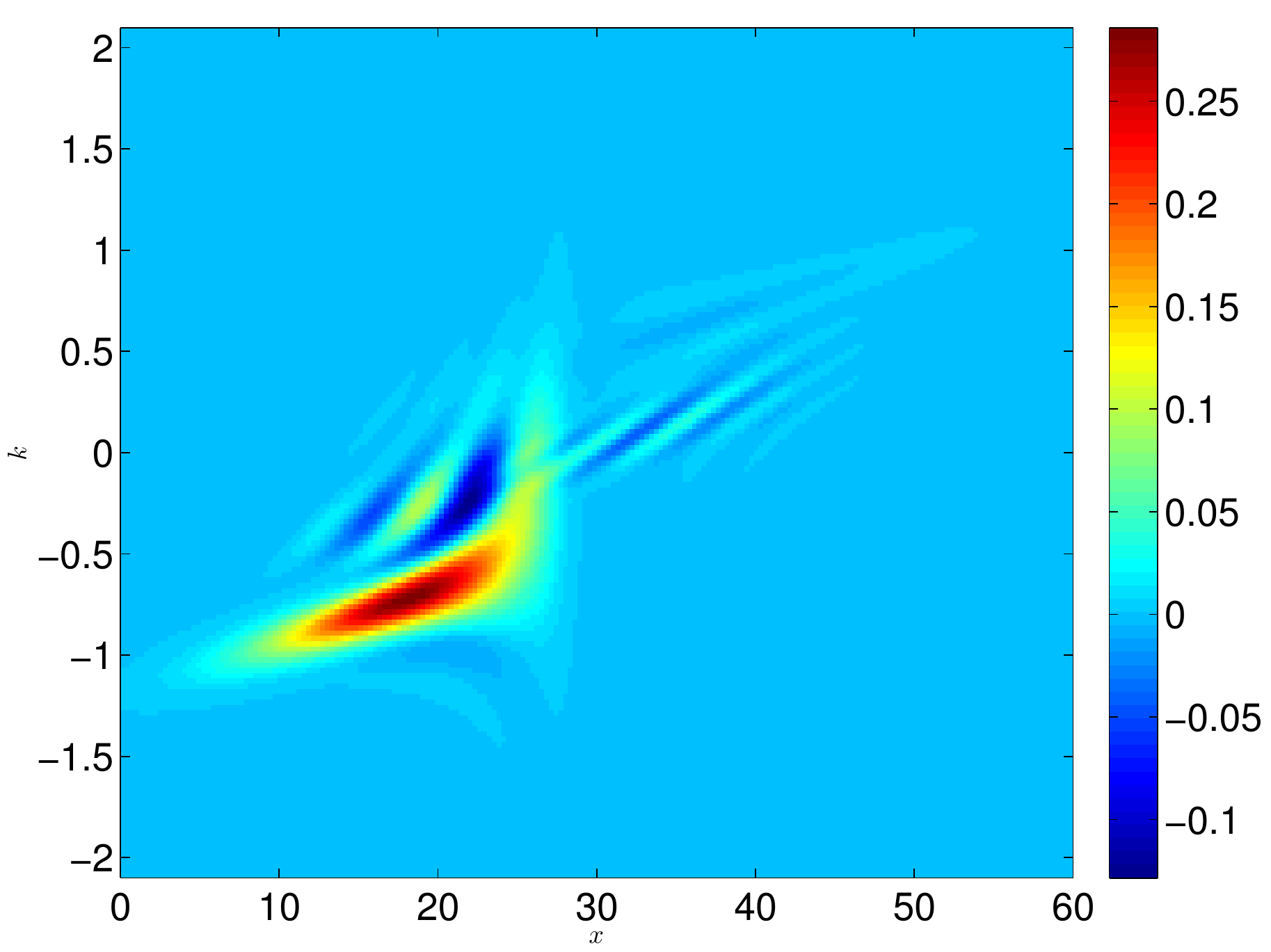}{\includegraphics[width=0.49\textwidth,height=0.27\textwidth]{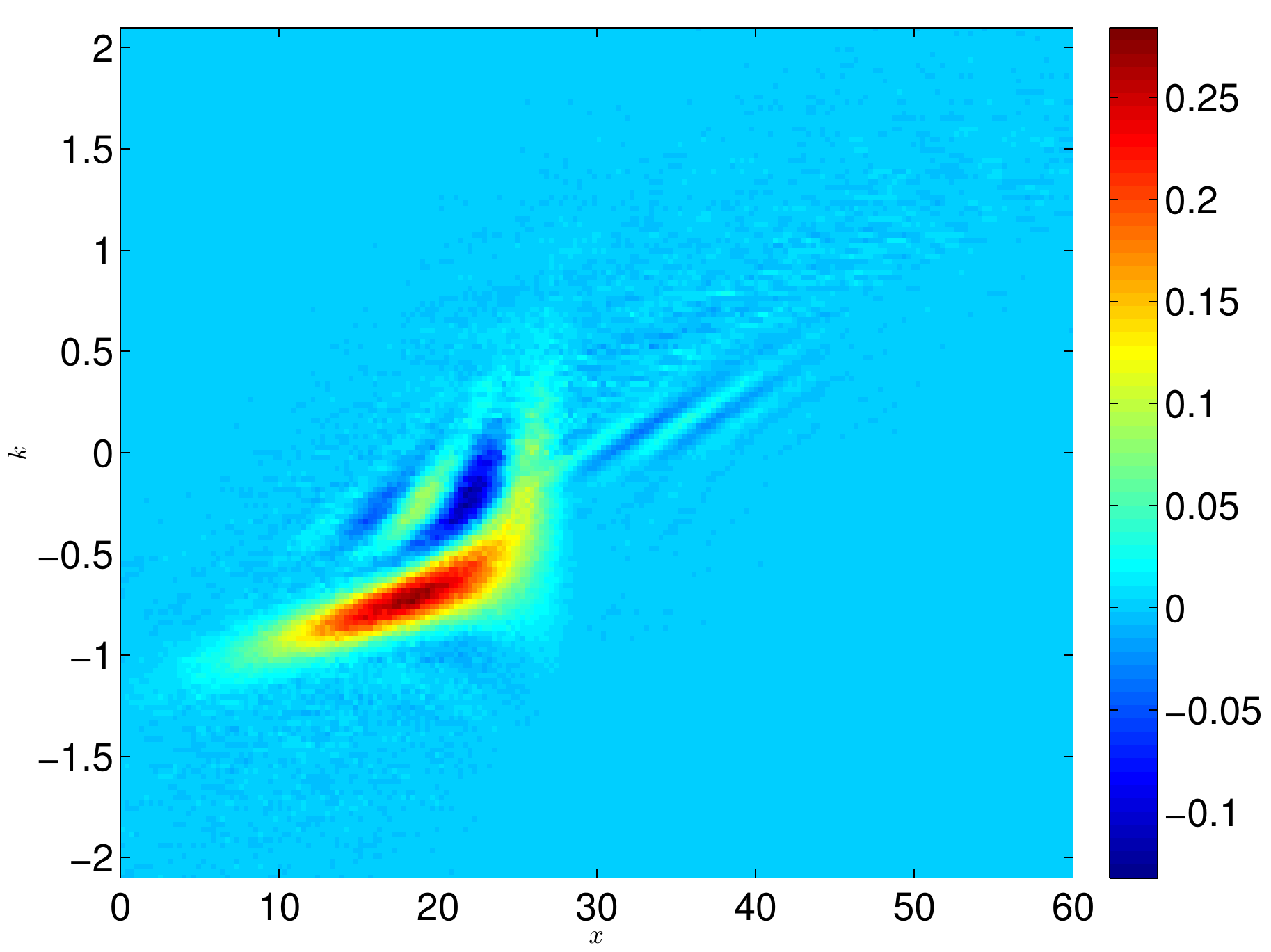}}}
\caption{\small Total reflection by the Gaussian barrier: Numerical Wigner functions at different time instants $t=5,20,15,20$fs.
The reference solution by SEM is displayed in the left-hand-side column, while the right-hand-side column shows the numerical solution obtained by WBRW with the auxiliary function $\gamma(x)=2\check{\xi}$ as well as $\Delta t=1$fs and $T_A=1$fs.}
\label{fig:ex2-wf}
\end{figure}

$\bullet$ \textbf{Experiment 1} To be convenient for comparison, we
first take the same experiment as  utilized before in
\cite{ShaoSellier2015}, in which the barrier height is set to be
$H=0.3$eV so that the Gaussian wavepacket will be partially
reflected. Such partial reflection is clearly shown in
Fig.~\ref{fig:ex1-wf}. Five groups of tests with different time
steps and annihilation frequencies are performed and related data
are displayed in Table \ref{tab:eg1-1}, from which we are able to
find several observations below concerning the accuracy.

\begin{description}
\item[(1)] The idea of choosing constant auxiliary function $\gamma(x)\equiv\gamma_0$ works very well. As we expected, with the same $\Delta t$ and $T_A$,
the larger value $\gamma_0$ takes, the more accurate solution we
obtain, though more running time it spends. Interestingly, both
accuracy and efficiency of the model with $\gamma(x)=\xi(x)$ are
very close to those of the model with $\gamma(x)\equiv\check{\xi}$, where
$\check{\xi}=\max_{x\in\fx} \xi(x)$. This also justifies the
proposed mathematical framework, the algorithm of branching process
as well as the implementation in some sense.
\item[(2)] Highly frequent annihilation operations,
e.g., $T_A=0.1$fs, destroy the accuracy,
even larger constant auxiliary function cannot save it.
But this does not mean a low annihilation frequency should be appreciated.
Actually, when using $T_A=4$fs, the accuracy becomes worse than that using
$T_A=1$fs or 2fs, which may be due to the accumulated numerical errors, such as the bias caused by the resampling. That is, as we mentioned before,
the annihilation adopted here is nothing but a kind of resampling according to the histogram, and thus possibly  cause some random noises due to its discontinuous nature.
\item[(3)] While using the same annihilation frequency, say $T_A=1$fs,
smaller time step, e.g., $\Delta t=0.008$fs, cannot improve the accuracy, as predicted by Theorem~\ref{th:G(t)}. In fact,
the accuracy with $\Delta t=0.008$fs is almost identical to that with $\Delta t=1$fs.
But the former takes much more running time than the latter. Moreover,
too small time steps will significantly reduce the probability of branching,
which is crucial to capture the quantum information in stochastic Wigner simulations,
so that nearly all particles do field-less travel in phase space.
This point has been also mentioned in \cite{ShaoSellier2015} when choosing the time step.
\end{description}

\begin{figure}
\vspace{-2cm}
\subfigure[$t=5$.]{\includegraphics[width=0.49\textwidth,height=0.35\textwidth]{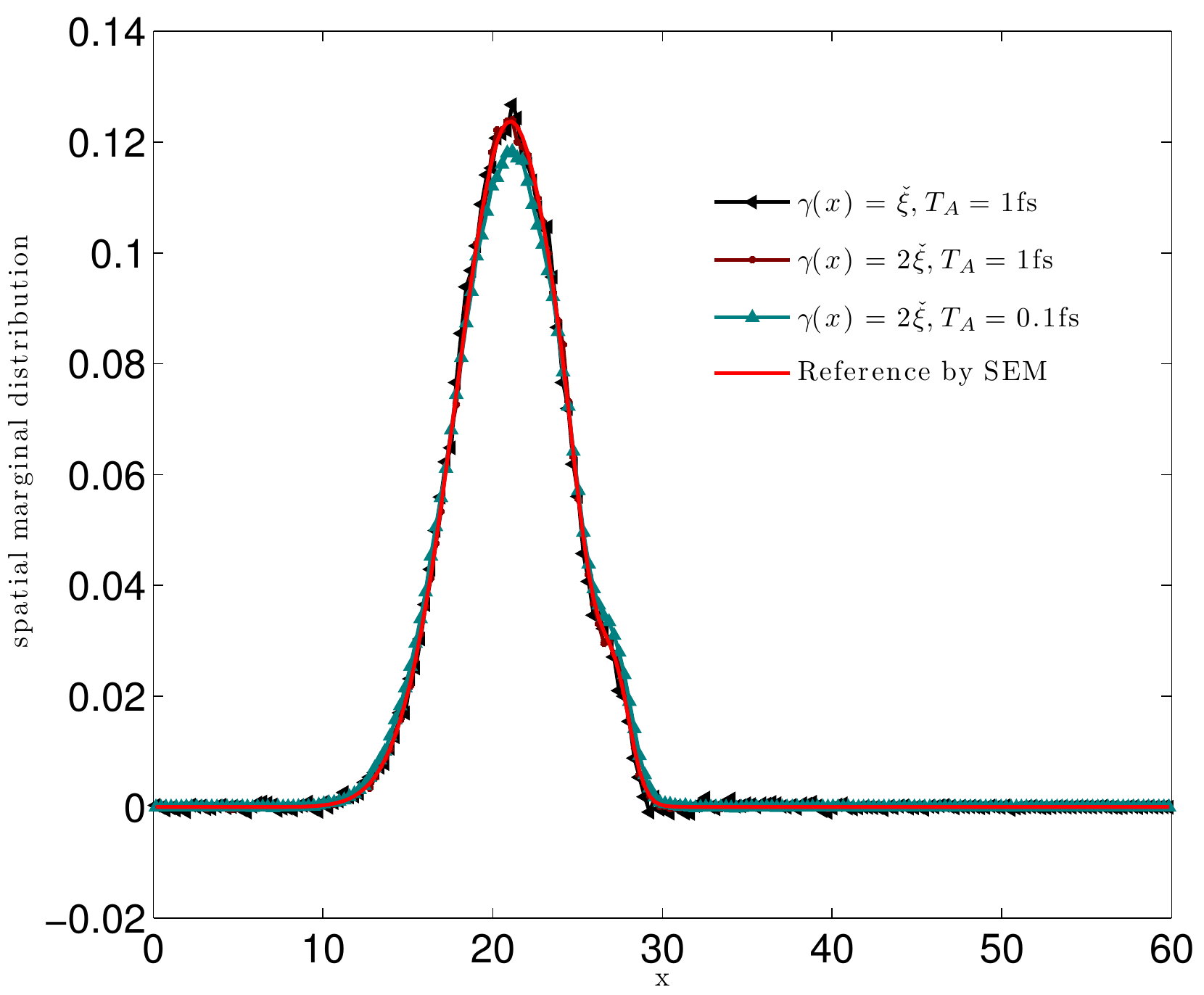}
\includegraphics[width=0.49\textwidth,height=0.35\textwidth]{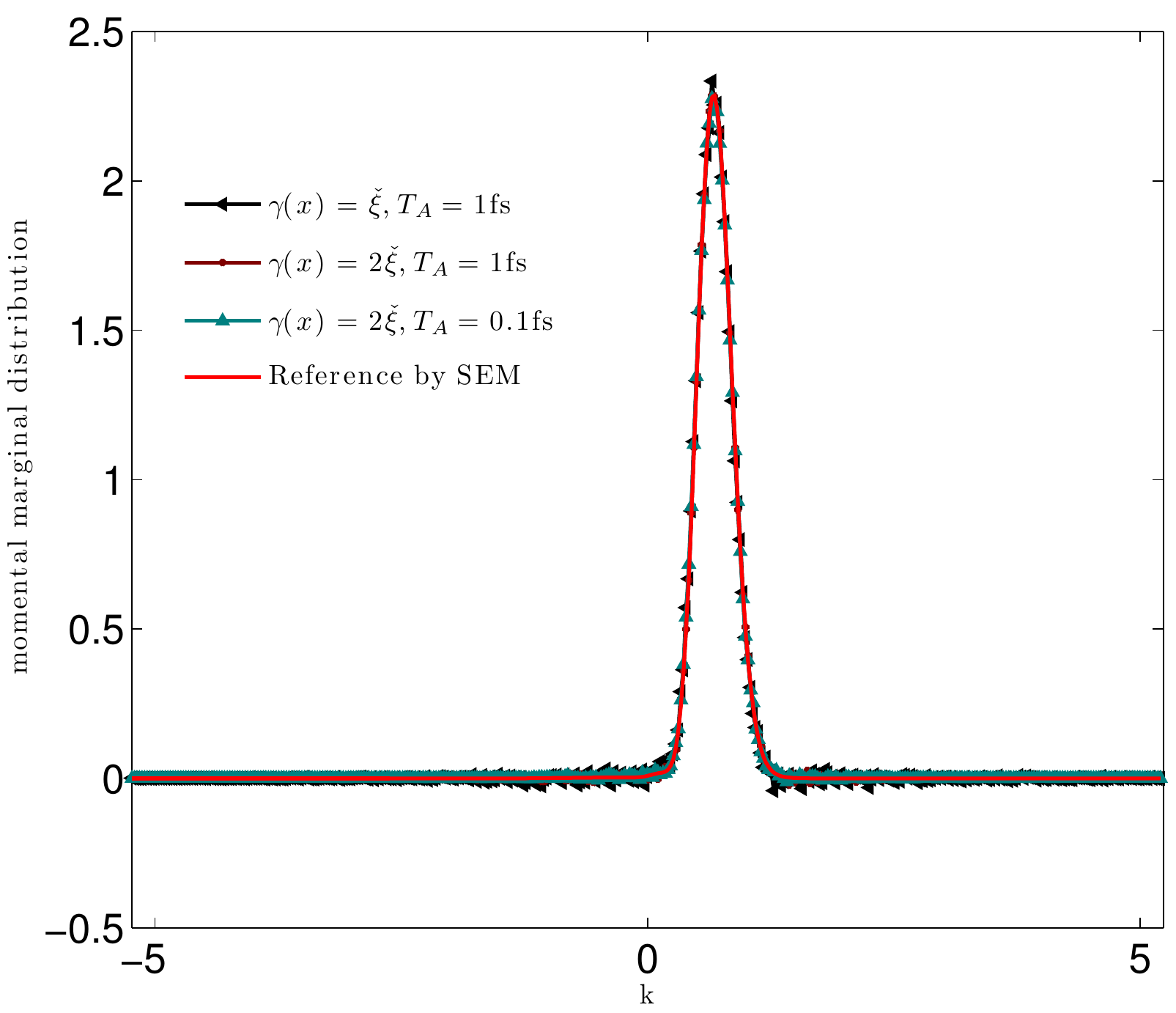}}\\
\subfigure[$t=10$.]{\includegraphics[width=0.49\textwidth,height=0.35\textwidth]{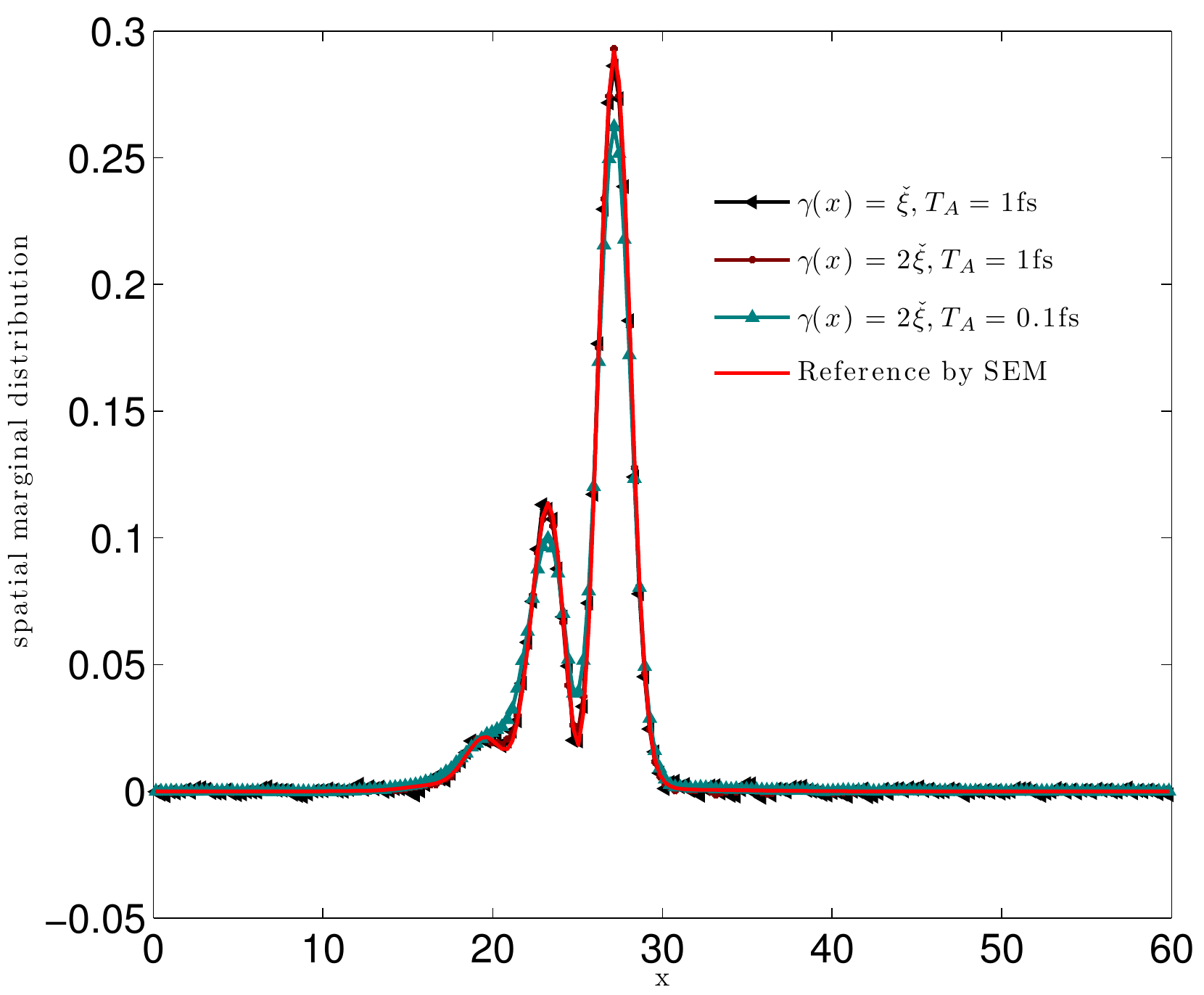}
\includegraphics[width=0.49\textwidth,height=0.35\textwidth]{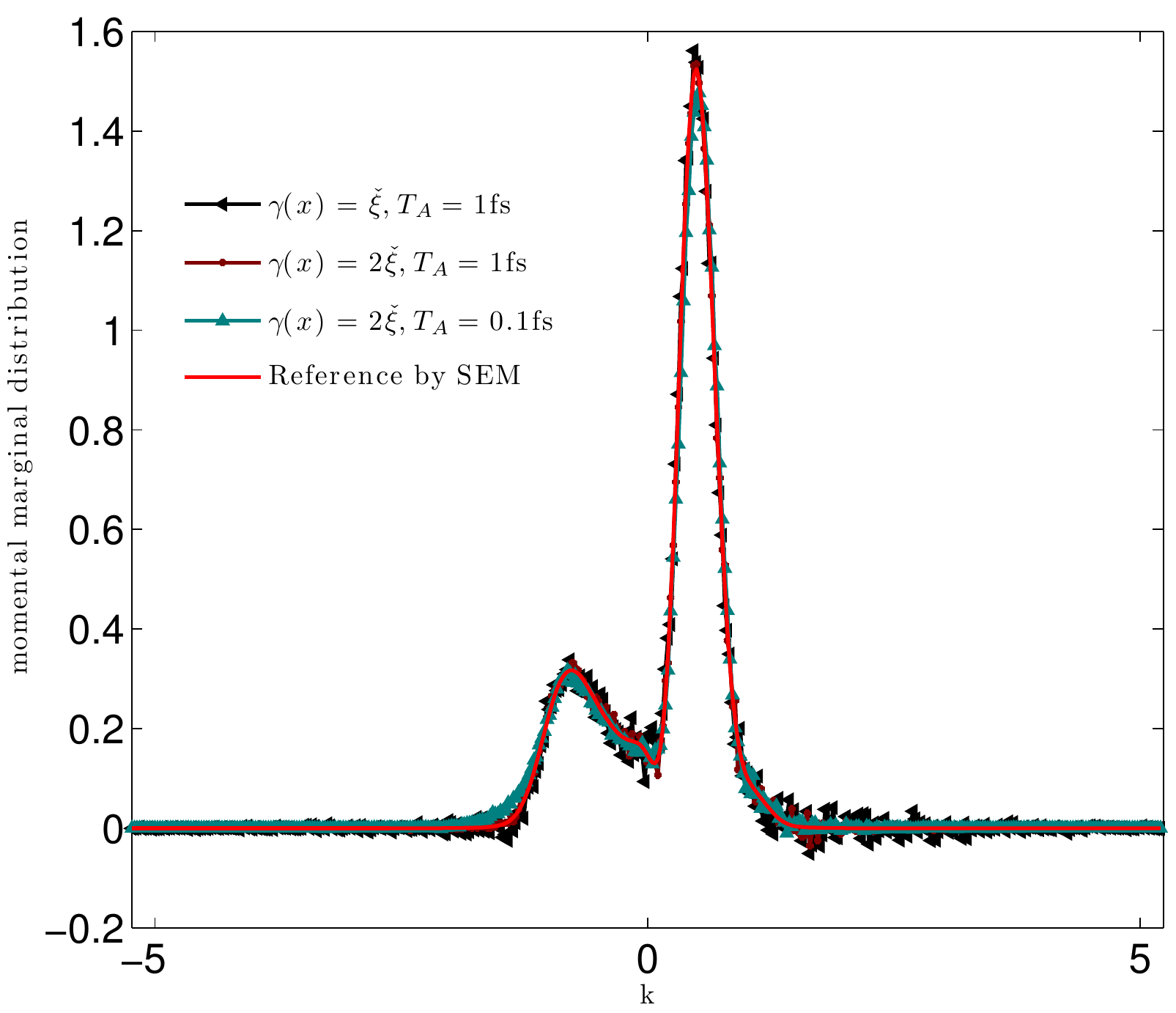}}
\subfigure[$t=15$.]{\includegraphics[width=0.49\textwidth,height=0.35\textwidth]{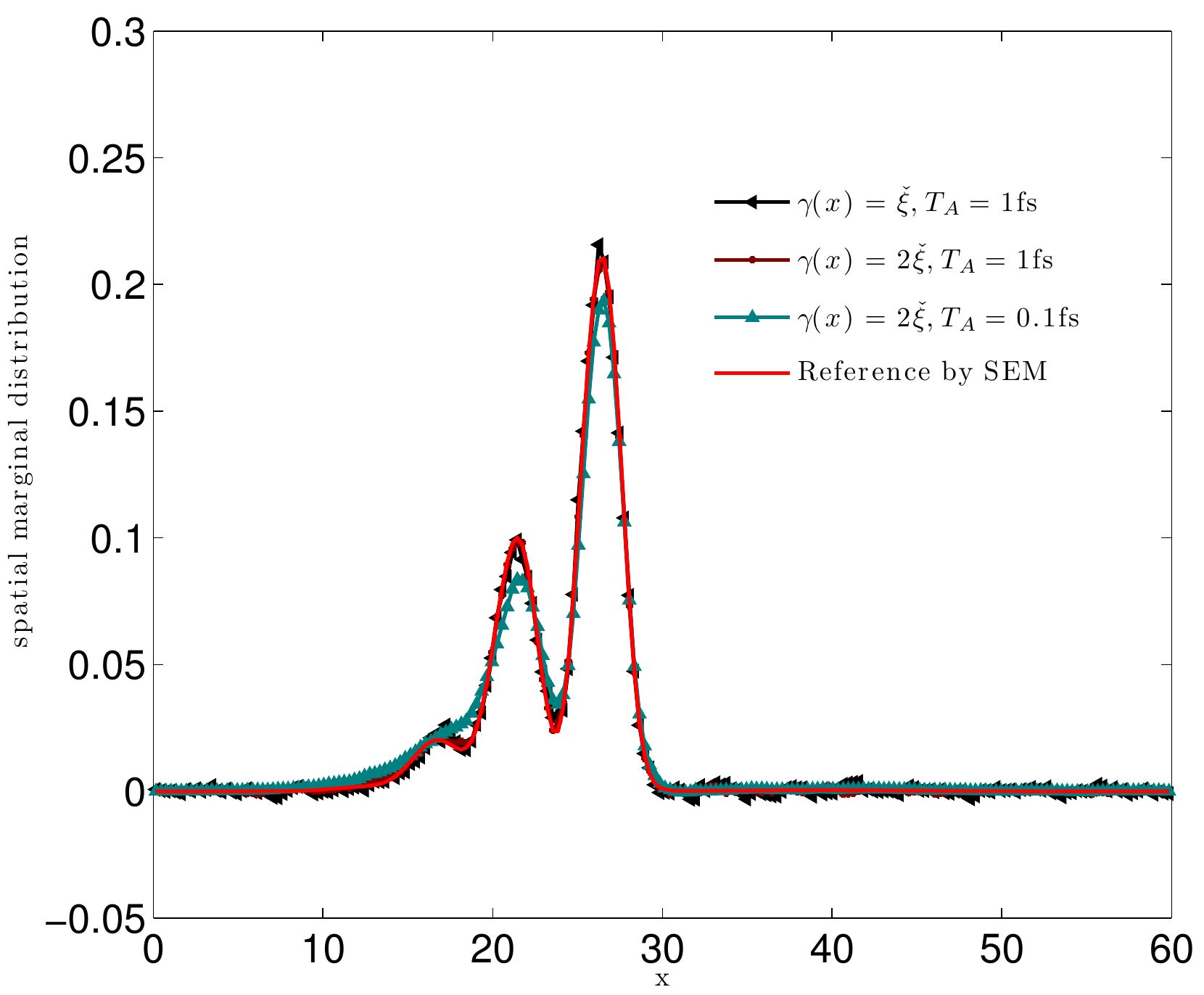}
\includegraphics[width=0.49\textwidth,height=0.35\textwidth]{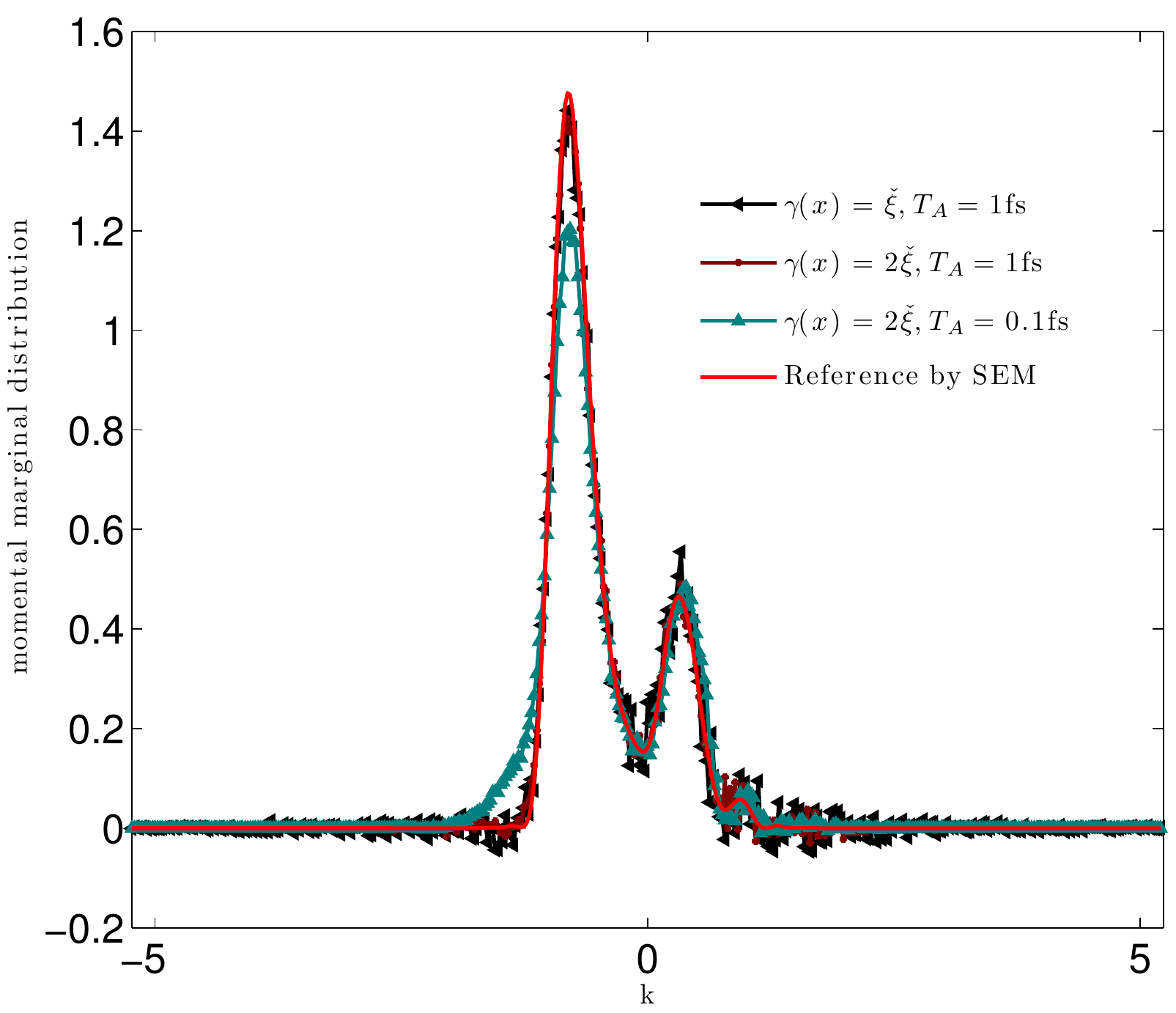}}\\
\subfigure[$t=20$.]{\includegraphics[width=0.49\textwidth,height=0.35\textwidth]{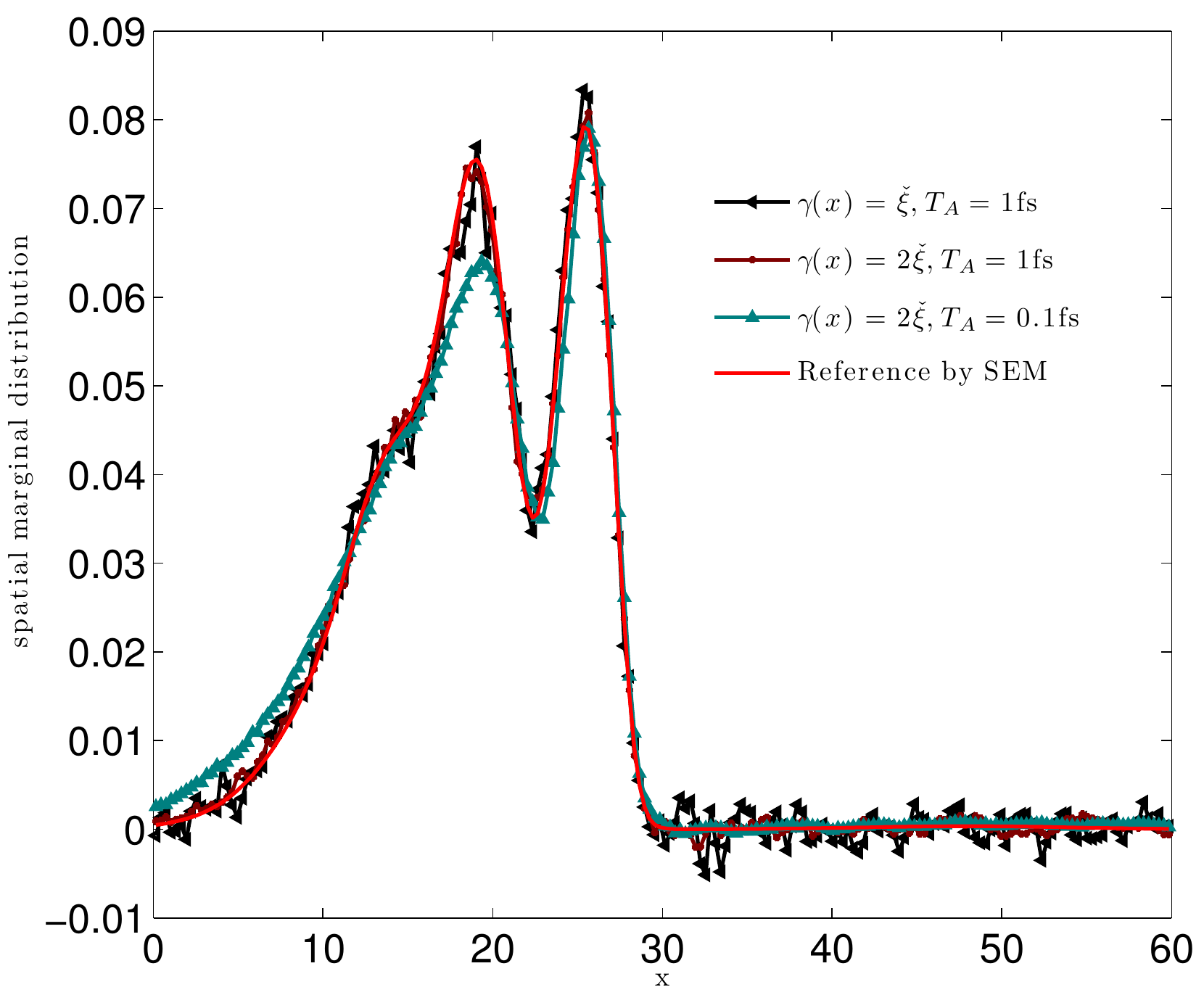}
\includegraphics[width=0.49\textwidth,height=0.35\textwidth]{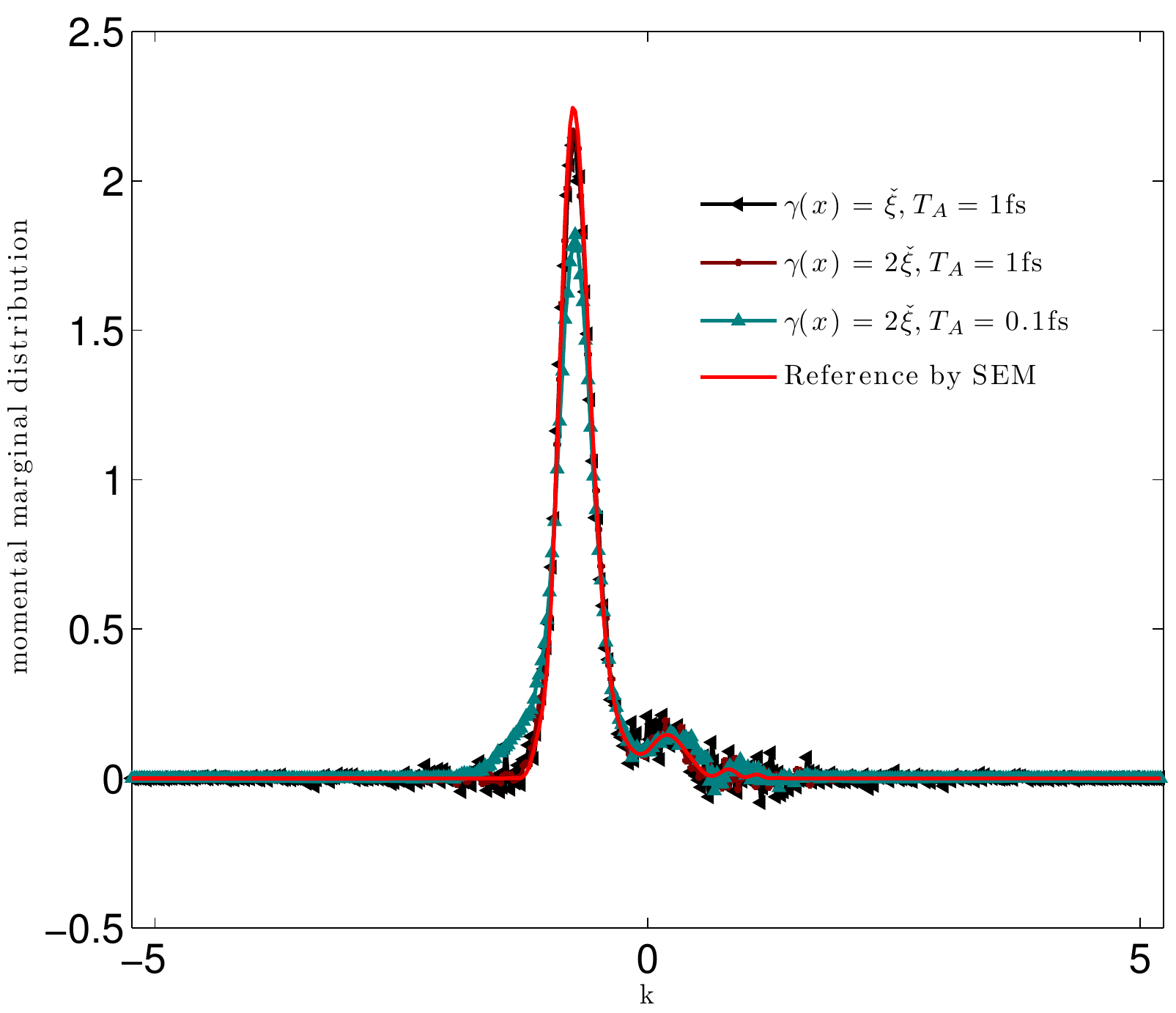}}
\vspace{-0.2cm}
\caption{\small Total reflection by the Gaussian barrier: Spatial (left column) and momental (right) marginal probability distributions at $t=5,10,15,20$fs.}
\label{fig:ex2-sm}
\end{figure}

Next we focus on the efficiency. One of the main variables shaping
the efficiency is the particle number. Since the same initial
particle distribution is employed for all runs, we only need to
consider the growth rate of particle number. Once choosing the
constant auxiliary function $\gamma_0$ and the annihilation
frequency $1/T_A$, the increasing multiple of particle number can be
exactly determined by $\me^{2\gamma_0T_A}$ from
Theorem~\ref{th:exp}, and the eighth column of Table \ref{tab:eg1-1}
shows corresponding theoretical predictions. In our numerical
simulations, within every annihilation period $[t^{i-1},t^i]$ with
$1\leq i\leq n_A$, we record the starting particle number
$\#_P^a(t^{i-1})$, the ending particle number $\#_P^b(t^i)$, and the
related growth rate $M_i$ in Eq.~\eqref{eq:rate}. Let
\[
\check{\#_P^{a}} = \max\limits_{0\leq i\leq {n_A}-1}\{\#_P^{a}(t^i)\},
\quad
\check{\#_P^{b}} = \max\limits_{1\leq i\leq n_A}\{\#_P^b(t^i)\},
\quad
\overline{M} = \frac{1}{n_A}\sum_{i=1}^{n_A} M_i.
\]
Table \ref{tab:eg1-1} gives numerical values of above three
quantities, see the fifth, sixth and seventh columns. According to
Table \ref{tab:eg1-1}, we can figure out the following facts on the
efficiency.

\begin{table}
  \centering
  \caption{\small Total reflection by the Gaussian barrier: Numerical data for WBRW. Detailed explanations are referred to Table~\ref{tab:eg1-1}, except for $\check{\xi}\approx$ 1.28 here.}
\label{tab:eg2-1}
 \begin{lrbox}{\tablebox}
  \begin{tabular}{ccccccccc}
\hline\hline
$\gamma(x)$ & $\text{err}_{wf}$ & $\text{err}_{sm}$ & $\text{err}_{mm}$ & $\check{\#}_P^b$ & $\check{\#}_P^a$ & $\overline{M}$ & $\me^{2\gamma_0T_A}$ & Time\\
\hline
\multicolumn{9}{c}{$\Delta t=1$fs, $T_A=1$fs}\\
\hline
$\xi(x)$ &
2.6210E-01&
7.1421E-02&
7.6208E-02&
48.63 &
3.95 &
12.45 &
--&
11.33   \\
$\check{\xi}$ &
2.5998E-01 &
6.7069E-02 &
8.4044E-02 &
49.58 &
3.94 &
12.71 &
13.04 &
12.48  \\
$2\check{\xi}$ &
1.3034E-01 &
3.1041E-02 &
5.5122E-02&
479.96 &
2.87 &
167.36 &
170.12 &
113.62  \\
\hline
\multicolumn{9}{c}{$\Delta t=0.1$fs, $T_A=0.1$fs}\\
\hline
$\xi(x)$ &
3.3899E-01&
1.0171E-01&
1.9413E-01&
3.22 &
2.50 &
1.29 &
--&
25.33  \\
$\check{\xi}$ &
3.4201E-01&
1.0674E-01&
1.9785E-01&
3.23 &
2.50 &
1.29 &
1.29 &
25.87  \\
$2\check{\xi}$ &
3.3697E-01&
1.0959E-01&
1.9413E-01&
4.05 &
2.43 &
1.67 &
1.67 &
28.98   \\
$3\check{\xi}$ &
3.4011E-01&
1.1157E-01&
1.9734E-01&
5.18 &
2.40 &
2.16 &
2.16 &
29.80  \\
\hline
\hline
  \end{tabular}
\end{lrbox}
\scalebox{0.88}{\usebox{\tablebox}}
\end{table}

\begin{description}
\item[(1)] Agreement between the mean value $\overline{M}$ and
the theoretical prediction $\me^{2\gamma_0T_A}$ is readily seen in
all situations. In this case, the average acceptance ratio
$\alpha_0$ almost equals to one due to the localized structure of
the Wigner kernel \eqref{eq:gp_vw}. Actually, the growth rates in
the first five annihilation periods, e.g. for $T_A=1$fs and
$\gamma_0=3\check{\xi}$, are 5.92, 5.87, 5.88, 5.89, and 5.88, all
of which are almost identical to the mean value of 5.89. When
$T_A=1,2,4$fs, the former is a little less than the latter, because
the particles moving outside the computational domain $\fx\times\fk$
are not taken into account. Within each annihilation period, the
maximum travel distance of particles can be calculated by
\[
\frac{\hbar}{m}\cdot \max\limits_{k\in\fk}\{|k|\} \cdot T_A,
\]
implying that, the larger value $T_A$ is, the more particles move
outside the domain. This explains the slight deviation between
$\overline{M}$ and $\me^{2\gamma_0T_A}$ increases from almost zero
to at most 1.33 as $T_A$ increases from $0.1$fs to $4$fs. Moreover,
when $T_A=1$fs, the increasing multiples for $\Delta t=0.008$fs are
identical to those for $\Delta t=1$fs, i.e., $\overline{M}$ is
independent of $\Delta t$, which has been also already predicted by
the theoretical analysis. More details about the agreement of the
growth rates of particle number with the theoretical prediction for
$T_A=1$fs and $\Delta t=0.008$fs can be found in
Fig.~\ref{fig:ex1-np}.
\item[(2)] Not like using the constant auxiliary function, we do not have a simple calculation formula so far for the growth rate of particle number when using the variable auxiliary function (i.e., depending both on time and trajectories). However, we can still utilize the growth rate for the case of $\gamma_0=\check{\xi}$ to provide a close upper bound for
the case of $\gamma(x)=\xi(x)$. As shown in the seventh column of
Table~\ref{tab:eg1-1}, the variation of the mean growth rate between
them is about 0, 0.03, 0.12 and 0.74 for $T_A=$ 0.1fs, 1fs, 2fs and
4fs, respectively. Fig.~\ref{fig:ex1-np} further compares the curves
of growth rate for $T_A=1$fs and $\Delta t=0.008$fs within four
typical annihilation periods. By comparing with the Wigner functions
shown in Fig.~\ref{fig:ex1-wf}, we find that the closer to the
center the Gaussian wavepacket lives, the larger the deviation
between the curves for the constant and variable auxiliary functions
becomes. Such deviation in accordance with the analysis of $\xi(x)$
shown in Fig.~\ref{fig:xi}
validates the proposed mathematical theory again.
\item[(3)] During the resampling (annihilation) procedure,
the main objective is to reconstruct the Wigner distribution using
less particles, which explores the cancelation of the weights with
opposite signs, see Eq.~\eqref{eq:s}. The sixth column of
Table~\ref{tab:eg1-1} tells us that the maximum particle numbers
after resampling for $T_A=1,2,4$fs are all around 3.00 million,
implying that there should be a minimal requirement of  particle
number to achieve a comparable accuracy. Otherwise, the accuracy
will decrease, for example, the values of $\check{\#}_P^a$ for
$T_A=0.1$fs are around 2.55 million. Fig.~\ref{fig:ex1-npaa} shows
more clearly the typical history of ${\#}_P^a(t)$. We can find there
that, no matter how huge the particle number before the annihilation
${\#}_P^b(t)$ (which depends on both $\gamma(x)$ and $T_A$) is, the
particle number after the annihilation ${\#}_P^a(t)$ for the
simulations with comparable accuracy exhibits almost the same
behavior, which recovers and extends the so-called ``bottom line"
structure described in \cite{ShaoSellier2015}. Such behavior may
depend only on the oscillating structure of the Wigner function. On
the other hand, highly frequent annihilations like $T_A=0.1$fs
destroy this bottom line structure and thus the accuracy, see
Fig.~\ref{fig:ex1-npaa-b}, implying that there are no enough
particles to capture the oscillating nature.
\end{description}

\begin{figure}
\subfigure[Relative errors.]{\includegraphics[width=0.49\textwidth,height=0.38\textwidth]{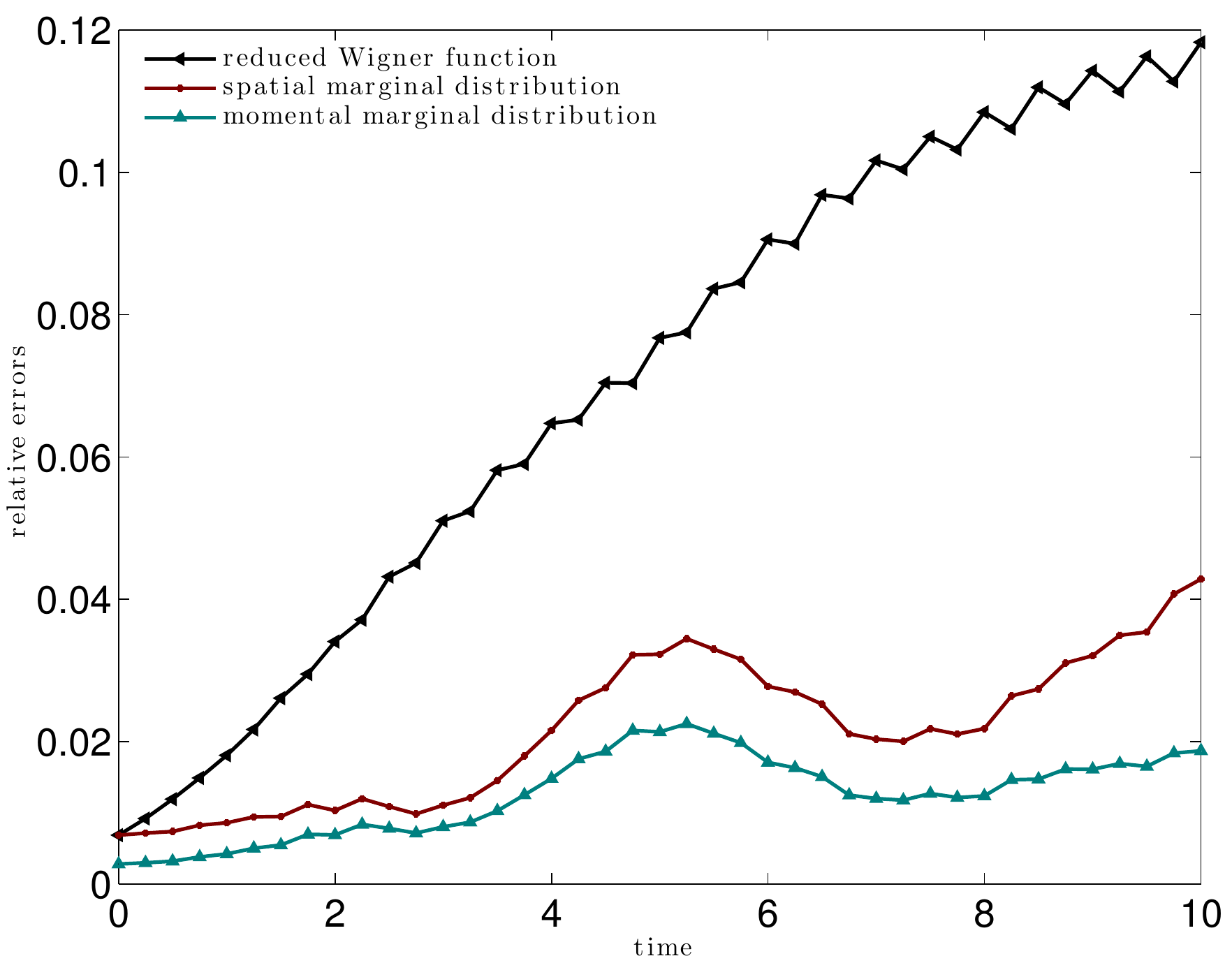} \label{fig:ex3-error}}
\subfigure[Particle number after resampling.]{\includegraphics[width=0.49\textwidth,height=0.38\textwidth]{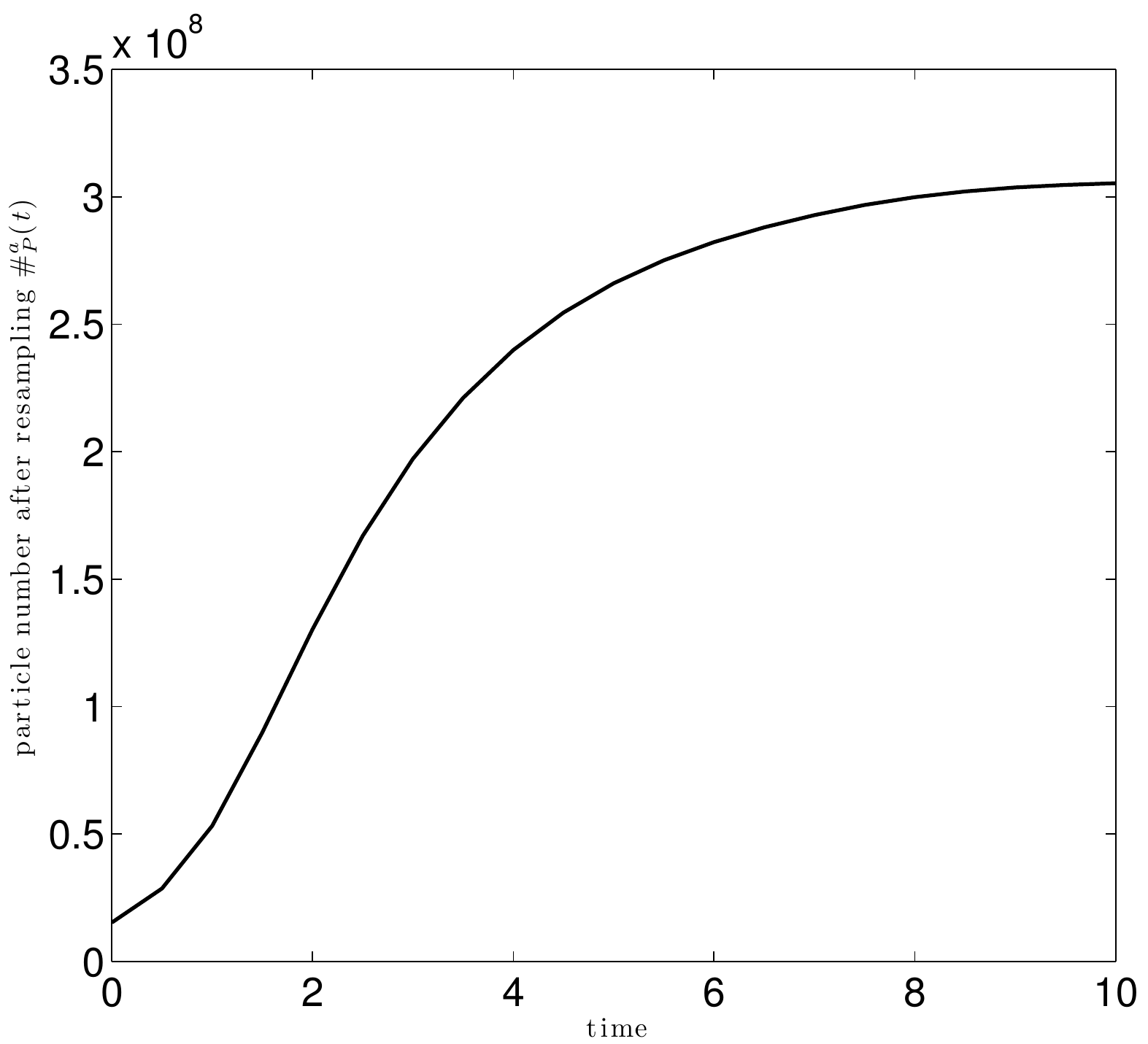} \label{fig:ex3-pn}}
\caption{\small The Helium-like system: The history of relative errors and particle number after resampling.}
\end{figure}

$\bullet$ \textbf{Experiment 2} In this example, we increase the
barrier height to $H=1.3$eV so that the Gaussian wavepacket will be
totally reflected, see Fig.~\ref{fig:ex2-wf}. Such augment of the
barrier height implies that the growth rate of particle number now
is about $1.3/0.3\approx4.33$ times larger than that for  Experiment
1, and thus it is more difficult to simulate accurately. Based on
the observations in Experiment 1, we only test two groups of
annihilation periods, $T_A=0.1, 1$fs. Table~\ref{tab:eg2-1}
summarizes the running data and confirms again that, the larger
constant auxiliary function improves the accuracy, whereas the
higher annihilation frequency destroys the accuracy. In order to get
a more clear picture on this accuracy issue, we plot both spatial
and momental probability distributions at different time instants
$t=5,10,15,20$fs in Fig.~\ref{fig:ex2-sm} against the reference
solutions by SEM. We can easily see there that, the loss of accuracy
when using $T_A=0.1$fs is mainly due to that there are no enough
generated particles to capture the peaks reflecting off the barrier;
while the increase of accuracy when using a larger constant
auxiliary function, e.g., $\gamma_0=2\check{\xi}$, comes from the
smaller variation. Actually, similar phenomena also occur in
Experiment 1.

\begin{figure}
\subfigure[$t=2.5$.]{\includegraphics[width=0.49\textwidth,height=0.27\textwidth]{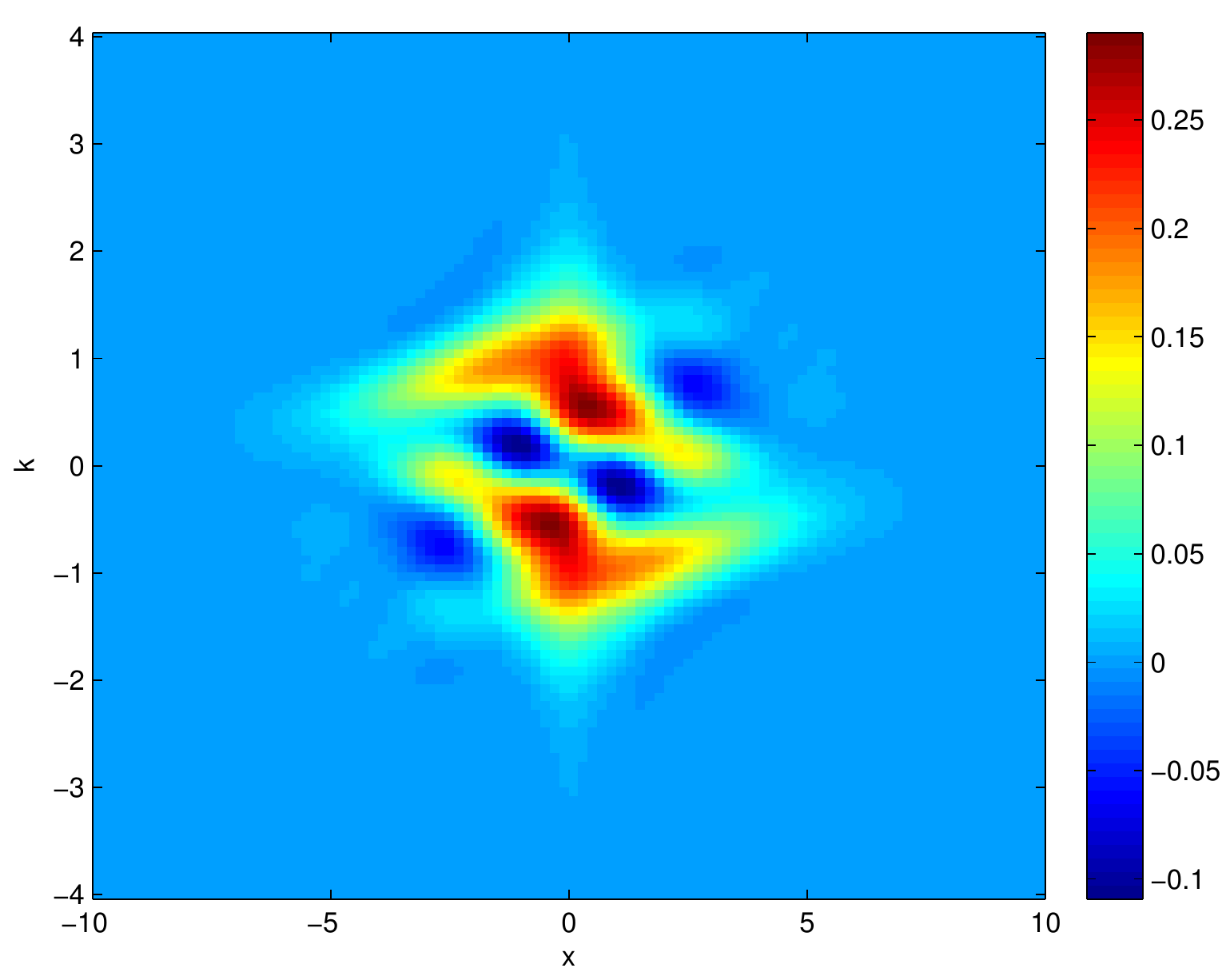}{\includegraphics[width=0.49\textwidth,height=0.27\textwidth]{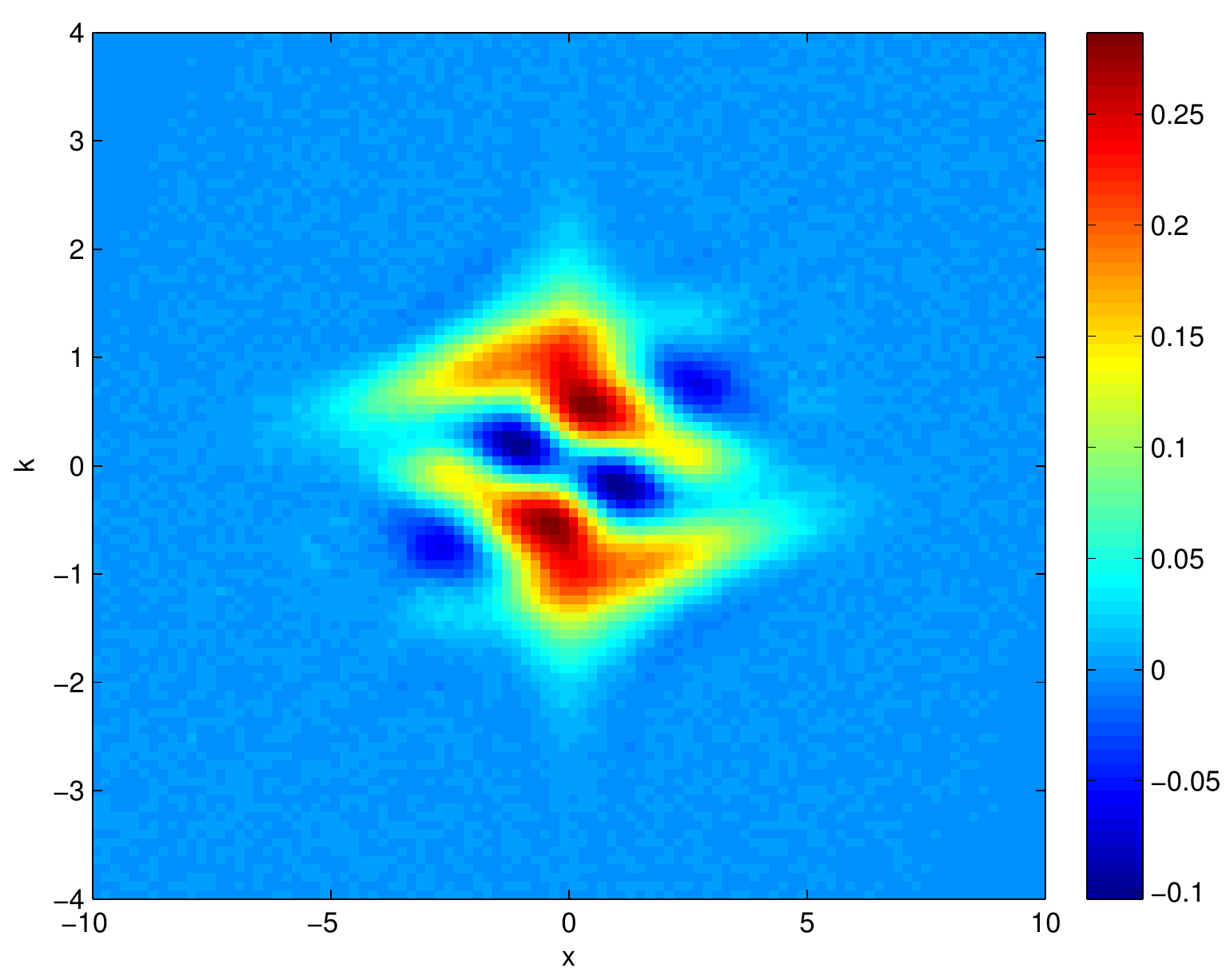}}}
\subfigure[$t=5$.]{\includegraphics[width=0.49\textwidth,height=0.27\textwidth]{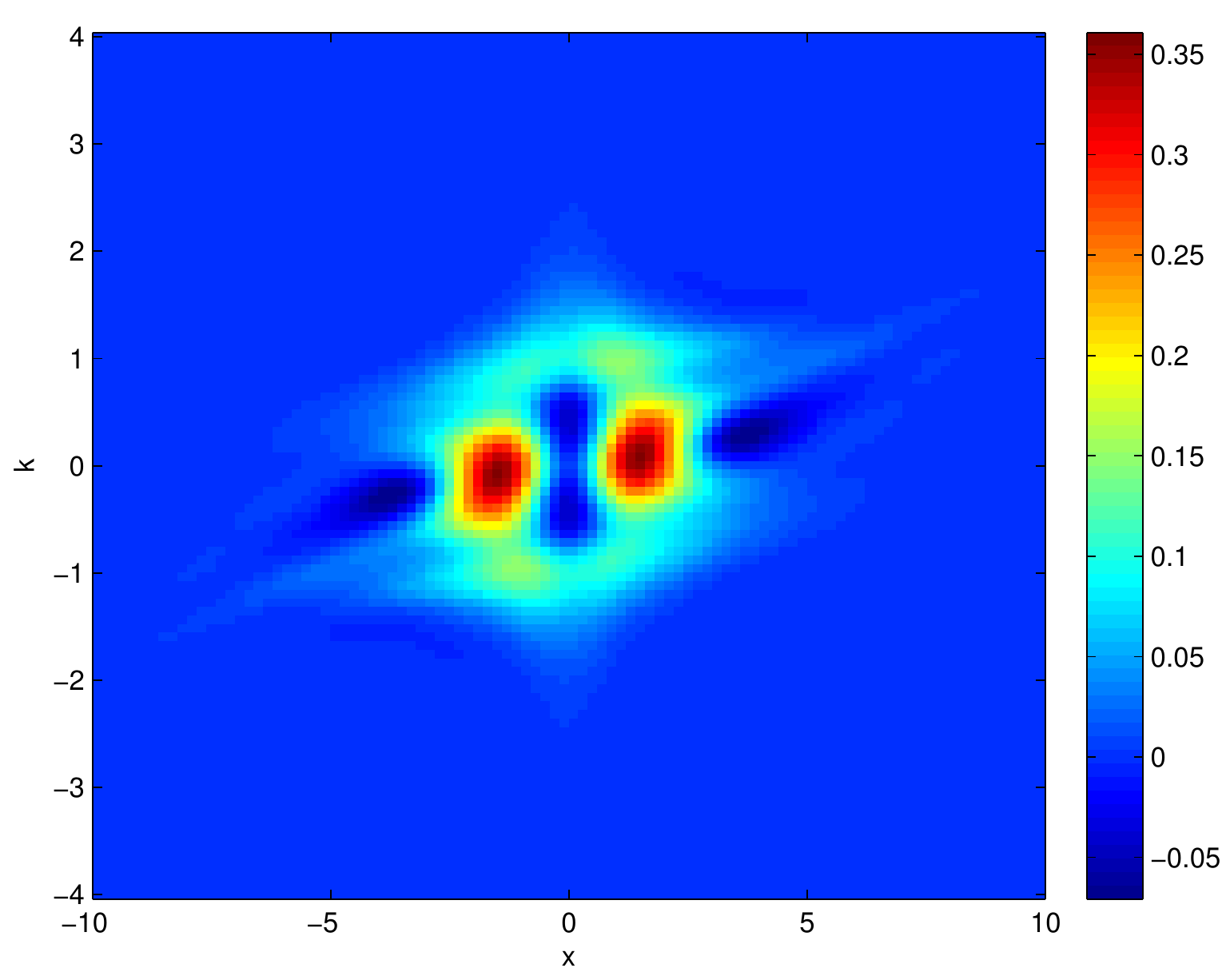}{\includegraphics[width=0.49\textwidth,height=0.27\textwidth]{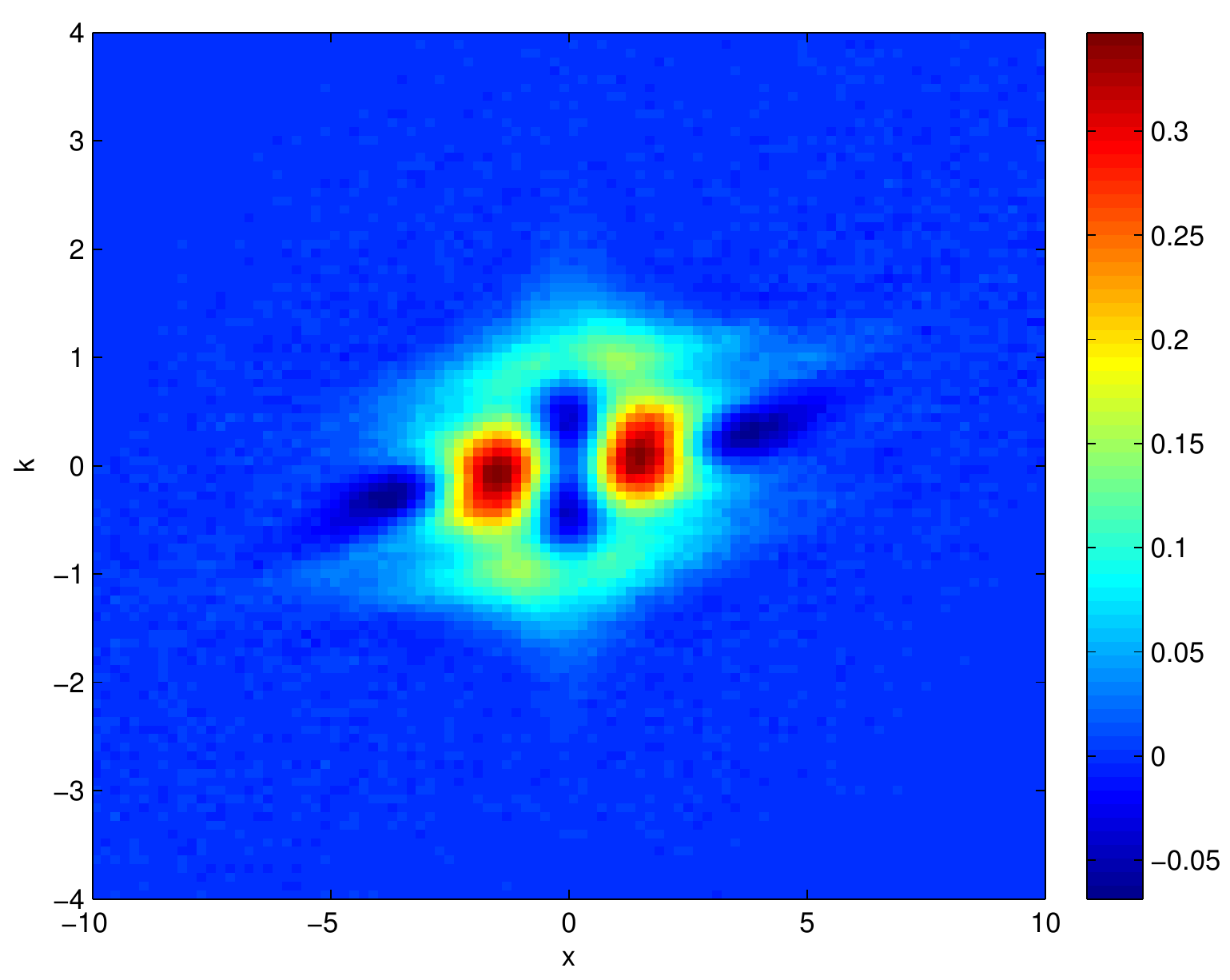}}}
\subfigure[$t=7.5$.]{\includegraphics[width=0.49\textwidth,height=0.27\textwidth]{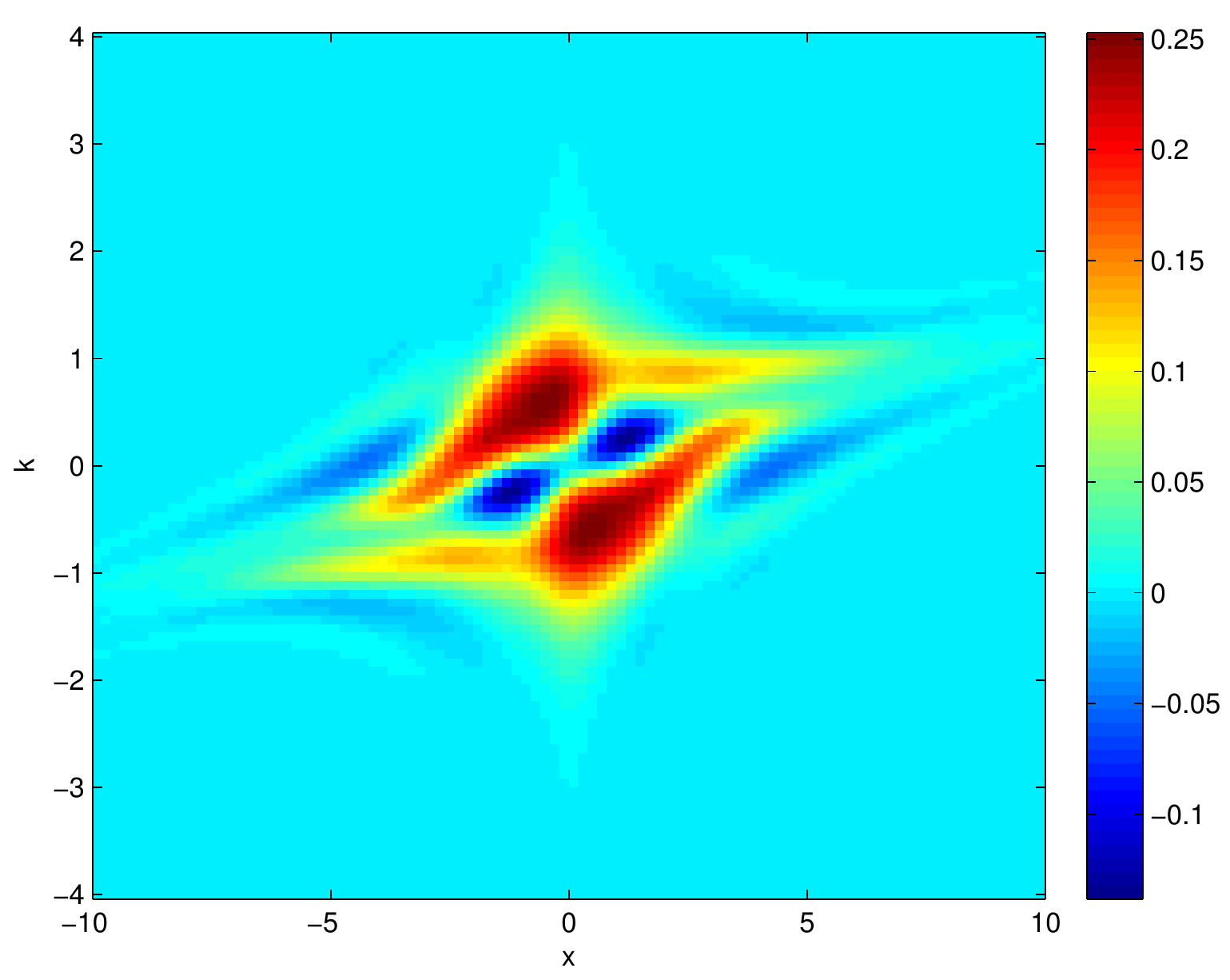}{\includegraphics[width=0.49\textwidth,height=0.27\textwidth]{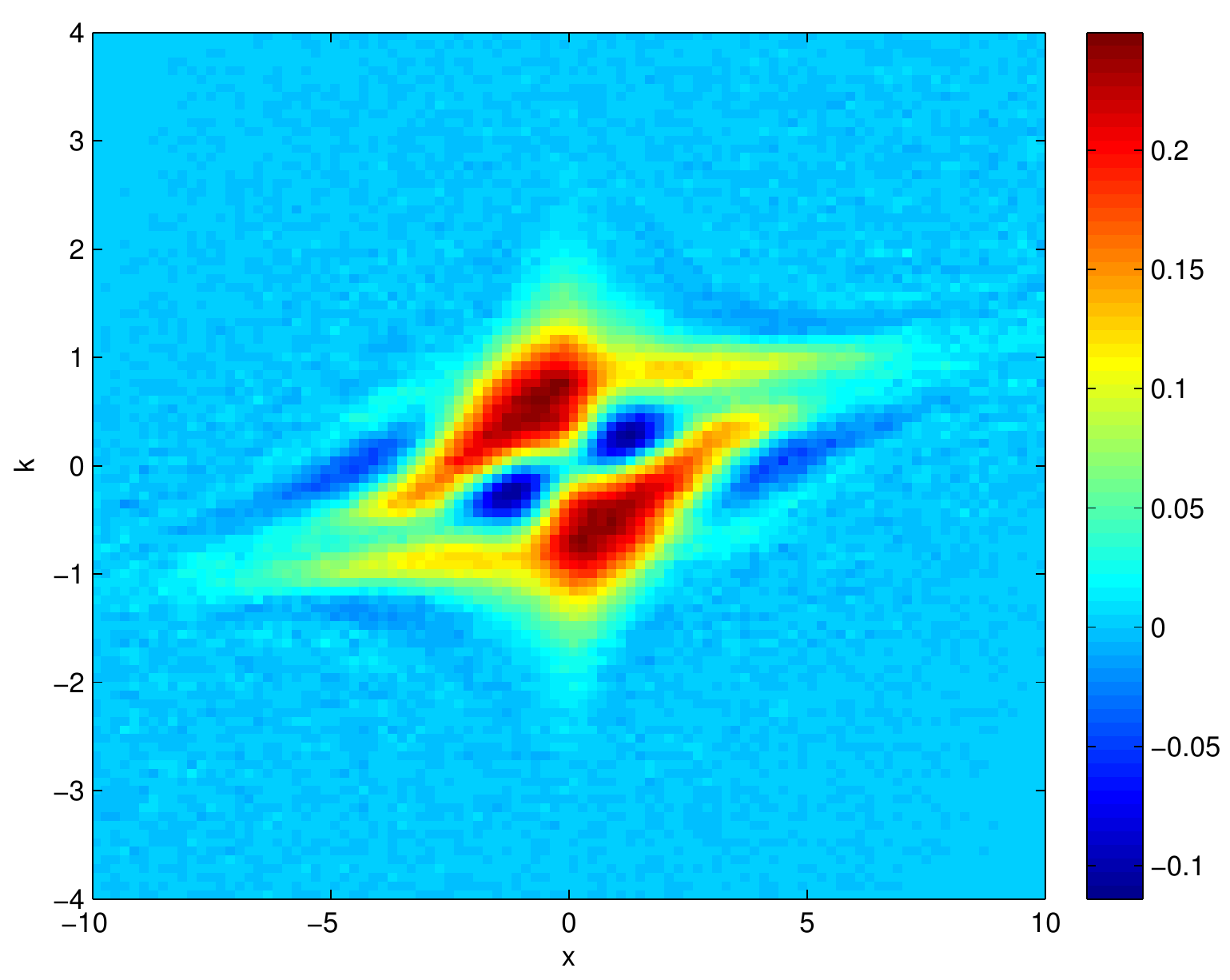}}}
\subfigure[$t=10$.]{\includegraphics[width=0.49\textwidth,height=0.27\textwidth]{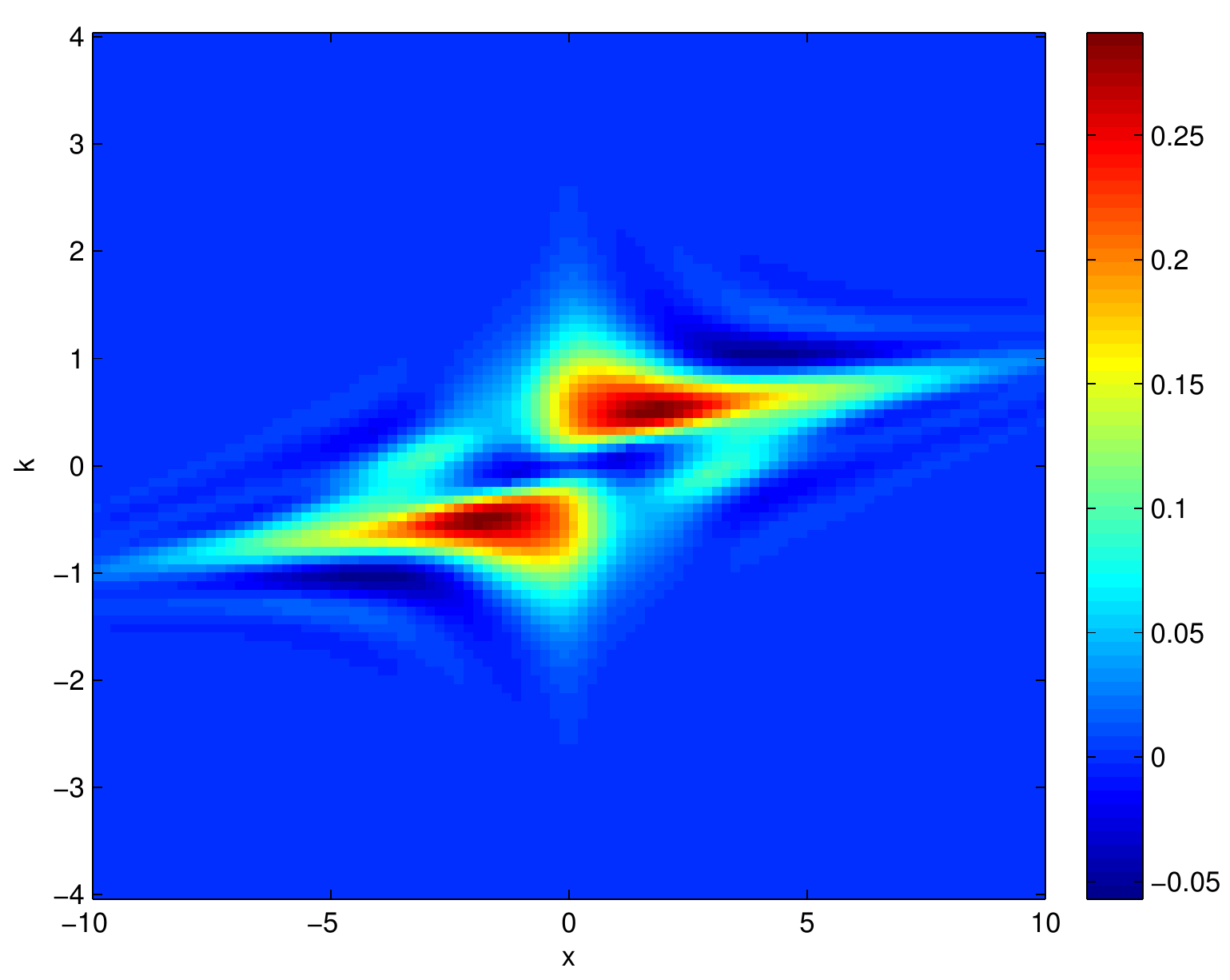}{\includegraphics[width=0.49\textwidth,height=0.27\textwidth]{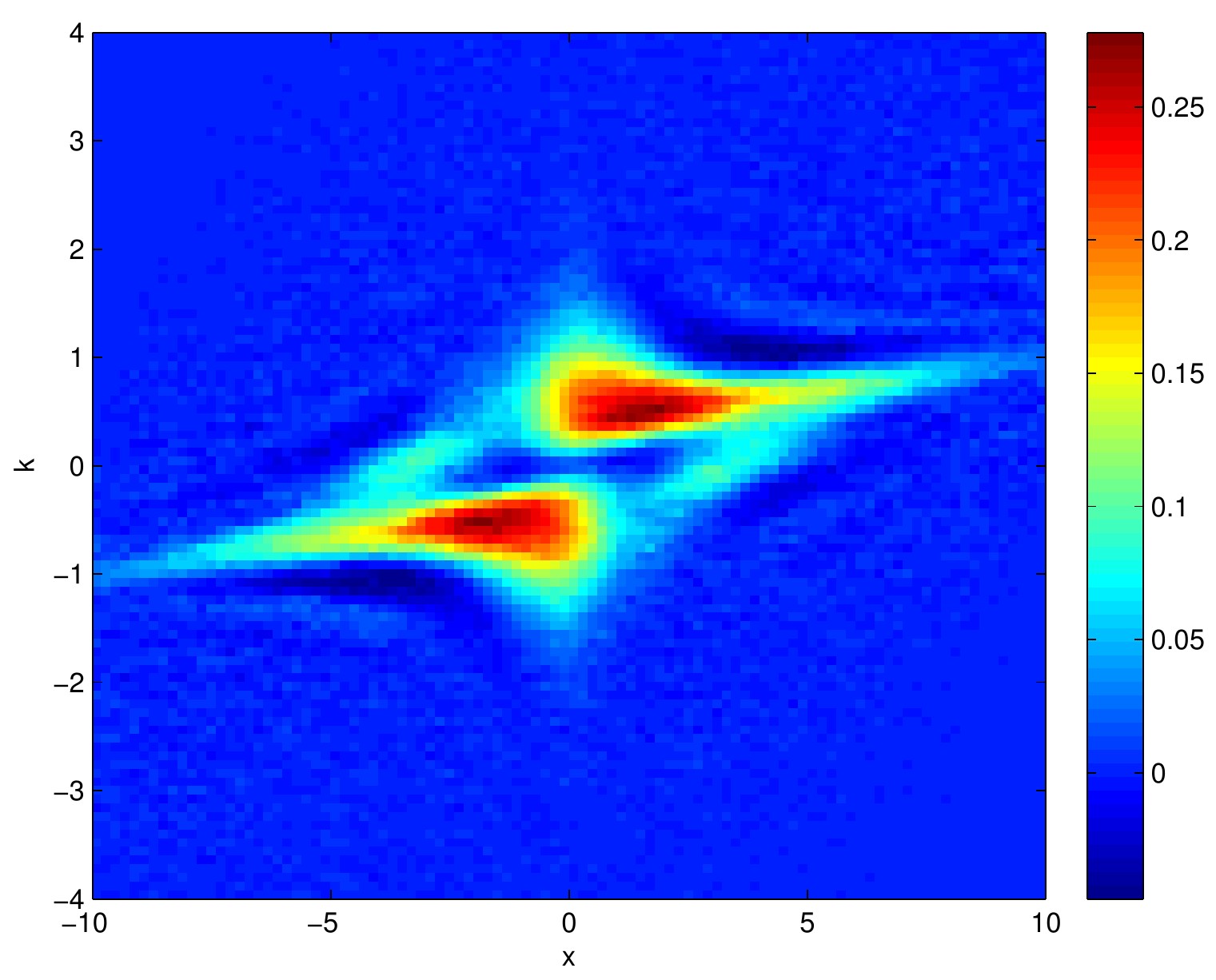}}}
\caption{\small The Helium-like system: Numerical reduced Wigner functions at $t=2.5,5,7.5,10$.
The left column displays the reference solution by ASM,
while the right column shows the numerical solution by WBRW with the
constant auxiliary function $\gamma_0 = 2$, $\Delta t=0.25$ and $T_A=0.5$.}
\label{fig:ex3-wf}
\end{figure}

\subsection{A Helium-like system}

As a typical two-body system, the two-body
Helium-like system has been considered in testing deterministic
Wigner solvers in 4D phase space
\cite{XiongChenShao2015}. Here we utilize again a Helium-like system
in which the electron-nucleus and electron-electron interaction is
given by
\begin{equation}
V(x_1, x_2) = -\frac{2\me^{-\kappa|x_1-x_A|}}{2\kappa}-\frac{2\me^{-\kappa|x_2-x_A|}}{2\kappa}+\frac{\me^{-\kappa|x_1-x_2|}}{2\kappa},
\end{equation}
where the parameter $\kappa$ expresses the screening strength,
$x_A$ denotes the position of the nucleus, and
$x_i (i=1,2)$ is the position of the $i$-th electron.
In fact, $\me^{-\kappa|x_1-x_2|}/2\kappa$ is the Green's function of the 1D screened Poisson equation. The Wigner kernel of the electron-nucleus interaction reads
\begin{equation}
V_w(x_i, k_i) = \frac{2}{\hbar \pi} \cdot \frac{\sin(2(x_i-x_A))}{4k_i^2+\kappa^2}, \;\;\; i=1,2,
\end{equation}
and that of the electron-electron interaction
\begin{equation}
V_w(x_1, x_2, k_1, k_2) = \frac{4}{\hbar \pi}  \cdot \frac{\sin(2k_1 x_1+2k_2 x_2)}{|k_1-k_2|^2+\kappa^2} \cdot \delta(2k_1+2k_2).
\end{equation}
Therefore we can still use a simple rejection method to draw samples from the target distribution $V_w^+(x, k)/\xi(x)$.
Here we use the atomic unit, set $x_A=0$ and $\kappa=0.5$
and adopt the same initial data as used in \cite{XiongChenShao2015}.
The computational domain $\mathcal{X} \times \mathcal{K} = [-10, 10]^2\times [-4,  4]^2$ is divided into $100^4$ cells. The reference solution is obtained by ASM on a uniform grid mesh with $\Delta t=0.05$ and $\Delta x_1 = \Delta x_2 = 0.2$,
while the $\mathcal{K}$-domain is divided into 8 cells and each cell contains 16 collocation points, and the $\mathcal{Y}$-domain is $[-22.5, 22.5]^2$.

\begin{figure}
\vspace{-2cm}
\subfigure[$t=2.5$.]{\includegraphics[width=0.49\textwidth,height=0.35\textwidth]{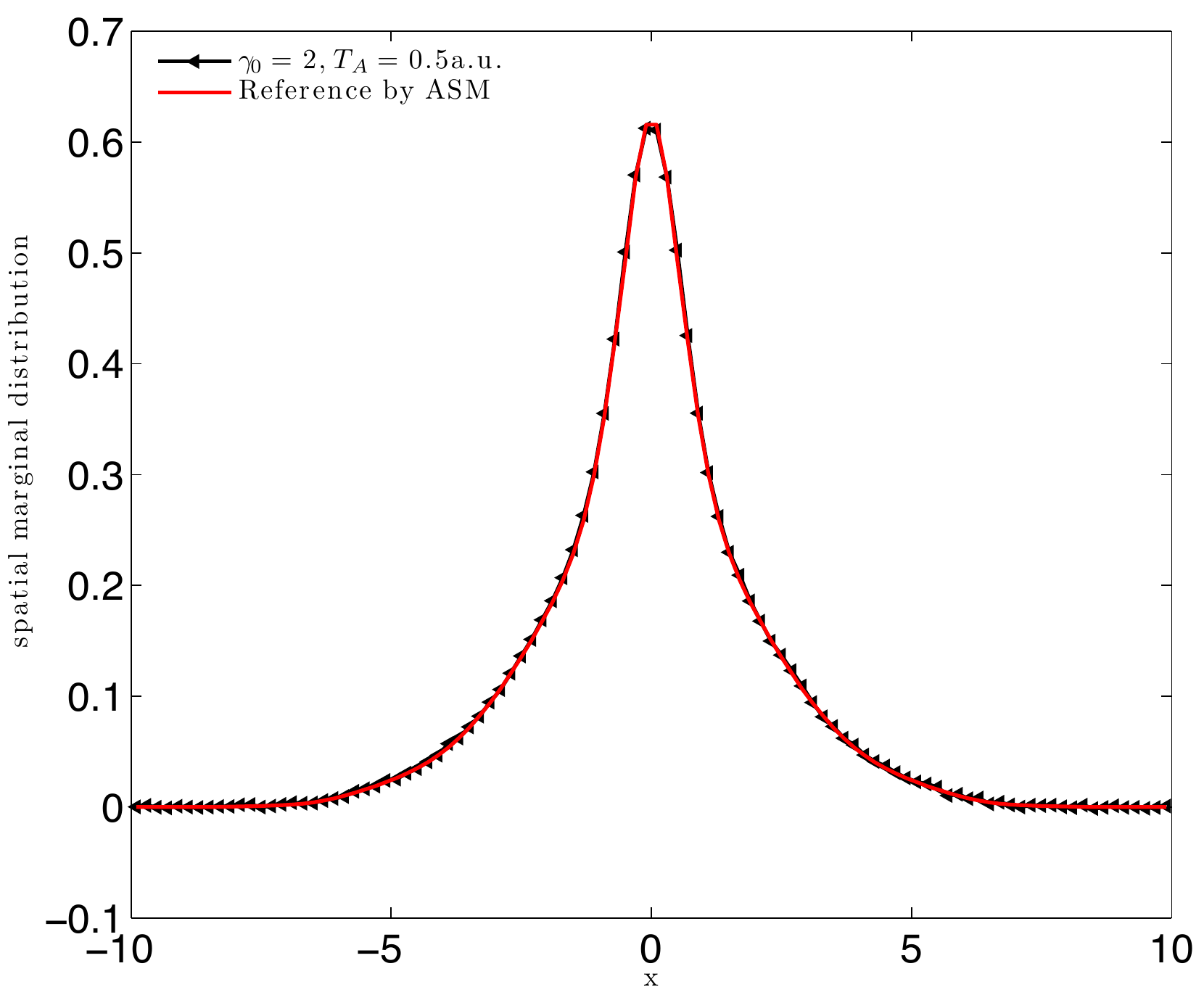}
\includegraphics[width=0.49\textwidth,height=0.35\textwidth]{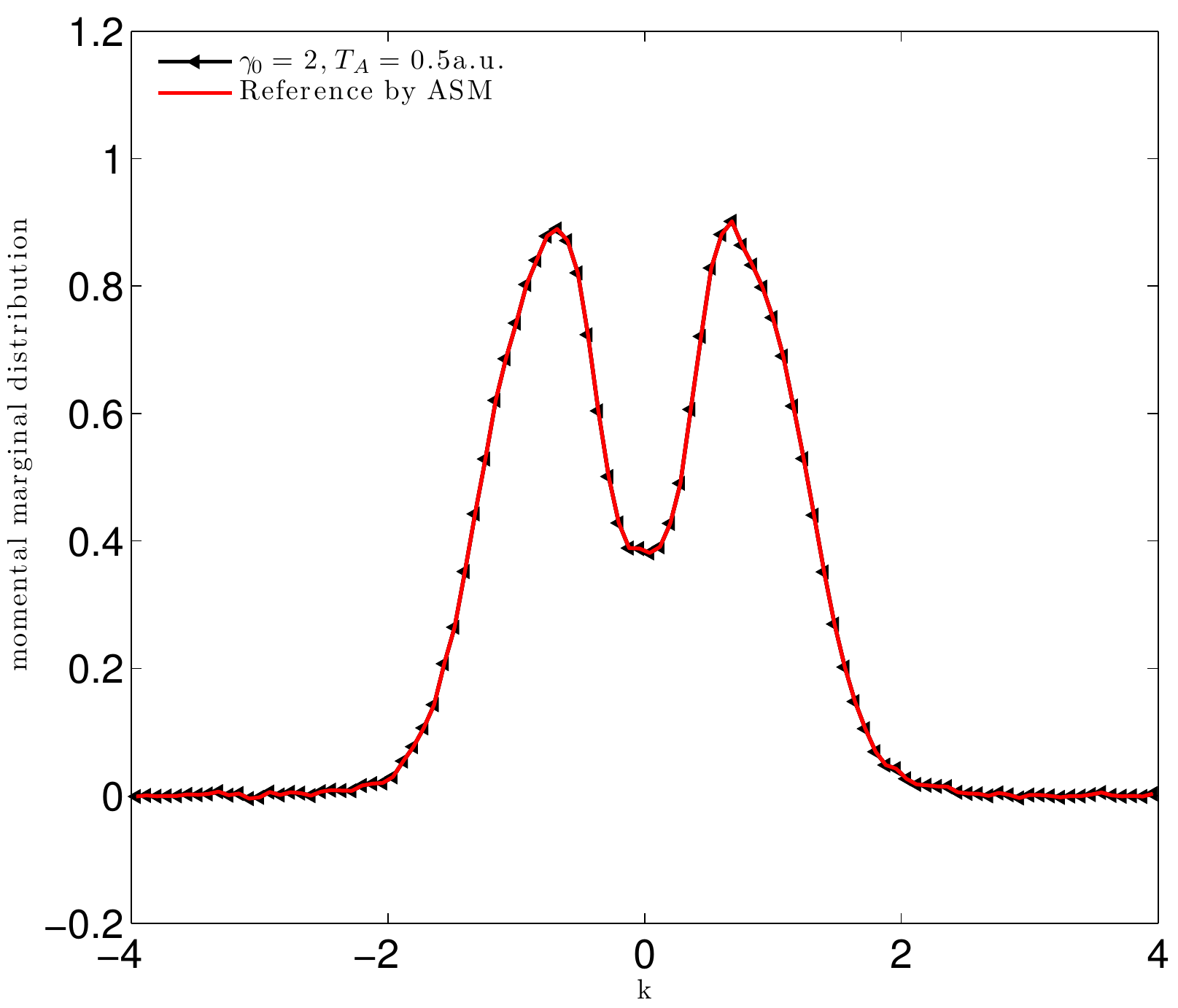}}\\
\subfigure[$t=5$.]{\includegraphics[width=0.49\textwidth,height=0.35\textwidth]{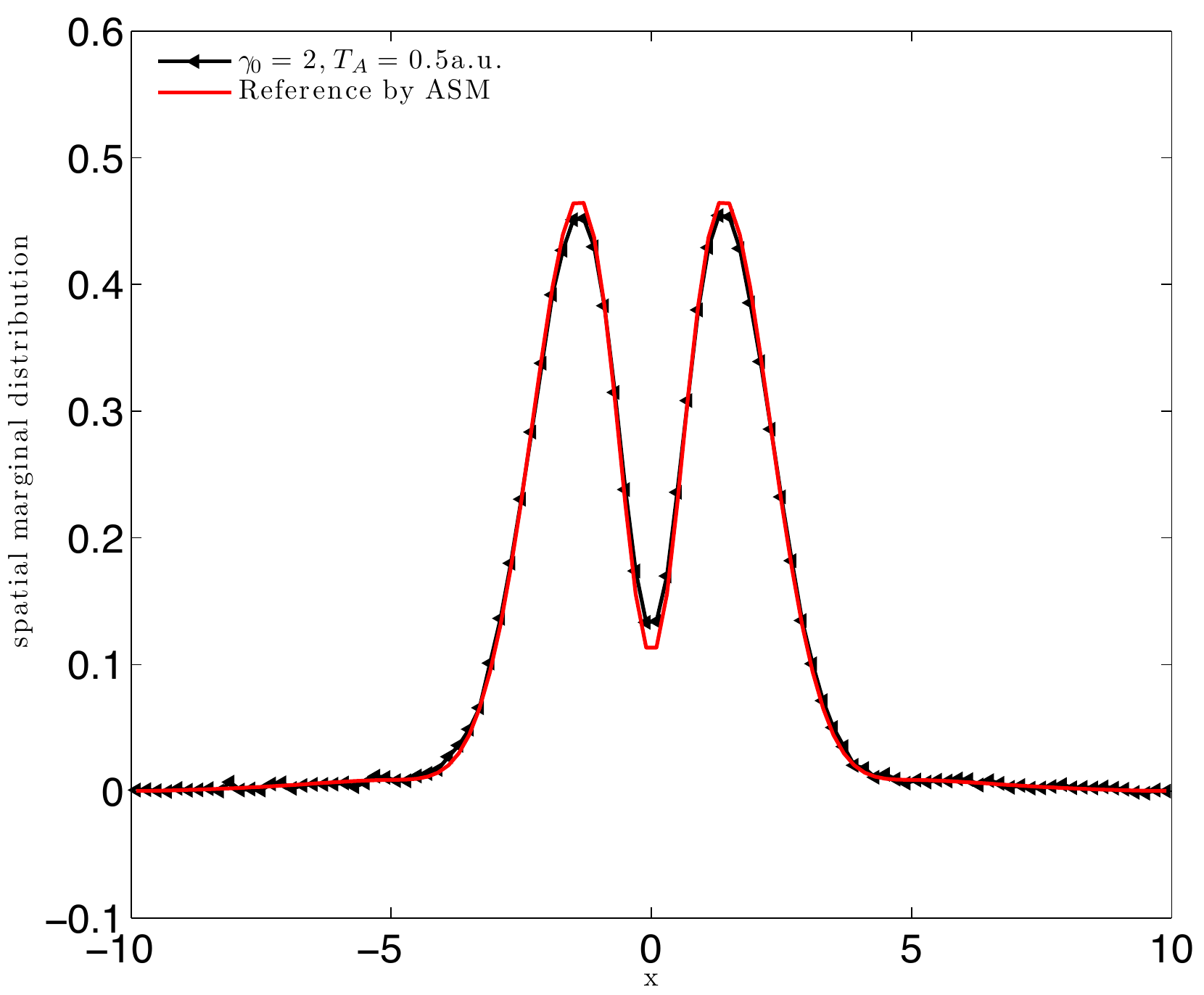}
\includegraphics[width=0.49\textwidth,height=0.35\textwidth]{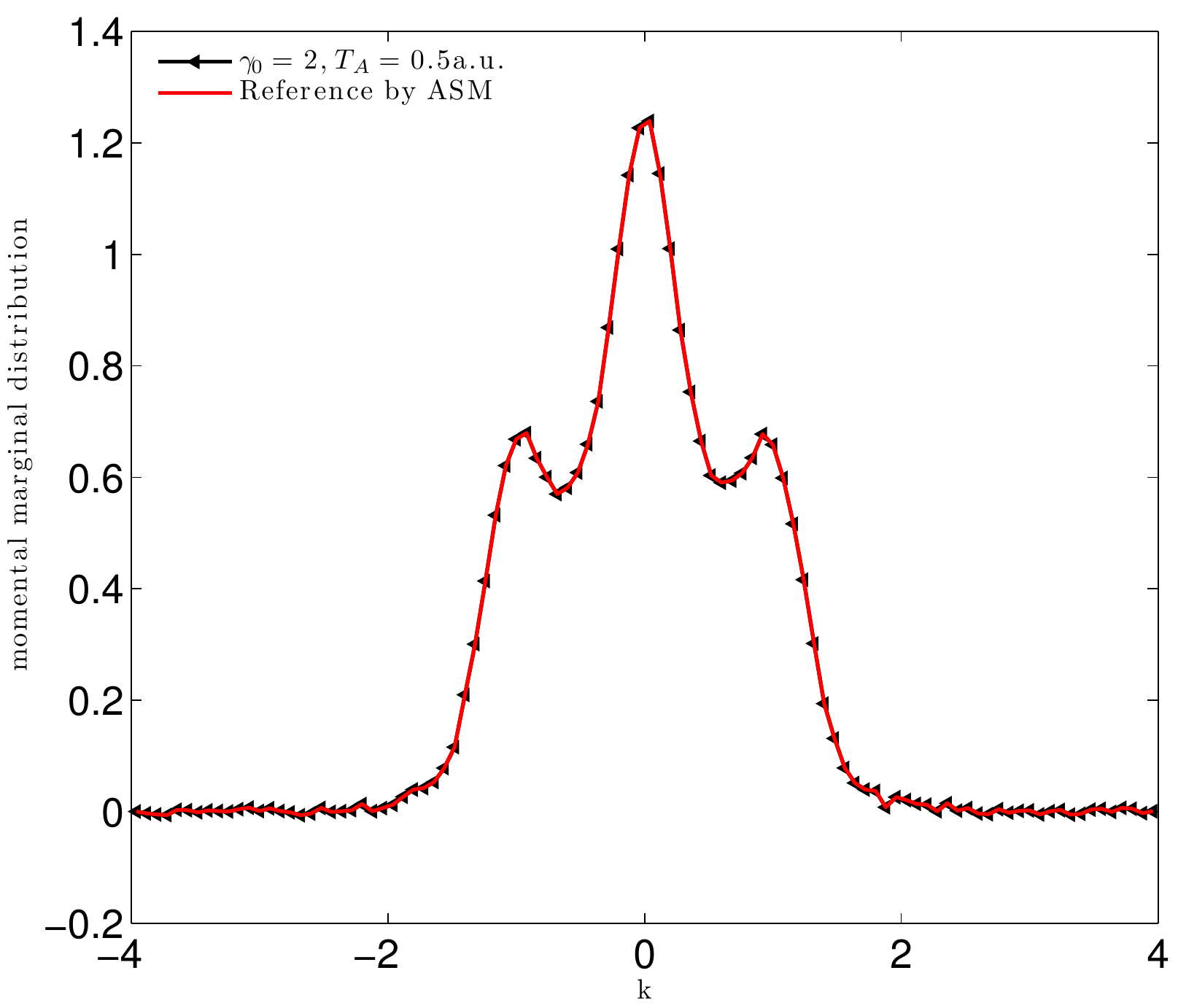}}
\subfigure[$t=7.5$.]{\includegraphics[width=0.49\textwidth,height=0.35\textwidth]{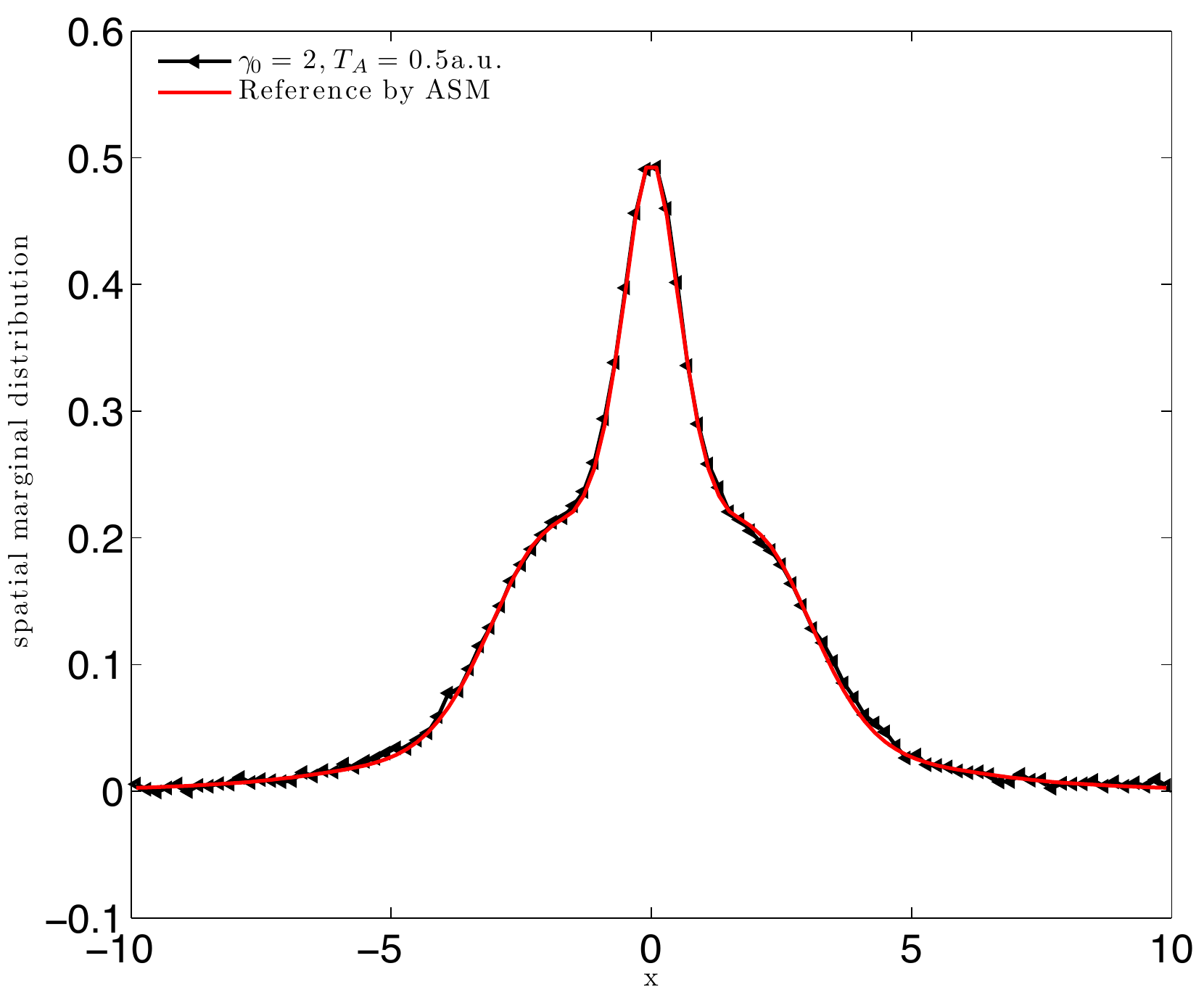}
\includegraphics[width=0.49\textwidth,height=0.35\textwidth]{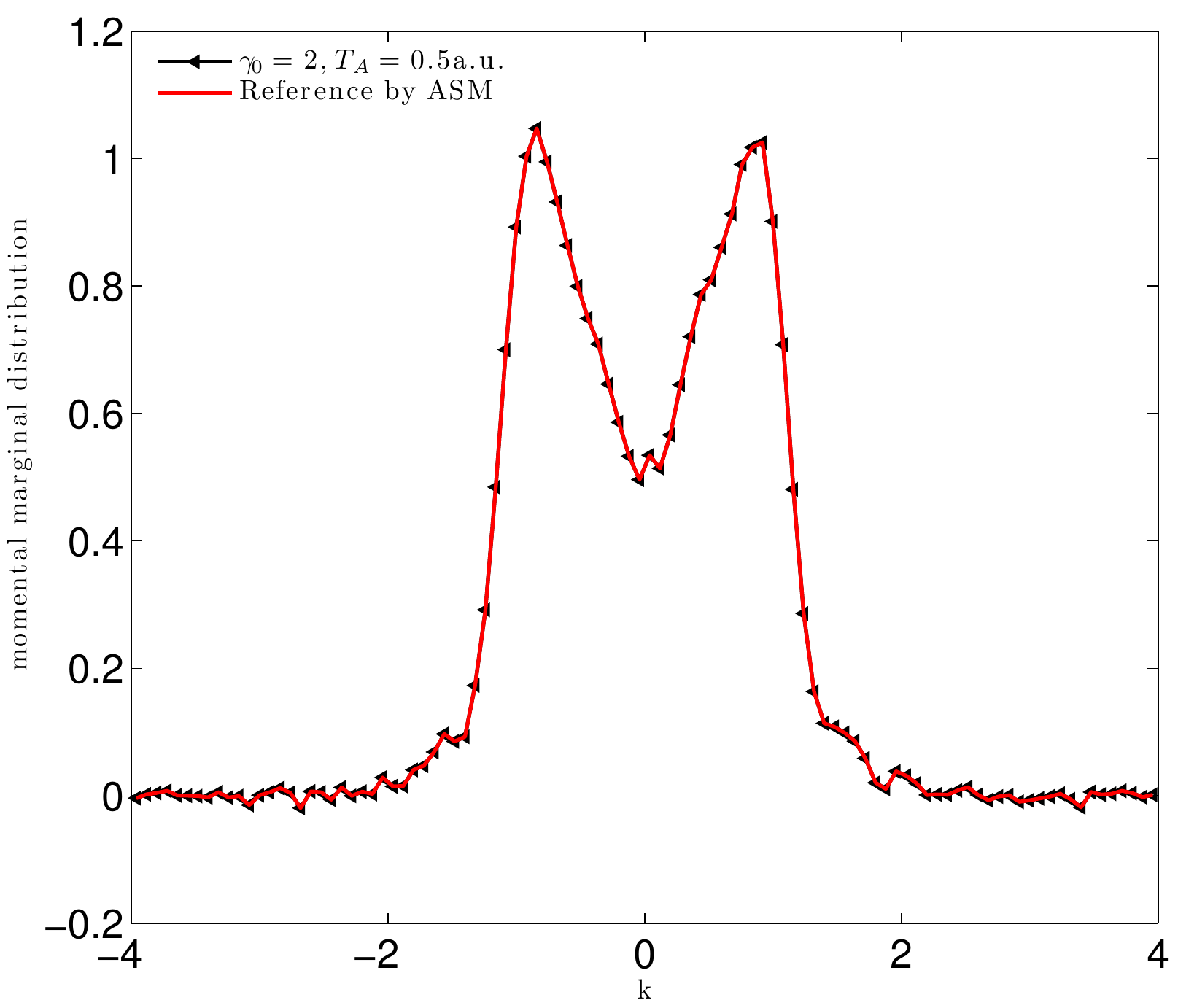}}\\
\subfigure[$t=10$.]{\includegraphics[width=0.49\textwidth,height=0.35\textwidth]{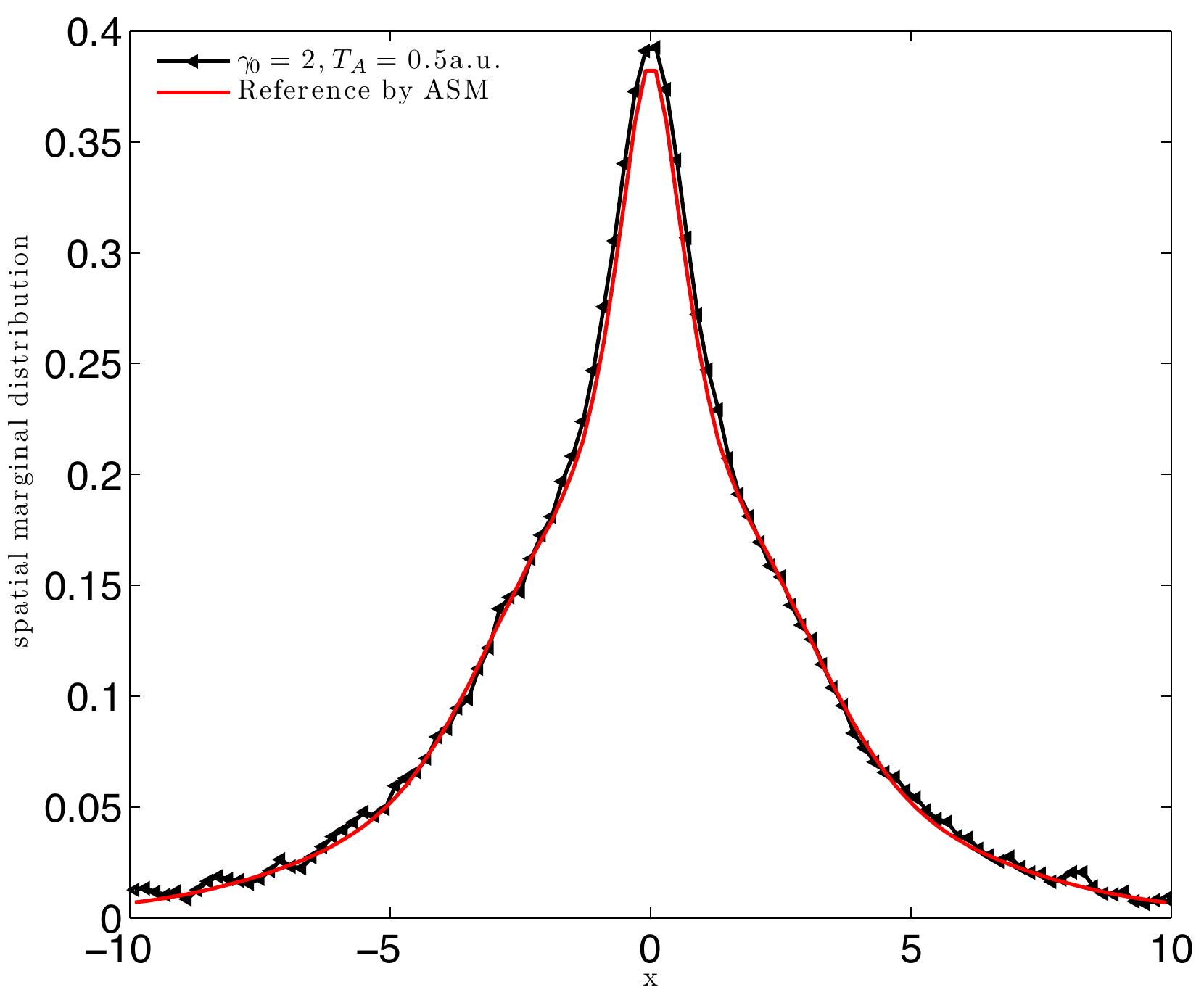}
\includegraphics[width=0.49\textwidth,height=0.35\textwidth]{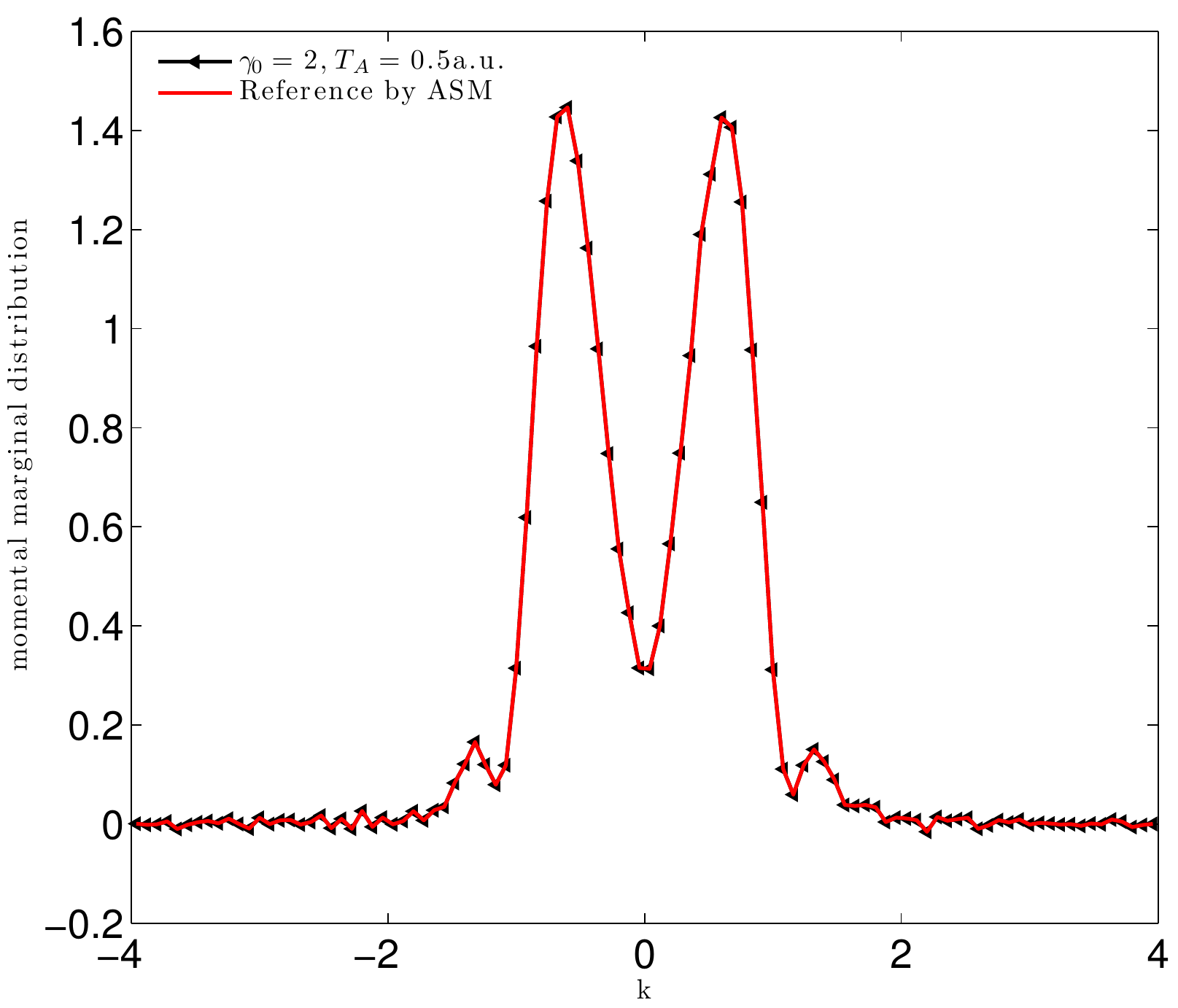}}
\vspace{-0.2cm}
\caption{\small The Helium-like system: Spatial (left column) and momental (right column) marginal probability distributions at $t=2.5,5,7.5,10$.}
\label{fig:ex3-sm}
\end{figure}

To monitor the accuracy,
we record the relative errors of the reduced single-body Wigner function as given in \cite{XiongChenShao2015}, and of corresponding marginal probability distributions. Fig.~\ref{fig:ex3-error}
shows the history of those relative errors. We can see there that, although the reduced Wigner function is comparatively less accurate, it can still yield a more accurate estimation of macroscopically measurable quantities, such as the spatial and momental marginal probability distributions. This also explains why we see more noise in Fig.~\ref{fig:ex3-wf} for the reduced Wigner function than in Fig.~\ref{fig:ex3-sm} for the marginal distributions. The possible reason may lie on the fact that if we wish to be able to estimate a function with the same accuracy as a function in low dimensions, then we need the size of samples to grow exponentially as well.
However, it can be readily observed in Figs.~\ref{fig:ex3-wf} and \ref{fig:ex3-sm} that the main features captured by WBRW are almost identical to those by ASM.

Finally, we would like to mention that the growth of particle number is closely related to the number of cells (dimensionality of feature space). In this example, we use a $100^4=10^8$ uniformly distributed cells for the resampling and set the initial particle number to be about $1.5\times 10^7$ with the total weighted summation being $1\times 10^7$.
It is shown in Fig.~\ref{fig:ex3-pn} that the particle number increases soon to $3\times 10^8$, which is comparable to the cell number, and then approaches a stable value around $3.1\times 10^8$. So if we refine those cells for the resampling, then the particle number will increase to a higher level. Actually, for higher-dimensional problems like $d\geq 3$, the number of cells is much higher than that of samples and such a simple cell based resampling strategy cannot achieve an efficient annihilation.
Hence we have to resort to other advanced techniques to control the sample size in higher-dimensional phase space.

\section{Conclusion and discussion}
\label{sec:con}

This paper is devoted to the mathematical foundation of the branching random walk algorithm for the many-body Wigner quantum dynamics. Although several concepts, such as the signed particle, the adjoint equation and the annihilation procedure, have already been mentioned in previous work, unfortunately related mathematical results are somewhat fragmented or lack of systemic elaboration, and the crucial issues, such as the annihilation of particles and the computational complexity, fall outside the scope of any current available theory. Thus, our original motivation is to provide a framework from the viewpoint of computational mathematics within which all these problems can be fully addressed, 
and interested readers may get a complete view of the Wigner branching random walk algorithm accompanied with both derivation and implementation details in a single reference. Only by this way can we analyze its accuracy, point out the numerical challenge and make further improvements. In fact, we have shown that the signed particle is naturally introduced according to the principle of importance sampling, the motion of particles is described by a probabilistic model, and the annihilation is nothing but the resampling from the instrumental distribution.


Although the theoretical part of this work is closely related to that shown recently in \cite{Wagner2016}, we adopt a different approach to interpreting the entire story. Actually, both approaches succeed in validating the basis of the spWMC, namely, Eq.~\eqref{eq:inner_product_probabilisitic}. The reason we prefer to the branching random walk model, a mixture of the branching process and the random walk, is that the theory of branching process not only provides a natural interpretation of growth of particles, but also allows us to calculate the particle growth rate exactly and discuss the conservation property. These results are extremely important in real simulations since it gives us a reasonable criterion to control the computational complexity and allocate computational resources efficiently.

We must admit that the numerical challenges 
in higher dimensional phase space are very potent, though the numerical accuracy in a 4D Helium-like system has also been validated in this work. 
The often-used simple cell based resampling technique cannot work even for 6D problems. Therefore it is urgent for us to seek an efficient way to reduce the sample size, and some advanced statistical density estimation methods might be taken into account.

\section*{Acknowledgement}
This research was supported by grants from the National Natural Science Foundation of China (Nos.~11471025, 91330110, 11421101).
The authors are grateful to the useful discussions with Zhenzhu Chen, Paul Ellinghaus, Mihail Nedjalkov and Jean Michel Sellier on the signed particle Monte Carlo method for the $y$-truncated Wigner equation.


\end{document}